\numberwithin{equation}{section}
\numberwithin{figure}{section}
\theoremstyle{plain}
\newtheorem{thm}{\protect\theoremname}[section]
  \theoremstyle{remark}
  \newtheorem{rem}[thm]{\protect\remarkname}
  \theoremstyle{plain}
  \newtheorem{lem}[thm]{\protect\lemmaname}
  \theoremstyle{plain}
  \newtheorem{prop}[thm]{\protect\propositionname}
  \theoremstyle{definition}
  \newtheorem{defn}[thm]{\protect\definitionname}
  \theoremstyle{definition}
  \newtheorem{example}[thm]{\protect\examplename}
\def\makebbb#1{
    \expandafter\gdef\csname#1\endcsname{
        \ensuremath{\Bbb{#1}}}
}\makebbb{R}\makebbb{N}\makebbb{Z}\makebbb{C}\makebbb{H}\makebbb{E}\makebbb{H}\makebbb{P}\makebbb{B}\makebbb{Q}\makebbb{E}
  \providecommand{\definitionname}{Definition}
  \providecommand{\examplename}{Example}
  \providecommand{\lemmaname}{Lemma}
  \providecommand{\propositionname}{Proposition}
  \providecommand{\remarkname}{Remark}
\providecommand{\theoremname}{Theorem}
\begin{document}

\title[Propagation of Chaos for a class of singular first order models]{Propagation of chaos for a class of first order models with singular
mean field interactions}

\author{Robert J. Berman, Magnus Önnheim}

\email{robertb@chalmers.se, onnheimm@chalmers.se}

\address{Department of Mathematical Sciences, Chalmers University of Technology
and University of Gothenburg, 412 96 Göteborg, Sweden}
\begin{abstract}
Dynamical systems of $N$ particles in {\normalsize{}$\R^{D}$} interacting
by a singular pair potential of mean field type are considered. The
systems are assumed to be of gradient type and the existence of a
macroscopic limit in the many particle limit is established for a
large class of singular interaction potentials in the stochastic as
well as the deterministic settings. The main assumption on the potentials
is an appropriate notion of quasi-convexity. When $D=1$ the convergence
result is sharp when applied to strongly singular repulsive interactions
and for a general dimension $D$ the result applies to attractive
interactions with Lipschitz singular interaction potentials, leading
to stochastic particle solutions to the corresponding macroscopic
aggregation equations. The proof uses the theory of gradient flows
in Wasserstein spaces of Ambrosio-Gigli-Savaree.
\end{abstract}

\maketitle

\section{Introduction}

Let $F$ be an odd map $\R^{D}\rightarrow\R^{D}$ and consider the
following system of $N$ stochastic differential equations (SDEs)
on $\R^{D}$ (Ito diffusions):

\begin{equation}
dx_{i}(t)=\frac{1}{N-1}\sum_{j\neq i}F(x_{i}-x_{j})dt+\sqrt{\frac{2}{\beta_{N}}}dB_{i}(t),\,\,i=1,2,...,N\label{eq:system intro}
\end{equation}
for a given parameter $\beta_{N}\in]0,\infty],$ where $B_{i}$ denotes
$N$ independent Brownian motions on $\R^{D}$ and where the sum ranges
over the $N-1$ indices where $j\neq i.$ We also allow the deterministic
case $\beta_{N}=\infty$ where the system above is an ordinary differential
equation. When $F$ is Lipchitz continuous these systems have strong
solutions, which are unique, given appropriate initial conditions
and determine the corresponding\emph{ empirical measures} 
\begin{equation}
\frac{1}{N}\sum_{i=1}^{N}\delta_{x_{i}(t)}\label{eq:empirical measure}
\end{equation}
taking values in the space $\mathcal{P}(\R^{D})$ of probability measures
on $\R^{D}.$ For general, possibly singular $F,$ such systems frequantly
arise as ``first order mean field models'' in statistical mechanics
problems in mathematical physics, biology, numerical Monte-Carlo simulations
and various other fields, where the $x_{i}$s represent the positions
of $N$ pair interacting particles (or individual/agents) on $\R^{D}$
with $F$ playing the role of the interaction force (see for example,
\cite{g=0000E4} and references therein). A classical problem is to
show that, under appropriate assumptions on $F$ and the initial data,
a\emph{ }deterministic macroscopic evolution emerges from the microscopic
dynamics in the ``many particle limit'' where $N\rightarrow\infty,$
assuming that $\beta_{N}$ has a limit: 
\[
\lim_{N\rightarrow\infty}\beta_{N}=\beta\in]0,\infty]
\]
This problem has been studied extenstively in three different settings,
ranging from the purely deterministic to the completely stochastic:
\begin{enumerate}
\item The evolution is deterministic (i.e.$\beta_{N}=\infty)$ and the initial
positions $x_{i}(t)$ are also deteministic. One then assumes that
the corresponding empirical measures of $N$ particles at $t=0$ have
a definite limit $\mu_{0},$ as $N\rightarrow\infty,$ i.e they converge
to a probability measure $\mu_{0}$ on $\R^{D},$ in a suitable topology. 
\item The evolution is deterministic, but the initial positions $x_{i}(0)$
are taken to be independent random variables with identical distribution
$\mu_{0}$
\item The noise term is present (i.e. $\beta_{N}\neq\infty)$ and the initial
positions $x_{i}(0)$ are taken to be independent random variables
with identical distribution $\mu_{0}$ 
\end{enumerate}
In the first, purely deterministic, case, the problem is to show that
there exists a curve $\mu_{t}$ in $\mathcal{P}(\R^{D})$ emanating
from $\mu_{0}$ such that, for any positive time $t,$ 
\[
\frac{1}{N}\sum_{i=1}^{N}\delta_{x_{i}(t)}\rightarrow\mu_{t}
\]
 in a given topology on $\mathcal{P}(\R^{D})$ (then the corresponding
deterministic particle system is usually said to have a \emph{mean
field limit}). In the second case the empirical measures of $N-$particle
defines, at any given time $t,$ a random measure and the problem
is then to show that the convergence above holds in law for any $t>0$
(then\emph{ propagation of chaos }is said to hold; a terminology introduced
by Kac \cite{kac}, inspired by the work of Boltzmann). For simplicity
we will simply say that the system\emph{ \ref{eq:system intro} has
a macroscopic limit $\mu_{t}$} if the convergence above holds in
all three situations (note that convergence in setting $1$ implies
convergence in the Setting $2$. 

As is well-known the setting above admits a pure PDE formulation,
not involving any stochastic calculus and it is this analytic point
of view that we will adopt here. Indeed, the laws of the SDEs \ref{eq:system intro}
define a curve $\mu_{t}^{(N)}$ of probability measures on $(\R^{D})^{N},$
which can be directly defined as the solution to a linear PDE on $(\R^{D})^{N};$
the forward Kolmgorov equation (also called the linear Fokker-Planck
equation). Accordingly, the empirical measure at time $t$ (formula
\ref{eq:empirical measure}) can be viewed as the random measure $\frac{1}{N}\sum_{i=1}^{N}\delta_{x_{i}}$
on $((\R^{D})^{N},\mu_{t}^{(N)}).$ 

For a Lipchitz continuous force term $F$ the existence of a macroscopic
limit goes back to the seminal work of McKean \cite{mc,mc2} and there
are by now various different approaches and developments (see for
example \cite{g=0000E4,sn,mmw} and for a recent review on Setting
$1$ and $2$ above see \cite{ja}). Moreover, the macroscopic limit
$\mu_{t}$ may then be characterized as the unique weak solution of
the following non-local drift-diffusion equation on $\R^{D}:$ 
\begin{equation}
\frac{d\mu_{t}}{dt}=\frac{1}{\beta}\Delta\mu_{t}-\nabla\cdot(\mu_{t}b[\mu_{t}]),\,\,\,\,b[\mu_{t}](x)=\int_{\R^{D}}F(x-y)\mu(y)\label{eq:drift-d equa intro}
\end{equation}
(often called the \emph{McKean-Vlasov equation} in the literature).
However, in many naturally occuring models the interaction force $F$
is not locally Lipchitz and even unbounded around $x=0$ and the main
purpose of the present paper is to establish the existence of a macroscopic
limit for a wide class of such singular $F,$ including strongly singular
repulsive $F$ when $D=1$ and locally bounded (but not necessarily
continuous) attractive $F,$ for any dimension $D.$

\subsection{Statement of the main results}

We will be concerned with the case when the interaction force $F$
can be realized as a gradient: 
\[
F(x)=-(\nabla w)(x)
\]
 (in the almost everwhere sense) for an even function $w.$ In particular,
this is always the case in dimension $D=1.$ Then the SDEs \ref{eq:system intro}
are often called the (overdamped)\emph{ Langevin equation. }Following
\cite{h-j1,ja} we will say that the interaction is \emph{weakly singular}
if $w(x-y)$ is continuous and\emph{ strongly singular} if its absolute
value blows-up along the diagonal. In terms of the singularity along
the diagonal our results apply, when $D=1,$ to any singular repulsive
$w$ in $L_{loc}^{1}$ (which is hence strongly singular) and when
$D\geq1$ to attractive locally Lipchitz continuous $w.$ 

We will denote by $\mathcal{P}_{2}(\R^{D})$ the space of all probability
measures on $\R^{D}$ with finite second moments, endowed with its
standard topology definied by weak convergence together with convergence
of the second moments. In other words, this is the topology on $\mathcal{P}_{2}(\R^{D})$
determined by the Wasserstein $L^{2}-$metric on $\mathcal{P}_{2}(\R^{D})$.
The interaction potential $w$ induces, in the usual way, an energy
type functional on the space $\mathcal{P}_{2}(\R^{D})$ defined by
\begin{equation}
E(\mu)=\frac{1}{2}\int_{\R^{D}\times\R^{D}}w(x-y)\mu(x)\otimes\mu(y),\label{eq:intro interaction energy}
\end{equation}
 assuming that $w\in L_{loc}^{1}.$ The corresponding free energy
functional $F_{\beta}(\mu)$ is then obtained by adding the scaled
Boltzmann entropy $H(\mu)/\beta$ to $E(\mu).$ More generally, we
will consider the case when a confining potential $V$ is included
in the system \ref{eq:system intro} (thus breaking the translational
symmetry), i.e. 
\[
F(x-y)=-(\nabla w)(x-y)-(\nabla V)(x)-(\nabla V)(y),
\]
 which amounts to replacing $w(x-y)$ with $w(x-y)+V(x)+V(y).$ We
will say that a lsc function $\psi(x)$ on $\R^{D}$ is\emph{ quasi-convex}
if it is\emph{ $\lambda-$convex,} i.e. its distributional Hessian
is bounded form below by $\lambda I,$ for some (possibly negative)
number $\lambda$ and if, in the negative case, $|\lambda|$ can be
taken arbitrarily small as $|x|\rightarrow\infty.$ For example, any
polynomial on $\R^{D}$ with leading term of the form $Cx^{2m}$ for
$C>0$ is quasi-convex in our sense, as is any perturbation of $\psi(x)$
by a compactly supported smooth function (see Section \ref{sec:Singular-pair-interactions}).
\begin{thm}
\label{thm:general pair intro}Consider the case when $D=1$ and assume
that $w(x)$ and $V(x)$ are quasi-convex on $]0,\infty[$ and $\R,$
respectively and that $w\in L_{loc}^{1}(\R).$ Then, given any limiting
initial measure $\mu_{0}$ a macroscopic evolution $\mu_{t}$ emerges.
as $N\rightarrow\infty,$ in all three settings considered above.
\end{thm}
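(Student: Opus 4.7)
My plan is to realize the system (\ref{eq:system intro}) as a Wasserstein gradient flow on $\mathcal{P}_2(\R)$ of the free energy
\[
F_\beta(\mu) = E(\mu) + \int V\,d\mu + \frac{1}{\beta}H(\mu),
\]
and then to invoke the Ambrosio-Gigli-Savar\'e (AGS) stability theorem for gradient flows driven by geodesically $\lambda$-convex functionals. The central analytic task reduces to showing that $F_\beta$ is $\lambda$-convex along $W_2$-geodesics; once that is in place, AGS furnishes a sharp contraction estimate that controls $W_2(\mu_t^{(N)},\mu_t)$ in terms of $W_2(\mu_0^{(N)},\mu_0)$, so that convergence at $t=0$ automatically propagates to every $t>0$.

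The verification of $\lambda$-convexity exploits the feature special to $D=1$: the isometry $\mu\mapsto F_\mu^{-1}$ from $(\mathcal{P}_2(\R),W_2)$ onto the closed convex cone $K\subset L^2(0,1)$ of non-decreasing functions, under which Wasserstein geodesics are affine segments in $L^2$. Pulling back,
\[
E(\mu) = \tfrac{1}{2}\int_0^1\!\!\int_0^1 w\bigl(F_\mu^{-1}(s)-F_\mu^{-1}(t)\bigr)\,ds\,dt,
\]
and since $F_\mu^{-1}$ is non-decreasing only the values $w(u)$ for $u\geq 0$ enter this integral. Quasi-convexity of $w$ on $]0,\infty[$ therefore transfers directly to $\lambda$-convexity of $E$ along $W_2$-geodesics; the confining term $\mu\mapsto\int V\,d\mu$ is $\lambda$-convex because $V$ is, and the Boltzmann entropy $H$ is geodesically convex by McCann's theorem. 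Summing yields the required $\lambda$-convexity of $F_\beta$, with a $\lambda$ bounded below uniformly in $N$.

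For the passage to the limit I would identify the $N$-particle dynamics with the AGS gradient flow of the corresponding $N$-particle free energy on the submanifold of empirical measures; under the quantile isometry this is simply the $L^2(0,1)$-gradient flow on the sub-cone of step functions with $N$ equal jumps. The hypotheses $\beta_N\to\beta$ and $\mu_0^{(N)}\to\mu_0$ in $\mathcal{P}_2(\R)$ then furnish $L^2$-convergence of initial quantile functions together with $\Gamma$-convergence of the discrete functionals to $F_\beta$, and the AGS stability estimate converts both into convergence of the flows uniformly on compact time intervals. The same argument handles Settings 1--3 without separate work: the noise enters only through the entropy term in the driving functional, which is itself geodesically convex.

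The principal obstacle is the singularity of $w$ at the origin. For strongly singular repulsive $w\in L_{loc}^{1}$ the energy $E$ is finite only off configurations with coinciding particles, so one must argue that the flow preserves strict ordering (equivalently, that the ordered sub-cone of $K$ is invariant), that the $N$-particle flow exists globally despite the singular drift, and that the singular sum $\frac{1}{N-1}\sum_{i<j}w(x_i-x_j)$ stays controlled by $E(\mu_0^{(N)})$ along the flow. Establishing $\Gamma$-convergence of the $N$-particle free energies in $W_2$ rather than merely pointwise, and identifying weak limits of AGS sub-differentials at the singular level, is where the bulk of the analytic work must be done.
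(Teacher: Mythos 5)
Your identification of the one-dimensional convexity mechanism is correct and is in fact equivalent to what the paper does: the quantile isometry $\mu\mapsto F_\mu^{-1}$ from $(\mathcal{P}_2(\R),W_2)$ onto the cone of non-decreasing functions in $L^2(0,1)$ is the same observation, in different clothes, as the key claim in the paper's Proposition \ref{prop:hidden convexity} that optimal maps between symmetric measures on $\R^N$ preserve the ordering of coordinates and hence the fundamental domain $\{x_1<\dots<x_N\}$. Both say that only $w$ restricted to $]0,\infty[$ ever enters, so the $\lambda$-convexity there transfers. That part of your plan is sound.

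The gap is in how you pass to the limit, and in particular in the sentence ``The same argument handles Settings 1--3 without separate work: the noise enters only through the entropy term in the driving functional.'' This is where the proposal breaks down. In Settings 2 and 3 the $N$-particle state is \emph{not} a curve in $\mathcal{P}_2(\R)$ that happens to live on the sub-cone of $N$-step quantile functions. Rather, what evolves is the law $\mu^{(N)}_t$ of the random vector $(x_1(t),\dots,x_N(t))$, which is a probability measure on $\R^N$, equivalently a point of $\mathcal{P}_2(\R^N)^{S_N}$, and which is the Wasserstein gradient flow of the \emph{mean free energy} $\mathcal{F}^{(N)}(\mu_N)=\mathcal{E}^{(N)}(\mu_N)+\beta_N^{-1}N^{-1}H(\mu_N)$ on $\mathcal{P}_2(\R^N)^{S_N}$. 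You cannot add an entropy term to a discrete free energy defined only on the sub-cone of step functions in $L^2(0,1)$: the Boltzmann entropy of any empirical measure $\frac1N\sum\delta_{x_i}$ is $+\infty$, so the functional you would be flowing would be identically $+\infty$ on your ansatz space. Under the quantile isometry, the stochastic $N$-particle flow becomes a flow of \emph{probability measures on} $L^2(0,1)$, i.e. one is forced up to the level of $\mathcal{P}_2(\mathcal{P}_2(\R))$. That is exactly why the paper embeds $\mathcal{P}_2(\R^N)^{S_N}$ isometrically into $W_2(\mathcal{P}_2(\R))$ via $(\delta_N)_*$ and proves the bespoke convergence result Theorem \ref{thm:general}; the AGS stability theorem \cite[Thm 11.2.1]{a-g-s} that you invoke is stated for $\mathcal{P}_2$ of a Hilbert space, and $\mathcal{P}_2(\R)$ is not a Hilbert space, so some replacement is necessary there and one cannot simply quote it.

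There is also a secondary gap. Even restricted to the purely deterministic Setting 1, where your identification with an $L^2(0,1)$-gradient flow is legitimate, you lean on ``$\Gamma$-convergence of the discrete functionals to $F_\beta$'' as if it came for free from $\beta_N\to\beta$ and $\mu_0^{(N)}\to\mu_0$. For a strongly singular repulsive $w$ the full $\Gamma$-convergence of the extended-by-$+\infty$ discrete energies on $\mathcal{P}_2(\R)$ is a genuine theorem (it is essentially \cite[Prop 2.8]{se}, cited by the paper in its Section \ref{sub:Comparison-with-stability}), and the route via that $\Gamma$-convergence together with AGS stability also needs $E(\mu_0)<\infty$, which the statement you are proving does not assume. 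The paper's Theorem \ref{thm:general} sidesteps both issues by requiring only $\Gamma$-convergence of the mean energies $\mathcal{E}^{(N)}$ relative to the subset of Dirac masses in $W_2(\mathcal{P}_2(\R))$, for which the recovery sequence can be taken to be $\mu^{\otimes N}$ and the upper bound is immediate from $\mathcal{E}^{(N)}(\mu^{\otimes N})=E(\mu)$, and by a diagonalisation step (Step 3 in its proof) that handles $\mu_0\notin\{F_\beta<\infty\}$. You flag these as ``where the bulk of the analytic work must be done,'' which is honest, but as written the plan asserts a result (AGS stability with $\Gamma$-convergence directly on $\mathcal{P}_2(\R)$ and with noise) that does not hold in the form you need it.
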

The main novelty of the previous theorem is the establishment of propagation
of chaos for very singular interactions in the completely stochastic
Setting $3$ above, for eaxmple the strongly singular power-laws considered
in Theorem \ref{thm:power sing intro} below. Moreover, in this general
form the result also appears to be new in the deterministic settings.

Before continuing some comments on the definition of the corresponding
probability measures $\mu^{(N)}(t)$ on $\R^{N}$ are in order in
the general non-smooth setting (see Section \ref{sub:The-forward-Kolmogorov}
for more details). Concretely, but indirectly, these may be defined
by regularization: i.e. as the unique weak limit of the probability
measures $\mu_{\epsilon}^{(N)}(t)$ obtained by replacing $w$ (and
similarly $V)$ with any sequence $w_{\epsilon}$ of smooth uniformly
$\lambda-$convex functions on $[0,\infty[$ converging to $w$ as
$\epsilon\rightarrow0,$ in a suitable way (for example, increasing
to $w$). From this point of view the previous theorem can be intepreted
as saying that the limits $\epsilon\rightarrow0$ and $N\rightarrow\infty$
commute. In fact, the same proof reveals that we may as well let $\epsilon$
depend on $N$ as long as $\epsilon_{N}\rightarrow0$ as $N\rightarrow\infty.$
More directly, the curve $\mu^{(N)}(t),$ as well as the macroscopic
limit $\mu(t),$ will be intrinsically defined using the theory of
gradient flows on Wasserstein spaces, following the approach in \cite{a-g-s,a-g-z}.
In particular, the notion of generalized geodesics and generalized
$\lambda-$convexity in Wasserstein spaces, introduced in \cite{a-g-s},
plays a key role in the proof. 

An important feauture of our approach is that the convergence in Theorem
\ref{thm:general pair intro} will be established using direct variational
arguments, realizing the limit $\mu_{t}$ as the unique gradient flow
of the corresponding free energy functional $F_{\beta},$ in the sense
of evolutionary variational inequalities (EVI)\cite{a-g-z}. The corresponding
Wasserstein gradient flow of $F_{\beta}$ has previously been studied
in \cite{Carrillo-f-p,c-d-a-k-s}. This approach bypasses the delicate
issue whether the limit can be uniquely characterized as a weak solution
of the McKean-Vlasov equation \ref{eq:drift-d equa intro} in the
sense that for any $\phi\in C_{c}^{2}(\R^{D})$ 
\begin{equation}
\frac{d}{dt}\int_{\R^{D}}\mu_{t}(x)\phi(x)=\frac{1}{\beta}\int\mu_{t}(x)\Delta\phi(x)+\frac{1}{2}\int\mu_{t}(x)\otimes\mu_{t}(y)F(x-y)\cdot\left(\nabla\phi(x)-\nabla\phi(y)\right),\label{eq:weak eq intro}
\end{equation}
holds in the distributional sense when $t\in]0,\infty[$ with $\mu_{t}\rightarrow\mu_{0}$
weakly, as $t\rightarrow0$ and similarly when a potential term $V$
is added (this weak formulation is indeed equivalent to equation \ref{eq:drift-d equa intro},
when $F$ is continuous, since $F$ is odd). But once $\mu_{t}$ has
been realized as an EVI gradient flow one can invoke the subdifferential
calculus in \cite{a-g-s} (which provides a rigourous framework for
the Otto calculus \cite{ot}) to show that $\mu_{t}$ is indeed a
weak solution of the McKean-Vlasov equation \ref{eq:drift-d equa intro}.
However, it should be stressed that, in general, there is no uniqueness
result for weak solutions of such singular equations.

The following theorem illustrates our general convergence results,
in any dimension, in the case when the interaction force $F$ has
a\emph{ power-law singularity} in $\R^{D}$ in the sense that 
\[
F(x)=\frac{f(x)}{|x|{}^{\alpha}}\sigma(x)
\]
 where $f$ is a bounded $C^{2}-$smooth function and $\sigma(x):=x/|x|.$
The power-law singularity is said to be \emph{repulsive} if $f>0$
and \emph{attractive} if $f<0$ and we will refer to the case when
$f$ is constant as the \emph{model case.} 
\begin{thm}
\label{thm:power sing intro}In any dimension $D$ a macroscopic limit
emerges for the attractive power-laws with $\alpha\leq0.$ When $D=1$
the results also holds for repulsive power-laws with $\alpha\in[0,2[.$
More generally, for any $D$ the results hold when a quasi-convex
potential $V$ is included. 
\end{thm}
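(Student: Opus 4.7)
The plan is to reduce the theorem to the abstract convergence results already proved in the paper, namely Theorem \ref{thm:general pair intro} in the $D=1$ case and its higher-dimensional analog for attractive locally Lipschitz singular interactions (alluded to in the introduction as the principal higher-dimensional convergence result). In each case I must exhibit an even potential $w$ with $F=-\nabla w$ almost everywhere, lying in $L^{1}_{\mathrm{loc}}$ and satisfying the appropriate quasi-convexity hypothesis. Since quasi-convexity is stable under adding a further quasi-convex function, the confining potential $V$ contributes only bookkeeping, and it suffices to deal with the interaction $w$ itself.

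For the $D=1$, repulsive case with $\alpha\in[0,2[$, I would begin with the model case $f\equiv c>0$, where the antiderivative is explicit: $w(x)=c|x|^{1-\alpha}/(\alpha-1)$ for $\alpha\neq 1$ and $w(x)=-c\log|x|$ for $\alpha=1$. All three expressions lie in $L^{1}_{\mathrm{loc}}(\R)$ precisely because $\alpha<2$, and $w''(x)=c\alpha\,x^{-\alpha-1}\geq 0$ on $]0,\infty[$ yields convexity. For the non-model case with $f\in C^{2}$ bounded, a direct computation gives
\[
w''(x)=-f'(x)x^{-\alpha}+\alpha f(x)x^{-\alpha-1},\qquad x>0.
\]
Repulsiveness forces $f\geq c_{0}>0$ on every compact interval, so for $\alpha>0$ the positive singular term dominates as $x\downarrow 0$ and $w''(x)\to+\infty$, while at infinity $w''(x)=O(x^{-\alpha})\to 0$; this is exactly the quasi-convexity on $]0,\infty[$ demanded by Theorem \ref{thm:general pair intro}. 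The boundary case $\alpha=0$ produces a locally bounded force and is already covered by the classical Lipschitz theory.

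The attractive case in arbitrary $D$ with $\alpha\leq 0$ is much softer, because $|x|^{-\alpha}$ vanishes at the origin. In the radial model case $w(x)=c|x|^{1-\alpha}/(1-\alpha)$ with $c>0$ and $1-\alpha\geq 1$. A short computation shows that the Hessian of $w$ has eigenvalues $-c\alpha|x|^{-\alpha-1}$ (radial) and $c|x|^{-\alpha-1}$ (tangential), both non-negative for $\alpha\leq 0$. Hence $w$ is globally convex and locally Lipschitz (Lipschitz for $\alpha=0$, $C^{1}$ for $\alpha<0$). For a bounded $C^{2}$ radial $f$, the additional Hessian contribution is uniformly bounded, preserving quasi-convexity on $\R^{D}$, and the higher-dimensional convergence theorem applies.

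The main obstacle is precisely this quasi-convexity check near the singularity, and it is also what produces the dimensional split. In $D=1$ the repulsive singularity supplies a large positive $w''$ near $0$ that absorbs any bounded perturbation of $f$ and any quasi-convex $V$; in $D\geq 2$ no such compensation is available, because the tangential Hessian eigenvalue of the candidate $w$ is proportional to $-|x|^{-\alpha-1}$ and blows up to $-\infty$ near the origin for $\alpha>0$, ruling out a quasi-convex repulsive potential in that range. The only singular-power regime compatible with quasi-convexity in higher dimensions is therefore the attractive, non-singular range $\alpha\leq 0$ handled above. Once the hypotheses are verified, the convergence itself follows from the mollification $\varepsilon\to 0$ scheme sketched after Theorem \ref{thm:general pair intro}, invoking the stability of Wasserstein EVI gradient flows of $\lambda$-convex functionals from \cite{a-g-s}.
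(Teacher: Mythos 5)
Your reduction to Theorem \ref{thm:general pair intro} (for $D=1$) and to the unlabeled theorem of Section \ref{sub:Convex-interactions-in} (for $D\geq 1$), together with the explicit computations of $w''$ and of the radial and tangential Hessian eigenvalues to verify the quasi-convexity hypotheses, is exactly how the paper intends Theorem \ref{thm:power sing intro} to follow; the paper leaves those verifications implicit, so your proposal fills them in in the expected way.

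One small slip: for the repulsive case $\alpha=0$ the force $F=c\,\mathrm{sgn}(x)$ is discontinuous at the origin, so this is \emph{not} ``already covered by the classical Lipschitz theory'' on $F$; what applies is still Theorem \ref{thm:general pair intro}, since the potential $w(x)=-c|x|$ is Lipschitz and quasi-convex on $]0,\infty[$. This does not affect the correctness of the overall argument, only the attribution of that endpoint.
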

The repulsive case $\alpha=1$ (with $f=1$) corresponds to a logarithmic
interaction potential and has been extensively studied in random matrix
theory and free probability. The corresponding stochastic system \ref{eq:system intro}
then coincides with Dyson's Brownian motion, as recalled below. For
$\alpha>1$ the corresponding repulsive interaction potential is a
\emph{long range potential} , i.e the model pair interaction potential
is of the form $1/|x-y|^{s}$ (where $s:=\alpha-1)$ with $s<D$.
Such interaction potentials appear frequantly in the physics litterature
as pseudo/effective potentials \cite{maz}. In the model case the
corresponding McKean-Vlasov equation can be formulated as a non-local
porous medium type equation, coupled to a fractional Laplacian \cite{b-k-m,c-v}.
The long range condition is precisely what is need to make sure that
the weak McKean-Vlasov equation \ref{eq:weak eq intro} makes sense,
since it ensures that the integrand is in $L_{loc}^{1}(\R^{2D})$
(using that the the factor $\phi'(x)-\phi'(y)$ is comparable to $|x-y|$
as $x\rightarrow y).$ In this respect the previous theorem appears
to be optimal when applied to repulsive power-laws in 1D. 

The main thrust of the results in Theorem \ref{thm:power sing intro}
for attractive power-laws concern the range when $\alpha\in[-1,2[,$
i.e. whe range where the power-singularity of the interaction force
is proportional to $|x|^{\gamma}$ for $\gamma\in[0,1[$ and thus
not locally Lip continuous. The corresponding McKean-Vlasov equations
have been studied extensively in the subtle case when $\beta=\infty,$
i.e in the absense of diffusion. The equation in question is then
often called the \emph{aggregation equation }and it exhibits interesting
concentration phenomena. For example. there is, in general, no uniqueness
of weak solutions and classical solutions blow-up in a finite time
(see \cite{d-r} and references therein). The particular model case
with $D=\gamma=1$ is covered by the theory of scalar conservation
laws where the blow-up of solutions corresponds to the classical phenomen
of shock formations and where uniqueness of weak solutions only holds
for entropy solutions \cite{b-c-f}. The application of the theory
of Wasserstein gradient flows to the general aggregation was introduced
in \cite{c-d-a-k-s}, which, as recalled above, provides a canonical
solution $\mu_{t}$ of the equations for all times (generalizing the
notion of entropy solutions, as recalled in Section \ref{sub:Convex-interactions-in}). 

The results above arose as a ``spin-off effect'' of the new approach
to propagation of chaos introduced in the companion paper \cite{b-=0000F6},
motivated by complex geometry or more precisely by the construction
of Kähler-Einstien metrics on complex algebraic manifolds. In \cite{b-=0000F6}
the focus was on interaction energies $E^{(N)}(x_{1},...,x_{N})$
in $\R^{D}$ which are highly non-linear in the sense that they are
not $m-$point interactions for any finite $m.$ On the other hand
the drawback with the general result in \cite{b-=0000F6}, which is
applied to the construction of toric Kähler-Einstein metrics, is that
it requires that $E^{(N)}(x_{1},...,x_{N})$ be $\lambda-$convex
on all of $\R^{D}.$ In particular, when applied to pair interactions
of the form \ref{eq:intro interaction energy} it forces $w(x)$ to
be Lipchitz continuos around $x=0.$ The main point of the present
paper is to show how to circumvent this problem, when $D=1,$ by working
directly on the quotient space $\R^{N}/S^{N}.$ The condition that
$D=1$ then enters in the key convexity result Proposition \ref{prop:hidden convexity},
formulated in terms of optimal transportation. We also give a general
formulation of the approach in \cite{b-=0000F6} which gives a unified
approach to establish the existence of a macroscopic limit $\mu_{t}$
in all three settings above.. In fact, Theorem \ref{thm:general pair intro}
will arise as a special case of a general convergence result concerning
interaction energies $E^{(N)}(x_{1},...,x_{N})$ which, when $D=1,$
are assumed $\lambda-$convex when $x_{1}<x_{2}<...<x_{N},$ which
in particular applies to $m-$point interaction energies.

\subsection{Comparison with previous results in the singular setting}

As will be recalled below there has recently there has been remarkable
progress on the case of singular forces $F,$ but there are still
comparively few general results. One of the main problems is to single
out a class of weak solutions of an appropriate singular version of
the equation \ref{eq:drift-d equa intro} where uniqueness holds and
then to show $(i)$ a suitable compactness/continuity result when
$N\rightarrow\infty$ and $(ii)$ that any limiting curve $\mu_{t}$
obtained in the first step is contained in the class of weak solutions
where uniqueness holds (see, for example, the discussions in \cite{g=0000E4,f-h-m}).

In order to make a more precise comparison of our results to previous
results we first briefly recall the general setting of first and second
order mean field models (for a general review see \cite{ja}). The
latter models arise as Newton's $N-$body equations on the phase space
$\ensuremath{\R^{2D}}$ (with a noise term). The first order models
\ref{eq:system intro} typically arise in mathematical physics as
scaling limit of second order models containing friction/resistance,
as well as instanton (tunneling) solutions in stochastic quantization
\cite{dh}. See also \cite{cch} and references therein for applications
to mathematical biology. The first, as well as second order, models
have a rather different flavour depending on whether $F,$ viewed
as a vector field on $\R^{D}-\{0\},$ is a gradient - as in our setting
- or if it its divergence vanishes (the ``incompressive'' case).
In the stochastic setting ($\beta_{N}<\infty)$ the first order gradient
models that we consider here, i.e. overdamped Langevin processes,
are widely used as a theoretical model for Monte-Carlo Markov schemes
as used in numerics to simulate the Boltzmann-Gibbs measures associated
to a sequence of the interaction energy/Hamtiltonians $E^{(N)}$ (i.e.
the probability measures on $\R^{DN}$ proportional to $e^{-\beta_{N}E^{(N)}}dx^{\otimes N}).$
In particular, in the ``zero temperature limit'' where $\beta_{N}\rightarrow\infty$
such schemes are used to locate configurations with nearly minimal
energy $E^{(N)}$ (for example, in the model case of power laws the
interaction energy $E^{(N)}(x_{1},.,,,x_{N})$ is called the discrete
\emph{Riesz $s-$energy} and has been studied extensively in the mathematics
litterature in connection to approximation theory \cite{s-k}). In
this numerical context the problem of propagation of chaos thus amounts
to the question of whether a large-scale coherent structure should
emerge in the numerical similations, as $N\rightarrow\infty.$

\subsubsection{$D=1$}

When $D=1$ propagation of chaos in the completely stochastic setting
$3$ has been established in increasing level of generality for the
case $\alpha=1$ of the repulsive logarithmic interaction potential
$w(t)=-\log t$ with $V$ quadratic \cite{r-s,c-l}. One simplifying
feature in the logarithmic setting is that the continuiuty property
$(i)$ discussed above automatically holds since the right hand side
in the equation \ref{eq:weak eq intro} is continuous wrt the weak
topology on $\mathcal{P}(\R).$ The proofs in \cite{r-s,c-l} also
exploit further special features of the logarithmic interaction potential
and the quadratic potential $V$ leading to the uniqueness of weak
solutions of the equation \ref{eq:weak eq intro} (using a complexification
argument; see Section \ref{sub:Realization-of-}). The results in
particular apply to Dyson's Brownian motion on the space of $N\times N$
Hermitian matrices $A,$ where $x_{i}$ represent the eigenvalues
of $A$ \cite{gu} and $\beta_{N}=\sqrt{N}.$ The uniqueness of weak
solutions under the assumption that the Fourier transform of $V$
has exponential decay (in particular $V$ is real analytic) was established
in \cite[Lemma 2.6]{c-d-g} when $\beta=\infty$ and in \cite{fon}
when $\beta<\infty$ (the uniqueness results in\cite{c-d-g,fon} were
used to establish a large deviation principle when $V=0$ which in
turn implies a stronger form of propagation of chaos when $V=0$).
The propagation of chaos for a $\lambda-$convex potential $V$ was
claimed in \cite{l-l-x}, but there seems to be a gap in the proof.
Indeed, the proof in \cite{l-l-x} is based on the claim that any
weak solution $\mu_{t}$ to the corresponding equations \ref{eq:weak eq intro}
is uniquely determined by the initial data. However, the proof of
the latter claim in \cite{l-l-x} uses the formal Otto calculus, which
in order to be rigorous would require further a priori regularity
properties of $\mu_{t}.$ Instead, as explained above, the main point
of our argument is that it directly produces a EVI gradient flow $\mu_{t},$
which is a priori stronger than a weak solution to the equation \ref{eq:weak eq intro}
(see the discussion in the end of Section \ref{sub:Realization-of-}).
In the more singular case of general repulsive power-laws with $\alpha<2$
there seems to be no previous propagation of chaos results (the study
of such very singular interactions was proposed in \cite{met}). In
the setting of second order models propagation of chaos in the stochastic
setting with $D=1$ and $\alpha=0$ was settled very recently in \cite{h-s}
(see also the references in \cite{h-s} concerning the deterministic
case).

Our results also appear to be new in the purely deterministic setting
in this generality (the mean field limit in the model case when $\alpha=1$
was established in \cite{f-i-m} under stronger assumptions on the
initial data, e.g. that $\rho_{0}$ be bounded and Lipchitz continuous,
using the theory of viscosity solutions of non-local Hamilton-Jacobi
equations). On the other hand, as shown in Section \ref{sub:Comparison-with-stability},
in some situations the convergence in the deterministic setting could
also be obtained from the stability results for Wasserstein gradient
flows in \cite{a-g-s}, combined with some non-trivial Gamma-convergence
results for singular discrete interaction energies \cite{se} (but
as far as we know this has not been noticed before in the literature).

\subsubsection{$D\geq1$}

In the case of globally $\lambda-$convex interactions (as in the
case of attractive power-laws in Theorem \ref{thm:power sing intro})
and $\beta_{N}=\infty$ the convergence of the corresponding deterministic
$N-$particle system towards the Wasserstein gradient flow $\mu_{t}$
was shown in \cite{c-d-a-k-s}, using the contractivity property of
such gradient flows (see Section \ref{sub:Comparison-with-stability}
for a comparison with the present setting). This was used in \cite{c-d-a-k-s}
to show that $\mu_{t}$ aggregates into a single Dirac mass in a finite
time (starting from any compactly supported initial measure). The
main novelty of Theorem \ref{thm:power sing intro}, in the attractive
case, is thus the possibility to add noise to the attractive particle
system. For example, in the zero-temperature case $\beta_{N}\rightarrow\infty$
this yields a stochastic particle approximation scheme for constructing
the Wasserstein gradient flow solution of the corresponding aggregation
equation introduced in \cite{c-d-a-k-s}. Such stochastic particle
approximations have previously been obtained in the model case when
$D=1$ and $\alpha=0,$ using the theory of scalar conservation laws
and their entropy solutions (see \cite{b-t} for $\beta<\infty$ and
\cite{jou} when $\beta=\infty).$ The underlying determinstic microscopic
system is then the sticky particle system originating in cosmology
\cite{b-g}  and its stochastic version, with $\beta_{N}\rightarrow\infty,$
is called the adhesion model in cosmology.

Let us also briefly mention some further recent results on the general
higher dimensional setting. When $D=2$ the critical case of a power-law
$\alpha=1$ (i.e. the Newtonian case) has been studied extensively
in the divergence free case, notably in the case of the vortex model
(where $F$ is the Biot-Savart law $F(x)=\pm Jx/|x|^{2}$ with $J$
denoting rotation by $90$ degrees) motivated by the 2D Euler and
Navier-Stokes equations \cite{m-p}. For example, partial results
for the corresponding deterministic evolution (called the Helmholtz-Kirchhow
system) in the settings $1$ and $2$ were obtained in \cite{sc,sc2}
and propagation of chaos in the setting $3$ was obtained in \cite{os0,os}
using Nash type estimates (for non-negative vorticity and $\mu_{0}$
in $L^{\infty}$) and recently, in a stronger form, in \cite{f-h-m},
using the Fisher information. The results in \cite{f-h-m}, concerning
propagation of chaos in the setting $3$ (for $D=2),$ were extended
to the gradient setting in \cite{gq} under the restriction $\alpha<1$
and a partial result concerning the critical case $\alpha=1$ (appearing
in the Keller-Segel model for chemotaxis) was established in \cite{fou-j}
(saying that propagation of chaos holds for some subsequence). 

A completly different approach to mean field limits and propagation
of chaos for first order gradient models without a noise term, i.e.
in the setting $1$ and $2$ above, was introduced in \cite{cch},
using stability properties of Wasserstein $L^{p}-$distances and subtle
estimates. Under some regularity assumptions, in particular that $\mu_{t}$
is in $\mathcal{P}_{1}(\R^{D})\cap L^{p}(\R^{D})$ for $t\in[0,T],$
convergence in Setting $1$ (and similarly in Setting $2)$ was established
for any power law such that $\alpha<D/p'-1,$ where $p'$ is the dual
exponent to $p$ (in particular, this means that $\alpha<0$ when
$D=1).$ The arguments in \cite{cch} build on \cite{ha1}, where
the use of Wasserstein distances in the singular case was first introduced
and applied to the setting where the divergence of $F$ vanishes (in
the case when $p=\infty).$ See also \cite{h-j0,h-j1} for related
work on second order models where it is assumed that the condition
$\alpha<1$ holds. 

Let us finally point out that very recently the convergence of the
deterministic Helmholtz-Kirchhow system studied in \cite{sc,sc2}
was finally settled in \cite{due} (when $V=0$ and under the assumption
that the initial measure is Hölder continuous and that the initial
energies are convergent as $N\rightarrow\infty).$ The proof is based
on a modulated energy method inspired by \cite{se2} and exploits
the fact that the limiting equation $\mu_{t}$ is known to have a
Hölder continuous density (when $V=0).$ The deterministic results
in \cite{due} also apply when $D=1,$ under the condition that the
the Hölder regularity property of $\mu_{t}$ holds (which is an open
problem).

\subsection{Outline}

We start in Section \ref{sec:Preliminaries} by setting up the general
theory of Wasserstein gradient flows from \cite{a-g-s} that will
be needed in the proof of the general convergence results given in
Section \ref{sec:A-general-convergence}. As we explain the proof
of the latter result can be viewed as an analog in our setting of
the stability of Wasserstein gradient flows on a Hilbert space, established
in \cite{a-g-s,a-g-z}. General applications are discussed which which
are then developed in the case of 1D translational invariant pair
interactions in Section \ref{sec:Singular-pair-interactions} (covering
Theorem \ref{thm:general pair intro} above). In Section\ref{sub:Realization-of-}
it is shown that the corresponding Wasserstein gradient flows $\mu_{t}$
appearing in the macroscopic limit are (particular) weak solutions
of the McKean-Vlasov equation (complementing some results in \cite{Carrillo-f-p}).
Some further applications, including the case $D\geq1$ (covering
Theorem \ref{thm:power sing intro} above) are developed in Section
\ref{sec:Further-examples} and in the final Section \ref{sub:Comparison-with-stability}
a comparison with the stability result in \cite{a-g-s,a-g-z} is made,
in the deterministic setting.

\subsection{Acknowledgements}

This paper supersedes and generalizes Section $4$ in the first and
second arXiv versions of the companion paper \cite{b-=0000F6}. It
is a pleasure to thank Eric Carlen for several stimulating discussions.
Thanks also to Luigi Ambrosio for helpful comments on the first versions
of the paper and to Maxime Hauray, Dominque Lépingle, Alice Guionnet
and Jose Carrillo for providing us with references. This work was
supported by grants from the Swedish Research Council, the Knut and
Alice Wallenberg Foundation and the European Research Council.

\section{\label{sec:Preliminaries}Preliminaries}

\subsection{\label{sub:Notation}Notation}

Given a topological (Polish) space $Y$ we will denote the integration
pairing between measures $\mu$ on $Y$ (always assumed to be Borel
measures) and bounded continuous functions $f$ by 

\[
\left\langle f,\mu\right\rangle :=\int f\mu
\]
(we will avoid the use of the symbol $d\mu$ since $d$ will usually
refer to a distance function on $Y).$ In case $Y=\R^{D}$ we will
say that a measure\emph{ $\mu$ has a density,} denoted by $\rho,$
if $\mu$ is absolutely continuous wrt Lebesgue measure $dx$ and
$\mu=\rho dx.$ We will denote by $\mathcal{P}(\R^{D})$ the space
of all probability measures and by $\mathcal{P}_{ac}(\R^{D})$ the
subspace containing those with a density (which coincides with the
space $\mathcal{P}^{r}(\R^{D})$ of all regular measures as defined
in \cite{a-g-s}, in this finite dimensional situation). The \emph{Boltzmann
entropy }$H(\rho)$ taking values in $]-\infty,\infty]$) is defined
by 
\begin{equation}
H(\rho):=\int_{\R^{D}}(\log\rho)\rho dx\label{eq:def of H and I}
\end{equation}
More generally, given a reference measure $\mu_{0}$ on $Y$ the entropy
of a measure $\mu$ relative to $\mu_{0}$ is defined by 
\begin{equation}
H_{\mu_{0}}(\mu)=\int_{X^{N}}\left(\log\frac{\mu}{\mu_{0}}\right)\mu\label{eq:def of rel entropy notation}
\end{equation}
if the probability measure $\mu$ on $X$ is absolutely continuous
with respect to $\mu$ and otherwise $H(\mu):=\infty.$ Given a lower
semi-continuous (\emph{lsc}, for short) function $V$ on $Y$ and
$\beta\in]0,\infty]$ (the ``inverse temperature) we will denote
by $F_{\beta}^{V}$ the corresponding \emph{(Gibbs) free energy functional
with potential $V:$
\begin{equation}
F_{\beta}^{V}(\mu):=\int_{X}V\mu+\frac{1}{\beta}H_{\mu_{0}}(\mu),\label{eq:def of free energy of v notation}
\end{equation}
}which coincides with $\frac{1}{\beta}$ times the entropy of $\mu$
relative to $e^{-V}\mu_{0}.$

\subsection{\label{sub:Wasserstein-spaces-and}Wasserstein spaces and metrics}

We start with the following very general setup. Let $(X,d)$ be a
given metric space, which is Polish, and denote by $\mathcal{P}(X)$
the space of all probability measures on $X$ endowed with the \emph{weak
topology,} i.e. $\mu_{j}\rightarrow\mu$ weakly in $\mathcal{P}(X)$
iff $\int_{X}\mu_{j}f\rightarrow\int_{X}\mu f$ for any bounded continuous
function $f$ on $X$ (this is also called the\emph{ narrow topology}
in the probability literature). The metric $d$ on $X$ induces $l^{p}-$type
metrics on the $N-$fold product $X^{N}$ for any given $p\in[1,\infty[:$
\[
d_{p}(x_{1},...,x_{N};y_{1},...,y_{N}):=(\sum_{i=1}^{N}d(x_{i},y_{i})^{p})^{1/p}
\]
The permutation group $S^{N}$ on $N$ letters has a standard action
on $X^{N},$ defined by $(\sigma,(x_{1},...,x_{N}))\mapsto(x_{\sigma(1)},...,x_{\sigma(N)})$
and we will denote by $X^{(N)}$ and $\pi$ the corresponding quotient
and quotient projection, respectively:

\begin{equation}
X^{(N)}:=X^{N}/S^{N},\,\,\,\,\pi:\,X^{N}\rightarrow X^{(N)}\label{eq:quotient and proj}
\end{equation}
The quotient $X^{(N)}$ may be naturally identified with the space
of all configurations of $N$ points on $X.$ We will denote by $d_{(p)}$
the induced distance function on $X^{(N)},$ suitably normalized:

\[
d_{X^{(N)},l^{P}}(x_{1},...,x_{N};y_{1},...,y_{N}):=\inf_{\sigma\in S_{N}}(\frac{1}{N}\sum_{i=1}^{N}d(x_{i},y_{\sigma(i)})^{p})^{1/p}
\]
The normalization factor $1/N^{1/p}$ ensures that the standard embedding
of $X^{(N)}$ into the space $\mathcal{P}(X)$ of all probability
measures on $X:$ 
\begin{equation}
X^{(N)}\hookrightarrow\mathcal{P}(X),\,\,\,\,(x_{1},..,x_{N})\mapsto\delta_{N}:=\frac{1}{N}\sum\delta_{x_{i}}\label{eq:def of empricical measure}
\end{equation}
(where we will call $\delta_{N}$ the \emph{empirical measure}) is
isometric when $\mathcal{P}(X)$ is equipped with the\emph{ $L^{p}-$Wasserstein
metric} $d_{W^{p}}$ induced by $d$ (for simplicity we will also
write $d_{W_{p}}=d_{p}):$ 
\begin{equation}
d_{W_{p}}^{p}(\mu,\nu):=\inf_{\gamma}\int_{X\times X}d(x,y)^{p}\gamma,\label{eq:def of wasser}
\end{equation}
 where $\gamma$ ranges over all couplings between $\mu$ and $\nu,$
i.e. $\gamma$ is a probability measure on $X\times X$ whose first
and second marginals are equal to $\mu$ and $\nu,$ respectively
(see Lemma \ref{lem:isometries} below). We will denote $W^{p}(X,d)$
the corresponding\emph{ $L^{p}-$Wasserstein space}, i.e. the subspace
of $\mathcal{P}(X)$ consisting of all $\mu$ with finite $p:$th
moments: for some (and hence any) $x_{0}\in X$ 
\[
\int_{X}d(x,x_{0})^{p}\mu<\infty
\]
We will also write $W^{p}(X,d)=\mathcal{P}_{p}(X)$ when it is clear
from the context which distance $d$ on $X$ is used. 
\begin{rem}
\label{rem:transport}In the terms of the Monge-Kantorovich theory
of optimal transport \cite{v1} $d_{W_{p}}^{p}(\mu,\nu)$ is the optimal
cost to for transporting $\mu$ to $\nu$ with respect to the cost
functional $c(x,p):=d(x,y)^{p}$. Accordingly a coupling $\gamma$
as above is often called a \emph{transport plan} between $\mu$ and
$\nu$ and it said to be defined by a \emph{transport map} $T$ if
$\gamma=(I\times T)_{*}\mu$ where $T_{*}\mu=\nu.$ In particular,
if $X=\R^{n},$ $p=2$ and $\mu$ has a density, then, by Brenier's
theorem \cite{br}, the optimal transport plan $\gamma$ is always
defined by a unique transport $L_{loc}^{\infty}-$map $T(:=T_{\mu}^{\nu})$
of the form $T_{\mu}^{\nu}=\nabla(\phi(x)+|x|^{2}/2),$ where $\phi(x)+|x|^{2}/2$
is a convex function on $\R^{n}$ (optimizing the dual Kantorovich
functional) and vice versa if $\nu$ has a density. 
\end{rem}
A key point will, in the following, be played by the following isometry
properties:
\begin{lem}
\label{lem:isometries}(Three isometries)
\begin{itemize}
\item The empirical measure $\delta_{N}$ defines an isometric embedding
$(X^{(N)},d_{(p)})\rightarrow\mathcal{P}_{p}(X)$ 
\item The corresponding push-forward map $(\delta_{N})_{*}$ from $\mathcal{P}(X^{(N})$
to $\mathcal{P}(\mathcal{P}(X))$ induces an isometric embedding between
the corresponding Wasserstein spaces $W_{q}(X^{(N)},d_{(p)})$ and
$W_{q}(\mathcal{P}_{p}(X)).$ 
\item The push-forward $\pi_{*}$ of the quotient projection $\pi:X^{N}\rightarrow X^{(N)}$
induces an isometry between the subspace of symmetric measures in
$(W_{q}(X^{N},\frac{1}{N^{1/p}}d_{p})$ and the space $(W_{q}(X^{(N)},d_{(p)})$ 
\end{itemize}
\end{lem}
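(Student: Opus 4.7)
Fix $\vec x,\vec y\in X^N$ and view them as maps $\{1,\ldots,N\}\to X$ whose pushforwards of the uniform probability measure are $\delta_N(\vec x),\delta_N(\vec y)$. Any coupling $\gamma$ between these empirical measures then lifts to a coupling of the two uniform measures on $\{1,\ldots,N\}$, i.e.\ to a doubly stochastic $N\times N$ matrix $M=(N\gamma_{ij})$, with $\int d(x,y)^p\gamma = \tfrac{1}{N}\sum_{i,j}M_{ij}d(x_i,y_j)^p$. Birkhoff's theorem identifies the extreme points of the set of doubly stochastic matrices with the permutation matrices; since the cost is linear in $M$, its infimum is attained at an extreme point, giving $d_{W_p}^p(\delta_N(\vec x),\delta_N(\vec y))=\min_{\sigma\in S^N}\tfrac{1}{N}\sum_i d(x_i,y_{\sigma(i)})^p = d_{(p)}^p(\pi(\vec x),\pi(\vec y))$, as required.

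\textbf{Bullet (2): functoriality of Wasserstein under isometric embeddings.} This is a formal consequence of (1) together with the general principle that every isometric embedding $\iota\colon(Y,d_Y)\hookrightarrow(Z,d_Z)$ of Polish metric spaces induces an isometric embedding $\iota_*\colon W_q(Y)\hookrightarrow W_q(Z)$. One direction follows by pushing couplings forward; for the other, any coupling of $\iota_*\mu,\iota_*\nu$ is supported in the Borel set $\iota(Y)\times\iota(Y)$ and hence pulls back through the Borel isomorphism $\iota\colon Y\to\iota(Y)$ to a coupling of $\mu,\nu$ with the same cost. Applying this with $\iota=\delta_N$, which is an isometric embedding by (1), yields (2).

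\textbf{Bullet (3): measurable selection and symmetrization.} The inequality $W_q(\pi_*\mu,\pi_*\nu)\le W_q(\mu,\nu)$ follows by pushing couplings forward, since $\pi\colon(X^N,d_p/N^{1/p})\to(X^{(N)},d_{(p)})$ is $1$-Lipschitz. For the reverse, take any coupling $\tilde\gamma$ downstairs. The fibers of $\pi$ are finite $S^N$-orbits, so a measurable selection argument (using a measurable fundamental domain $D\subset X^N$ for the $S^N$-action, the associated section $s\colon X^{(N)}\to D$, and the lexicographically smallest minimizer of $\sigma\mapsto d_p(s(\bar x),\sigma\cdot s(\bar y))$ over $S^N$) produces a Borel map $\Phi\colon X^{(N)}\times X^{(N)}\to X^N\times X^N$ satisfying $(\pi\times\pi)\circ\Phi=\mathrm{id}$ and $d_p(\Phi(\cdot))/N^{1/p}=d_{(p)}(\cdot)$. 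The pushforward $\gamma_0:=\Phi_*\tilde\gamma$ has the correct cost, but its marginals are merely \emph{some} lifts of $\pi_*\mu,\pi_*\nu$. Replacing $\gamma_0$ by its diagonal $S^N$-symmetrization $\gamma:=\tfrac{1}{|S^N|}\sum_\sigma(\sigma\times\sigma)_*\gamma_0$ preserves the cost (since $d_p$ is diagonal-invariant) and makes the marginals the $S^N$-invariant lifts of $\pi_*\mu,\pi_*\nu$, which by uniqueness of such lifts coincide with $\mu$ and $\nu$.

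\textbf{Main obstacle.} The crux is bullet (3): one must simultaneously arrange a measurable choice of optimal representative on $X^{(N)}\times X^{(N)}$ and identify the symmetrized marginals with $\mu,\nu$. The latter relies on the elementary but crucial fact, obtained by disintegrating along $\pi$, that each fiber of $\pi$ carries a unique $S^N$-invariant probability measure (the uniform one on the orbit), so that any symmetric probability measure on $X^N$ is the unique $S^N$-invariant lift of its $\pi$-pushforward. Mild care on the "big diagonal" where orbits degenerate is subsumed by the disintegration picture and ultimately rests on the finiteness of $S^N$.
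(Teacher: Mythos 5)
The paper states this lemma without proof, treating it as standard, so there is no ``paper proof'' to compare against; your argument supplies one and it is correct. Your bullet (1) is the classical Birkhoff--von Neumann reduction of the Kantorovich problem between two uniform atomic measures to a linear program over the Birkhoff polytope (with the small, harmless remark that when some $x_i$ or $y_j$ coincide the correspondence between couplings and doubly stochastic matrices is many-to-one, but this does not affect the infimum). Bullet (2) correctly reduces to the functoriality of $W_q$ under isometric embeddings of Polish spaces; the one point worth making explicit is that $\delta_N(X^{(N)})$ is a Borel subset of $\mathcal{P}_p(X)$ (Lusin--Souslin, since $\delta_N$ is continuous and injective and $X^{(N)}$ is Polish), which is exactly what legitimizes pulling the coupling back. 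Bullet (3) is the genuinely nontrivial one and your treatment is right: the one-Lipschitz bound for $\pi$ gives one inequality, and the reverse comes from the combination of a Borel selection of an orbit-optimal lift, the $S^N$-diagonal invariance of $d_p$ to symmetrize at no cost, and the key structural fact that a symmetric probability measure on $X^N$ is the \emph{unique} $S^N$-invariant lift of its $\pi$-pushforward (which follows by disintegration over $\pi$, since each finite orbit carries a unique invariant probability). All three steps are sound, and the selection map $\Phi$ you describe is Borel because the optimization is over the finite group $S^N$ with a fixed tie-breaking order.
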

Let us also recall the following classical result, which is a weak
version of Sanov's theorem \cite[Theorem 6.2.10]{d-z}:
\begin{lem}
\label{lem:sanov type}Let $\mu_{0}$ be a probability measure on
$X.$ Then $(\delta_{N})_{*}\mu_{0}^{\otimes N}\rightarrow\delta_{\mu_{0}}$
in $\mathcal{P}(\mathcal{P}(X))$ weakly as $N\rightarrow\infty$ 
\end{lem}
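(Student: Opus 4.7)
By the Portmanteau theorem, weak convergence in $\mathcal{P}(\mathcal{P}(X))$ to a Dirac mass $\delta_{\mu_{0}}$ is equivalent to convergence in probability of the underlying random element to the point $\mu_{0}\in\mathcal{P}(X)$. Accordingly, the plan is to view the empirical measure $\delta_{N}=\frac{1}{N}\sum_{i=1}^{N}\delta_{X_{i}}$ as a $\mathcal{P}(X)$-valued random variable on the product space $(X^{\infty},\mu_{0}^{\otimes\infty})$ and to show that $\delta_{N}\to\mu_{0}$ almost surely in the weak topology on $\mathcal{P}(X)$. Once this is done, any bounded continuous $F:\mathcal{P}(X)\to\R$ satisfies $F(\delta_{N})\to F(\mu_{0})$ almost surely, and bounded convergence then yields
\[
\int F\,(\delta_{N})_{*}\mu_{0}^{\otimes N}=\int F(\delta_{N})\,\mu_{0}^{\otimes N}\to F(\mu_{0})=\int F\,\delta_{\mu_{0}},
\]
which is precisely the required weak convergence in $\mathcal{P}(\mathcal{P}(X))$.

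To establish the almost sure weak convergence $\delta_{N}\to\mu_{0}$, I would apply the strong law of large numbers along a countable \emph{convergence-determining} family of test functions. Since $X$ is Polish, $\mathcal{P}(X)$ endowed with the weak (narrow) topology is itself Polish, and one can find a countable family $\{f_{k}\}_{k\geq1}\subset C_{b}(X)$ such that $\nu_{n}\to\nu$ weakly in $\mathcal{P}(X)$ if and only if $\langle f_{k},\nu_{n}\rangle\to\langle f_{k},\nu\rangle$ for every $k$ (one may, for instance, take a countable dense family of bounded Lipschitz functions for any compatible metric on $X$). For each fixed $k$, the variables $f_{k}(X_{1}),f_{k}(X_{2}),\ldots$ are i.i.d.\ and bounded, so the classical strong law of large numbers yields
\[
\langle f_{k},\delta_{N}\rangle=\frac{1}{N}\sum_{i=1}^{N}f_{k}(X_{i})\to\int f_{k}\,\mu_{0}=\langle f_{k},\mu_{0}\rangle\quad\text{almost surely.}
\]
Intersecting these countably many full-measure events then yields $\delta_{N}\to\mu_{0}$ weakly in $\mathcal{P}(X)$ almost surely.

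The argument is essentially routine once the right viewpoint is adopted; the only mild subtlety is the reduction to a \emph{countable} convergence-determining family, which ensures that the exceptional null set produced by the SLLN does not depend on the test function. No separate tightness step is required, since the deterministic limit $\mu_{0}$ is tight and identification of the limit via countably many test integrals is enough to conclude.
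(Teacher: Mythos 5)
The paper states this lemma without proof, simply calling it a weak version of Sanov's theorem and pointing to Dembo--Zeitouni. Your argument supplies a correct and self-contained proof by a genuinely more elementary route: rather than appealing to the large deviation principle (which gives exponential concentration, far more than the asserted weak convergence of laws), you establish almost sure weak convergence $\delta_{N}\to\mu_{0}$ of the empirical measures in $\mathcal{P}(X)$ --- essentially Varadarajan's theorem --- via the strong law of large numbers applied to a countable convergence-determining family, and then pass to convergence in $\mathcal{P}(\mathcal{P}(X))$ by bounded convergence. The only point that deserves to be made explicit, and you rightly flag it, is the existence of such a countable class for a Polish $X$: one may take, for a countable dense set $\{x_{i}\}$ and $j\in\N$, the bounded Lipschitz bumps $\phi_{i,j}(x)=\max\bigl(0,1-j\,d(x,x_{i})\bigr)$, and verify via the open-set form of the Portmanteau theorem that convergence of the integrals $\langle\phi_{i,j},\nu_{n}\rangle\to\langle\phi_{i,j},\nu\rangle$ for all $i,j$ forces $\liminf\nu_{n}(U)\geq\nu(U)$ for every open $U$, hence $\nu_{n}\to\nu$ weakly. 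With that filled in, the proof is complete. Both routes yield the lemma; yours has the advantage of being elementary and self-contained, while the paper's citation to Sanov situates the statement within the large-deviations framework that motivates the paper's terminology.
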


\subsection{EVI gradient flows on the Wasserstein space}

Let $F$ be a lower semi-continuous function on a complete metric
space $(M,d).$ In this generality there are, as explained in \cite{a-g-s},
various notions of weak gradient flows $u_{t}$ for $F$ (or ``steepest
descents'') emanating from an initial point $u_{0}$ in $M.$ The
strongest form of weak gradient flows on metric spaces discussed in
\cite{a-g-s} are defined by the property that $u_{t}$ satisfies
the following \emph{Evolution Variational Inequalities (EVI) }for
some $\lambda\in\R:$

\begin{equation}
\frac{1}{2}\frac{d}{dt}d^{2}(u_{t},v)+F(u(t))+\frac{\lambda}{2}d^{2}(\mu_{t},\nu)^{2}\leq F(v)\,\,\,\,\mbox{a.e.\,\,\ensuremath{t>0,\,\,\,\forall v\in M:\,F(v)<\infty}}\label{eq:evi}
\end{equation}
together with the initial condition $\lim_{t\rightarrow0}u(t)=u_{0}$
in $(M,d).$ Then $u_{t}$ is uniquely determined by $u_{0},$ as
shown in \cite[Cor 4.3.3]{a-g-s} and we shall say that $u_{t}$ is
the \emph{EVI-gradient flow} of $F$ emanating from $u_{0}.$ As shown
in \cite{a-g-s} when $(M,d)$ is the Wasserstein space $\mathcal{P}_{2}(\R^{d})$
such gradient flows can be constructed when $F$ has certain convexity
properties that we next recall. Following \cite{a-g-s} we first recall
that a \emph{generalized geodesic} $\mu_{s}$ connecting $\mu_{0}$
and $\mu_{1}$ in $\mathcal{P}_{2}(\R^{d})$ with ``base measure''
$\nu\in\mathcal{P}_{2,ac}(\R^{d}))$ is the curve $\mu_{s}$ on $[0,1]$
with values in $\mathcal{P}_{2}(\R^{d})$ defined as the following
family of push-forward measures:
\begin{equation}
\mu_{s}=\left((1-s)T_{0}+sT_{1}\right)_{*}\nu\label{eq:general geod with density}
\end{equation}
 where $T_{i}$ is the optimal transport map (defined with respect
to the cost function $|x-y|^{2}/2)$ pushing forward $\nu$ to $\mu_{i}$
(compare Remark \ref{rem:transport}). More generally, to any triple
$(\mu_{0},\mu_{1},\nu)$ of measures in $\mathcal{P}_{2}(\R^{d})$
one can associate a corresponding notion of generalized geodesic $\mu_{s}$
in $\mathcal{P}_{2}(\R^{d})$ (which may not be uniquely determined,
unless $\nu\in\mathcal{P}_{2,ac}(\R^{d})$), using transport plans
instead of transport maps (see \cite[Def 9.2.2]{a-g-s}). Then $F$
is said to be \emph{$\lambda-$convex} \emph{along generalized geodesics}
if, given any triple $(\mu_{0},\mu_{1},\nu)$ of measures in $\{F<\infty\}$
there exists a corresponding generalized geodesic along which a certain
convexity type inequality holds (see \cite[Def 9.2.3]{a-g-s}). Anyway,
for our purposes it will be enough to consider generalized geodesics
of the form appearing in formula \ref{eq:general geod with density},
thanks to the following
\begin{prop}
\label{prop:gener conv} 
\begin{itemize}
\item The notion of $\lambda-$convexity along generalized geodesics is
preserved under $\Gamma-$convergence wrt the $\mathcal{P}_{2}(X)-$topology
\item Let $F$ be a lsc function on $\mathcal{P}_{2}(\R^{d})$ with the
property that for any $\mu\in\{F<\infty\}$ there exists a sequence
$\mu_{j}$ converging to $\mu$ in $\mathcal{P}_{2}(\R^{d})$ such
that $F(\mu_{j})\rightarrow F(\mu).$ Then $F$ is $\lambda-$convex
along any\emph{ generalized} geodesic in $\mathcal{P}_{2}(\R^{d})$
iff for any generalized geodesic $\mu_{s}$ of the form \ref{eq:general geod with density}
(i.e. with a base $\nu$ in $\mathcal{P}_{2,abs}(\R^{d})$) 
\[
F(\mu_{s})\leq(1-s)F(\mu_{0})+sF(\mu_{1})-\frac{\lambda}{2}s(1-s)d_{2}(\mu_{0},\mu_{1})^{2}
\]

\item If moreover $F$ is continuos on $\mathcal{P}(\R^{d}),$ wrt the weak
topology, then the inequality above holds for $\mu_{s}$ iff it holds
for the generalized geodesics $\mu_{s}^{(j)}$ with the same base
$\nu$ obtained by fixing two sequences $\mu_{0}^{(j)}$ and $\mu_{1}^{(j)}$
in $\mathcal{P}_{2,abs}(\R^{d})$ weakly converging to $\mu_{0}$
and $\mu_{1},$ respectively.
\end{itemize}
\end{prop}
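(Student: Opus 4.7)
The plan is to treat the three bullets in turn, using three standard ingredients from the Ambrosio--Gigli--Savar\'e calculus on $\mathcal{P}_2(\R^d)$: stability of optimal transport plans under $\mathcal{P}_2$--convergence of their marginals, lower semicontinuity of the functional, and (depending on the bullet) either $\Gamma$--convergence recovery sequences or the extra weak continuity assumption. In each case the heart of the argument is to realize the limiting generalized geodesic as the limit, in $\mathcal{P}_2$, of generalized geodesics associated to approximating data, and then to pass to the limit in the $\lambda$--convexity inequality.

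For the first bullet, let $F_n\to F$ in the $\Gamma$--sense on $\mathcal{P}_2(X)$ with each $F_n$ being $\lambda$--convex along generalized geodesics, and fix a triple $(\mu_0,\mu_1,\nu)$ of measures in $\{F<\infty\}$. Using recovery sequences, choose $\mu_0^n,\mu_1^n,\nu^n$ converging in $\mathcal{P}_2$ to $\mu_0,\mu_1,\nu$ with $F_n\to F$ along them. For each $n$, pick a generalized geodesic $\mu_s^n$ for the triple $(\mu_0^n,\mu_1^n,\nu^n)$ along which the $F_n$--inequality holds. By stability/tightness of optimal transport plans (\cite{a-g-s}, Chapter 5) we can extract a subsequence so that the underlying three--plan converges and hence $\mu_s^n\to\mu_s$ in $\mathcal{P}_2$, where $\mu_s$ is a generalized geodesic for $(\mu_0,\mu_1,\nu)$. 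The $\Gamma$--liminf gives $F(\mu_s)\leq\liminf F_n(\mu_s^n)$, the endpoint terms converge by construction of the recovery sequences, and the distance term converges by continuity of $d_2$ on $\mathcal{P}_2$; so the $\lambda$--convexity inequality passes to the limit.

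For the second bullet the $(\Rightarrow)$ direction is tautological. For $(\Leftarrow)$, let $\mu_s$ be a generalized geodesic with an arbitrary base $\nu\in\mathcal{P}_2(\R^d)$ and endpoints $\mu_0,\mu_1$; by the density of $\mathcal{P}_{2,ac}$ in $\mathcal{P}_2$, pick $\nu_n\in\mathcal{P}_{2,ac}(\R^d)$ converging to $\nu$ in $\mathcal{P}_2$. By Brenier's theorem (Remark \ref{rem:transport}) the curves
\[
\mu_s^n=\bigl((1-s)T_0^n+sT_1^n\bigr)_*\nu_n,\qquad T_i^n=T^{\mu_i}_{\nu_n},
\]
are well--defined generalized geodesics of the form \eqref{eq:general geod with density} and satisfy the assumed inequality. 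Stability of optimal transport in $\mathcal{P}_{2,ac}$ yields $\mu_s^n\to\mu_s$ in $\mathcal{P}_2$ (after passing to a subsequence, to a generalized geodesic for $(\mu_0,\mu_1,\nu)$); since the right--hand side of the inequality does not depend on $\nu$, lower semicontinuity $F(\mu_s)\leq\liminf F(\mu_s^n)$ finishes the argument.

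The third bullet goes by the same scheme, but now we are allowed to approximate the endpoints as well, which is where the assumption of continuity of $F$ in the weak topology is used. Fix $\nu\in\mathcal{P}_{2,ac}$ and weak approximants $\mu_i^{(j)}\in\mathcal{P}_{2,ac}$ of $\mu_0,\mu_1$; the Brenier maps $T_i^{(j)}=T^{\mu_i^{(j)}}_{\nu}$ converge $\nu$--a.e.\ to $T^{\mu_i}_{\nu}$ by stability of Kantorovich potentials with a fixed absolutely continuous reference, so $\mu_s^{(j)}\to\mu_s$ weakly. Weak continuity of $F$ now lets all three $F$--terms pass to the limit, and since the base $\nu$ is fixed the distance term can be rewritten as $\int_{\R^d}|T_0^{(j)}-T_1^{(j)}|^2\,\nu$ which converges to $\int_{\R^d}|T_0-T_1|^2\,\nu=d_2^2(\mu_0,\mu_1)$ by dominated convergence once the standard uniform $L^2(\nu)$ bound on the maps is in place; this last point -- upgrading weak convergence of endpoints to $L^2(\nu)$--convergence of the associated Brenier maps -- is the only delicate step, and is where we would need to invoke \cite[Chapter 7]{a-g-s} carefully.
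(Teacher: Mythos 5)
The paper's own proof is a citation: the first two bullets are \cite[Lemma 9.2.9]{a-g-s} and \cite[Prop 9.2.10]{a-g-s}, and the third follows by the compactness argument of \cite[Prop 9.1.3]{a-g-s}. You re-derive the statements from scratch, which is a legitimate alternative, and your treatment of the first bullet (recovery sequences, tightness/stability of the underlying three-plans, $\Gamma$-liminf on the left and $\mathcal{P}_2$-convergence of the right-hand side) is essentially the AGS argument and is fine. For the second bullet you declare the $(\Rightarrow)$ direction tautological and never touch the hypothesis that every $\mu\in\{F<\infty\}$ admits a sequence $\mu_j\to\mu$ in $\mathcal{P}_2$ with $F(\mu_j)\to F(\mu)$; that should be a warning sign. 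The definition of $\lambda$-convexity along generalized geodesics in \cite[Def 9.2.3]{a-g-s} is an existence statement for triples in $\{F<\infty\}$, and the inequality it furnishes carries $\int|T_0-T_1|^2\,d\nu$ (which is $\ge d_2^2(\mu_0,\mu_1)$) on the right; obtaining the stated inequality with $d_2^2$ for an arbitrary a.c.\ base $\nu$, not necessarily in $\{F<\infty\}$, is not immediate and is exactly where the density hypothesis enters in \cite[Prop 9.2.10]{a-g-s}.

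The concrete error is in the third bullet. You claim the distance term can be rewritten as $\int_{\R^d}|T_0^{(j)}-T_1^{(j)}|^2\,\nu$ and that this converges to $\int_{\R^d}|T_0-T_1|^2\,\nu=d_2^2(\mu_0,\mu_1)$. Neither step holds: the push-forward $(T_0^{(j)},T_1^{(j)})_*\nu$ is a coupling of $\mu_0^{(j)}$ and $\mu_1^{(j)}$ but has no reason to be optimal, so in general $\int|T_0^{(j)}-T_1^{(j)}|^2\,\nu\ge d_2^2(\mu_0^{(j)},\mu_1^{(j)})$ with strict inequality, and the same for the limit; the final ``$=d_2^2(\mu_0,\mu_1)$'' is simply false. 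Moreover, mere weak convergence of the endpoints only gives $\liminf_j d_2^2(\mu_0^{(j)},\mu_1^{(j)})\ge d_2^2(\mu_0,\mu_1)$ by lower semicontinuity, which is the wrong direction when $\lambda\le0$: to pass the convexity inequality to the limit you need $\limsup_j d_2^2(\mu_0^{(j)},\mu_1^{(j)})\le d_2^2(\mu_0,\mu_1)$. The paper's application builds the $\mu_i^{(j)}$ by mollification and truncation, so that second moments converge and the convergence upgrades to $\mathcal{P}_2$; that is what actually closes the gap in the distance term. Your argument as written does not supply this and would not survive the limiting step.
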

\begin{proof}
The first and second statement is the content of \cite[Lemma 9.2.9]{a-g-s}
and \cite[Prop 9.2.10]{a-g-s}, respectively and the last statement
then follows by a standard compactness argument (just as in the proof
of \cite[Prop 9.1.3]{a-g-s}).
\end{proof}
We recall the notion of $\Gamma-$convergence used above, introduced
by De Giorgi:
\begin{defn}
\label{def:Gamma conv}A one-parameter family of $F_{h}$ of functions
on a topological space \emph{$\mathcal{P}$ is said to $\Gamma-$convergence
}to a function $F$ on $\mathcal{P}$ if 
\begin{equation}
\begin{array}{ccc}
\mu_{h}\rightarrow\mu\,\mbox{in\,}\mathcal{P} & \implies & \liminf_{h\rightarrow\infty}F_{h}(\mu_{h})\geq F(\mu)\\
\forall\mu & \exists\mu_{h}\rightarrow\mu\,\mbox{in\,}\mathcal{P}: & \lim_{h\rightarrow\infty}F_{h}(\mu_{h})=F(\mu)
\end{array}\label{eq:def of gamma conv}
\end{equation}
In the present setting $\mathcal{P}=\mathcal{P}_{2}(Y,d)$ and we
will then say that $F_{h}$\emph{ $\Gamma-$converges to $F$ strongly}
if, in the lower bound above holds also for every $\mu_{h}$ converging
in the weak topology together with a uniform bound on the second moments
(the terminology strong is non-standard)

We will need the following equivariant generalization of \cite[Thm 11.2.1]{a-g-s}
which constructs an EVI gradient flow as a \emph{Minimizing Movement,}
i.e. as a limit of a time discretized Minimizing Movement scheme:\end{defn}
\begin{thm}
\label{thm:existence of evi}Let $G$ be a compact group acting by
isometries on $\R^{d}$ and $F$ a $G-$invariant lsc real-valued
functional on $\mathcal{P}_{2}(\R^{d})$ which is $\lambda-$convex
along $G-$invariant generalized geodesics and satisfies the following
coercivity property: there exist constants $\tau_{*},C>0$ and $\mu_{*}\in\mathcal{P}_{2}(\R^{d})$
such that 
\begin{equation}
F(\cdot)\geq-\frac{1}{\tau_{*}}d_{2}(\cdot,\mu_{*})^{2}-C\label{eq:coercivity type condition}
\end{equation}
Then there is a unique solution $\mu_{t}$ to the EVI-gradient flow
of $F,$ emanating from any given $G-$invariant $\mu_{0}$ in $\overline{\left\{ F<\infty\right\} }$
and $\mu_{t}$ remains $G-$invariant for any $t>0.$ Moreover, the
flow is $\lambda-$contractive: 
\[
d_{2}(\mu_{t},\nu_{t})\leq e^{-\lambda t}d_{2}(\mu_{0},\nu_{0})
\]
and 
\[
d_{2}(\mu_{t},\mu_{t}^{(\tau)})\leq C|\tau|^{1/2}F(\mu_{0})\,\,\,\,t\in[0,T]
\]
 where $\mu_{t}^{(\tau)}$ is the corresponding minimizing movement
with timestep $\tau$ and $C$ is a constant only depending on $\lambda,$
$T$ \end{thm}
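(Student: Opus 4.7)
The plan is to run the minimizing movement construction of \cite[Theorem 11.2.1]{a-g-s} intrinsically on the closed subspace $\mathcal{P}_{2}^{G}\subset\mathcal{P}_{2}(\R^{D})$ of $G$-invariant measures with finite second moment, equipped with the restricted Wasserstein metric. The key observation that lets us drop convexity along non-invariant geodesics is that, for a compact isometry group $G$, optimal transport between $G$-invariant measures respects the $G$-action, so the relevant generalized geodesics remain inside $\mathcal{P}_{2}^{G}$.

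First I would record two preliminary facts. The subspace $\mathcal{P}_{2}^{G}$ is closed in $\mathcal{P}_{2}$, since $g_{*}$ is weakly continuous for each $g\in G$ and compactness of $G$ makes $G$-invariance preserved under weak limits. Next, since a compact group of isometries of $\R^{D}$ has a fixed point, we may assume $G$ acts by orthogonal transformations. If $\nu\in\mathcal{P}_{2,ac}^{G}$ and $\mu_{0},\mu_{1}\in\mathcal{P}_{2}^{G}$, then the Brenier maps $T_{i}$ from $\nu$ to $\mu_{i}$ (Remark \ref{rem:transport}) are $G$-equivariant: for any $g\in G$, the map $g\circ T_{i}\circ g^{-1}$ pushes $g_{*}\nu=\nu$ to $g_{*}\mu_{i}=\mu_{i}$ with the same quadratic cost, so by uniqueness it coincides with $T_{i}$. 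Linearity of $g$ then gives that $(1-s)T_{0}+sT_{1}$ is $G$-equivariant, so the generalized geodesic $\mu_{s}=((1-s)T_{0}+sT_{1})_{*}\nu$ lies in $\mathcal{P}_{2}^{G}$ for every $s$.

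With this in hand, I would set up the minimizing movement on $\mathcal{P}_{2}^{G}$: starting from $\mu_{0}^{(\tau)}=\mu_{0}$, at each step pick a minimizer $\mu_{n}^{(\tau)}\in\mathcal{P}_{2}^{G}$ of
\[
\mu\mapsto F(\mu)+\frac{1}{2\tau}d_{2}^{2}(\mu,\mu_{n-1}^{(\tau)}).
\]
Existence follows from lower semicontinuity of $F$, the coercivity estimate \ref{eq:coercivity type condition} (which rules out escape to infinity for $\tau<\tau_{*}$), and closedness of $\mathcal{P}_{2}^{G}$. The hypothesis of $\lambda$-convexity along $G$-invariant generalized geodesics can now be applied to the geodesic connecting $\mu_{n}^{(\tau)}$ to any test measure $v\in\mathcal{P}_{2}^{G}$ with base $\mu_{n-1}^{(\tau)}$, since by the step above this geodesic lies in $\mathcal{P}_{2}^{G}$. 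The variational argument of \cite[Lemma 4.1.2]{a-g-s} then yields the discrete EVI
\[
\frac{1}{2\tau}\bigl(d_{2}^{2}(\mu_{n}^{(\tau)},v)-d_{2}^{2}(\mu_{n-1}^{(\tau)},v)\bigr)+F(\mu_{n}^{(\tau)})+\frac{\lambda}{2}d_{2}^{2}(\mu_{n}^{(\tau)},v)\leq F(v)
\]
for every $v\in\mathcal{P}_{2}^{G}$.

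Finally, passing to the limit $\tau\rightarrow0$ via the standard AGS machinery (\cite[Theorem 4.0.4, Theorem 4.2.2]{a-g-s}) produces a curve $\mu_{t}\in\mathcal{P}_{2}^{G}$ satisfying the continuous EVI \ref{eq:evi} against all $G$-invariant test measures, the $\lambda$-contraction estimate (\cite[Corollary 4.3.3]{a-g-s}), and the $O(\tau^{1/2})$ discretization bound. Uniqueness of the EVI flow on $\mathcal{P}_{2}^{G}$ forces $\mu_{t}$ to be $G$-invariant for all $t>0$. The main obstacle is step two --- verifying $G$-equivariance of the Brenier maps and stability of generalized geodesics inside $\mathcal{P}_{2}^{G}$ --- as everything else is a direct transcription of \cite{a-g-s} once the invariant class is shown to be closed under the geometric operations driving the proof.
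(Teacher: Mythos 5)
Your proposal is correct and follows essentially the same route as the paper: restrict to the closed subspace $\mathcal{P}_{2}(\R^{d})^{G}$, use a fixed point to reduce $G$ to a subgroup of $O(d)$, show that generalized geodesics between $G$-invariant measures (with a $G$-invariant base) stay $G$-invariant, and then apply the minimizing-movement machinery of \cite{a-g-s} verbatim on this invariant subspace. The only cosmetic difference is that the paper establishes $G$-equivariance of the optimal transport via the Legendre-duality identity $(\phi\circ A)^{*}=\phi^{*}\circ A$ for $A\in O(d)$, whereas you argue directly from uniqueness of the Brenier map applied to $g\circ T_{i}\circ g^{-1}$; these are two formulations of the same observation.
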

\begin{proof}
This is a straightforward generalization of the proof of \cite[Thm 11.2.1]{a-g-s}.
To see this we set $M:=\mathcal{P}_{2}(\R^{d})^{G}$ viewed as a closed
subspace of the metric space $\mathcal{P}_{2}(\R^{d})^{G}.$ Restricting
$d_{2}$ to $M$ gives a complete metric space $(M,d_{2}).$ By Theorem
4.0.4 in \cite{a-g-s} we just have to verify the following two conditions
on $(M,d_{2}):$ for any choice of $\mu_{0},\mu_{1}$ and $\nu$ in
$M$ there exists a curve $\gamma_{t}$ in $\mathcal{P}_{2}(\R^{d})^{G}$
connecting $\mu_{0}$ and $\mu_{1}$ such that 
\begin{itemize}
\item $d_{2}^{2}(\gamma_{t},\mu)$ is $\lambda-$ convex wrt $t$ for some
$\lambda>0$ 
\item $F(\gamma_{t})$ is $\lambda-$convex wrt $t$
\end{itemize}
As shown in\cite{a-g-s} such a curve $\gamma_{t}$ exists in the
space $\mathcal{P}_{2}(\R^{d})$ and may be taken as a generalized
geodesic connecting $\mu_{0}$ and $\mu_{1}$ with base $\nu.$ Accordingly,
all we have to do is to verify that if $\mu_{0},\mu_{1}$ and $\nu$
are $G-$invariant, then the corresponding generalized geodesic may
be taken to be $G-$invariant for all $t\in[0,1].$ But this follows
from the dual Kantorovich formulation of the optimal transport problem
\cite{v1}. To see this final point, note that by compactness $G$
can be embedded into $O(d,\R)$, with the natural action of $O(d,\R)$
around some fixed point. But for $A\in O(d,\R)$, we have that $(\phi\circ A)^{*}=\phi^{*}\circ A$,
where the star denotes Legendre transformation. \end{proof}
\begin{rem}
We recall that it follows from the results in \cite{a-g-s} that the
EVI-gradient flow above has a number of further properties. For example,
the flow defines a semi-group and is $1/2-$Hölder continuous as a
map from any fixed time interval $[0,T]$ into $\mathcal{P}_{2}(\R^{d})$
and Lipschitz continuous on any fixed bounded open time-interval.
We also note that the corresponding curve $[\mu_{t}]\in W_{2}(\R^{d}/G)$
is the unique EVI-gradient flow of $F$ viewed as a function on $W_{2}(\R^{d}/G),$
as follows immediately from the the natural isometry between $W_{2}(\R^{d})^{G}$
and $W_{2}(\R^{d}/G).$ 
\end{rem}
We briefly recall the construction of De Giorgi's \emph{minimizing
movement} scheme in a general metric space $(M,d),$ which can be
seen as a variational formulation of the (back-ward) Euler scheme
\cite[Chapter 2]{a-g-s}).. Consider the fixed time interval $[0,T]$
and fix a (small) positive number $\tau$ (the ``time step''). In
order to define the ``discrete flow'' $u_{j}^{\tau}$ corresponding
to the sequence of discrete times $t_{j}:=j\tau,$ where $t_{j}\leq T$
with initial data $u_{0}$ one proceeds by iteration: given $u_{j}\in M$
the next step $u_{j+1}$ is obtained by minimizing the following functional
on $(M,d):=W_{2}(\R^{d}):$ 
\[
u\mapsto J_{j+1}(u):=\frac{1}{2\tau}d(u,u_{j})^{2}+F(u)
\]
Finally, one defines $u^{\tau}(t)$ for any $t\in[0,T]$ by setting
$u^{\tau}(t_{j})=u_{j}^{\tau}$ and demanding that $u^{\tau}(t)$
be constant on $]t_{j},t_{j+1}[$ and right continuous (we are using
a slightly different notation than the one in \cite[Chapter 2]{a-g-s}).

The following result goes back to McCann \cite{mcCa} (see also \cite{a-g-s}
for various elaborations):
\begin{lem}
\label{lem:generalized conv} The following functionals are lsc and
$\lambda-$convex along any generalized geodesics in $\mathcal{P}_{2}(\R^{d})$:.
\begin{itemize}
\item The ``potential energy'' functional $\mathcal{V}(\mu):=\int V\mu,$
defined by a given lsc $\lambda-$convex and lsc function $V$ on
$\R^{d}$ (and the converse also holds)
\item The functional $\mu\mapsto\int V_{N}\mu^{\otimes N}$ defined by a
given $\lambda-$convex function $V_{N}$ on $\R^{dN}$ and in particular
the ``interaction energy'' functional 
\[
\mathcal{W}(\mu):=\int W(x-y)\mu(x)\otimes\mu(x)
\]
 defined by a given lsc $\lambda-$convex function $W$ on $\R^{d}.$
\item The Boltzmann entropy $H(\mu)$ (relative to $dx$) is lsc and convex
along any generalized geodesics.
\end{itemize}
In particular, for any $\lambda-$convex function $V$ on $\R^{d}$
the corresponding free energy functional $F_{\beta}^{V}$ (formula
\ref{eq:def of free energy of v notation}) is $\lambda-$convex along
generalized geodesics, if $\beta\in]0,\infty].$ \end{lem}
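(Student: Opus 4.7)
The plan is to treat the three families of functionals separately, in each case exploiting the explicit form of the generalized geodesic $\mu_s = ((1-s) T_0 + s T_1)_* \nu$ whose base $\nu \in \mathcal{P}_{2,ac}(\R^d)$, as furnished by Proposition \ref{prop:gener conv}. This proposition reduces the verification of $\lambda$-convexity along \emph{all} generalized geodesics to those whose base measure is absolutely continuous, so in every case one only has to check the inequality for $\mu_s$ of the displayed form.

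For the potential term $\mathcal{V}(\mu) = \int V d\mu$, a direct push-forward gives
\[
\mathcal{V}(\mu_s) = \int V\bigl((1-s) T_0(x) + s T_1(x)\bigr) d\nu(x),
\]
and applying the pointwise $\lambda$-convexity of $V$ under the integral yields the desired inequality, with the quantity $\int |T_0 - T_1|^2 d\nu$ (which dominates $d_2^2(\mu_0,\mu_1)$, and which is the natural error term in the AGS definition) appearing on the right-hand side. The converse direction then follows by specializing to $\mu_i = \delta_{x_i}$ and letting $\nu$ concentrate at a chosen base point. The same push-forward, applied to $\mu \mapsto \int V_N d\mu^{\otimes N}$ on $\R^{dN}$ and combined with Fubini on $\nu^{\otimes N}$, handles the second bullet; specializing to $N=2$ with $V_2(x,y) = W(x-y)$ then yields the pair interaction case, using that the only directions in $\R^{2d}$ touched by differences of transport maps are precisely those on which the Hessian of $V_2$ inherits $\lambda$-convexity from $W$.

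The Boltzmann entropy is the main technical step, and I would appeal to McCann's displacement-convexity theorem together with its extension to generalized geodesics in \cite{a-g-s}. The crucial observation is that $T_s = (1-s) T_0 + s T_1$ is itself the gradient of a convex function (since $T_0, T_1$ are, by Remark \ref{rem:transport}), hence a.e.\ injective; the change-of-variables formula then gives
\[
H(\mu_s) = H(\nu) - \int \log \det D T_s(x) \, d\nu(x).
\]
Concavity of $\log \det$ on symmetric positive semidefinite matrices (Minkowski's determinant inequality) implies that $s \mapsto -\log\det DT_s$ is convex pointwise, and integrating against $\nu$ produces the $0$-convexity of $H$ along the geodesic. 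Lower semicontinuity of $H$ on $\mathcal{P}_2(\R^d)$ is standard.

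The final statement about $F_\beta^V = \mathcal{V} + \tfrac{1}{\beta} H$ follows by summation, since a sum of a $\lambda$-convex and a $0$-convex functional along the same family of generalized geodesics is again $\lambda$-convex. The step I expect to be most delicate is the rigorous justification of the change-of-variables identity for $H$ when $T_0, T_1$ are only $L^2_{\mathrm{loc}}$ gradients of (possibly non-smooth) convex functions; here I would rely on the refined McCann-type formula developed in \cite{a-g-s} and extend the resulting inequality from $\nu \in \mathcal{P}_{2,ac}$-based generalized geodesics to arbitrary ones by the approximation/$\Gamma$-convergence statement of Proposition \ref{prop:gener conv}, together with the standard mollification argument showing that $H$ has the required recovery-sequence property on its finiteness domain.
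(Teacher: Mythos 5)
The paper states this lemma without proof, simply attributing it to McCann \cite{mcCa} and to \cite{a-g-s}, and your argument is a correct reconstruction of exactly the McCann--AGS route those citations point to: push-forward under $T_s=(1-s)T_0+sT_1$ for the potential and interaction terms, and concavity of $\log\det$ along $DT_s$ (Minkowski) for the entropy, with the delicate regularity issues delegated to \cite{a-g-s} as you note. One small imprecision worth tightening: in the interaction case the displacement directions $(T_1(x)-T_0(x),\,T_1(y)-T_0(y))$ do sweep out all of $\R^{2d}$, so $V_2(x,y)=W(x-y)$ is not pointwise $\lambda$-convex there; the correct convexity modulus of $\mathcal{W}$ is recovered only after integrating against $\nu\otimes\nu$, where $\int|a-b|^2\,d\nu\otimes\nu=2\int|T_0-T_1|^2 d\nu-2|\int(T_0-T_1)d\nu|^2$ produces the needed comparison with $W_\nu^2(\mu_0,\mu_1)$ (for $\lambda\leq0$), rather than from any pointwise inheritance of the Hessian bound.
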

\begin{rem}
\label{rem:Otto}According to the Otto calculus \cite{ot} the EVI
gradient flow on $\mathcal{P}_{2}(\R^{d})$ of a sufficently regular
functional $F$ satisfies, if $\mu_{t}$ has has a smooth positive
density $\rho_{t},$ the evolution equation 
\begin{equation}
\frac{\partial\rho_{t}(x)}{\partial t}=\nabla_{x}\cdot(\rho v_{t}(x)),\,\,\,\,\,\,v_{t}(x)=\nabla_{x}\frac{\partial F(\rho)}{\partial\rho}_{|\rho=\rho_{t}}\label{eq:formal gradient flow}
\end{equation}
As shown in \cite{a-g-s} these equations still hold in the weak sense
of distributions under apprioriate assumptions on $E$, for the EVI
gradient flow solution $\mu_{t}$ (see Section \ref{sub:Realization-of-}).
In particular, when $F$ is of the free energy form $F=E(\mu)+H(\mu)/\beta$
\begin{equation}
\frac{\partial\rho_{t}(x)}{\partial t}=\frac{1}{\beta}\Delta_{x}\rho_{t}(x)+\nabla_{x}\cdot(\rho v_{t}(x)),\,\,\,\,\,\,v_{t}(x)=\nabla_{x}\frac{\partial E(\rho)}{\partial\rho}_{|\rho=\rho_{t}}\label{eq:formal gradient flow for free}
\end{equation}
coincides with the McKean-Vlasov equation \ref{eq:drift-d equa intro}
when $E=\mathcal{W}$ for $W$ smooth. Moreover, when $E=\mathcal{V}$
the corresponding evolution equation is the linear Fokker-Planck equation
associated to the potential $V,$ as first shown when $V$ is smooth
in the seminal work \cite{j-k-o}.
\end{rem}

\subsection{\label{sub:The-forward-Kolmogorov}The defining laws on $X^{N}$
and the mean (free) energy $\mathcal{E}^{(N)}$ (and $\mathcal{F}_{N})$}

Let $X$ be the Euclidean space $\R^{d}$ and $V$ a smooth (and say
coercive) function on $\R^{D}.$ The SDE 
\begin{equation}
dx=-\nabla Vdt+dB/\beta\label{eq:sde in section prel}
\end{equation}
where $x(0)$ is a vector of iid variables with law $\mu_{0}$ (as
before we allow the ODE case $\beta=\infty)$ defines, for any fixed
$T,$ a probability measure $\eta_{T}$ on the space of all maps $[0,T]\rightarrow X$
(see for example \cite{sn} and reference therein). For $t$ fixed
we can thus view $x(t)$ as an $X-$valued random variable on the
latter probability space. Its law gives a curve of probability measures
on $X$ of the form $\mu_{t}=\rho_{t}dx,$ where the density $\rho_{t}$
satisfies the corresponding \emph{forward Kolmogorov equation}: 
\begin{equation}
\frac{\partial\rho_{t}}{\partial t}=\frac{1}{\beta}\Delta\rho_{t}+\nabla\cdot(\rho_{t}\nabla V),\label{eq:forward kolm eq}
\end{equation}
(also called the linear\emph{ Fokker-Planck equation}). Anyway, for
our purposes we may as well forget about the SDE \ref{eq:sde in section prel}
and take the forward Kolmogorov equation \ref{eq:forward kolm eq}
on $X$ as our the starting point. As recalled above (Remark \ref{rem:Otto})
the latter evolution equation can be interpreted as the gradient-flow
on the Wasserstein space $W_{2}((\R^{D})^{N}),$ of the corresponding
free energy functional. 

In our setting we will take $V:=E^{(N)}(x_{1},...,x_{N})$ for a given
symmetric function on $X:=(\R^{D})^{N}.$ Following standard terminology
in statistical mechanics we will call the corresponding (scaled) linear
functional $\mathcal{E}^{(N)}$ on $\mathcal{P}((\R^{d})^{N}),$ defined
by 
\begin{equation}
\mathcal{E}^{(N)}(\mu_{N}):=\frac{1}{N}\int_{(\R^{D})^{N}}E^{(N)}\mu_{N}\label{eq:def of mean energy}
\end{equation}
 the \emph{mean energy.} Similarly, the corresponding mean free energy
$\mathcal{F}^{(N)},$ at inverse temperature $\beta_{N},$ is defined
by
\begin{equation}
\mathcal{F}^{(N)}(\mu_{N}):=\frac{1}{N}\int_{(\R^{D})^{N}}E^{(N)}\mu_{N}+\frac{1}{\beta_{N}N}H(\mu_{N}):=\mathcal{E}^{(N)}(\mu_{N})+\frac{1}{\beta_{N}}\mathcal{H}^{(N)}(\mu_{N})\label{eq:def of mean free energy}
\end{equation}
where the scaled Boltzmann entropy $\mathcal{H}^{(N)}(\mu_{N})$ on
$\mathcal{P}((\R^{D})^{N})$ is called the\emph{ mean entropy.}

More generally, we will allow $E^{(N)}$ to be singular, but we will
make assumptions (such as $\lambda-$convexity modulo $S^{N})$ ensuring,
by Theorem \ref{thm:existence of evi}, that the Wasserstein gradient
flow of $F_{N}$ is well-defined, giving a curve of probability measures
$\mu^{(N)}(t)$ on $(\R^{D})^{N}.$ In the deterministic setting $(\beta=\infty)$
it can be shown (see Section \ref{sub:Deterministic-limits-for})
that this approach is consistent with the classical notion of a strong
solution, as long as such a solution exists (for example, when particles
do not collide for $t>0)$. In the stochastic setting $(\beta<\infty)$
we will not attempt to give any definition of a strong solution to
the stochastic equation \ref{eq:sde in section prel}, since such
solutions do not play any role in our proofs. On the other hand the
present definition of $\mu^{(N)}(t)$ is stable under monotone regularizations
of $E^{(N)}$ (preserving $\lambda)$ and thus coincides with any
other definition with similar stablily properties. We refer to \cite{a-g-z,ce,c-l}
for the notion of strong solutions under convexity assumptions as
above and \cite{st-v} for general results about strong solutions
of SDEs with locally bounded drifts (see also Remark \ref{rem:monotone maps}).

\section{\label{sec:A-general-convergence}A general convergence result}

In section we will give a more general formulation of the propagation
of chaos result in \cite[Theorem 1.1]{b-=0000F6}, by exploiting the
$S^{N}-$symmetry (which will be cruical in the applications to strongly
singular pair potentials). The present formulation also has the virtue
of also applying in the purely deterministic setting. The proof could
be given essentially by repeating the argument in \cite{b-=0000F6}.
But here we give a slightly different proof which can be seen as an
analoge in our setting of the stability result of gradient flows on
the Wasserstein space of a Hilbert space $Y$ in \cite{a-g-s,a-g-z}
(recalled in Theorem \ref{thm:-stab of gradient flo} below). The
new difficulties that arise in our setting is that 
\begin{itemize}
\item The Gamma convergence of the corresponding functionals only holds
in a restricted ``relative'' sense
\item The space $Y$ is $\mathcal{P}_{2}(\R^{D}),$ which is not a Hilbert
space (in particular, there are no general existence results for EVI
gradient flows on $\mathcal{P}_{2}(Y),$ nor error estimates and convergence
results for general minimizing movements).
\end{itemize}
These diffuclties will be handled by exploiting the fact that the
the limiting functional $\mathcal{F}$ is linear wrt the ordinary
affine structure on $\mathcal{P}_{2}(\R^{D})$ and using the isometry
properties of the embeddings in Lemma \ref{lem:isometries}.

\subsection{\label{sub:The-assumptions-on}The assumptions on $E^{(N)}$}

Set $X=\R^{n}$ and denote by $d$ the Euclidean distance function
on $X.$ In the following $E^{(N)}$ will denote a lsc symmetric,
i.e. $S_{N}-$invariant, sequence of functions in $L_{loc}^{1}(X^{N})$
and we will make the following assumptions, where $\mathcal{E}^{(N)}$
denotes the corresponding mean energies (formula \ref{eq:def of mean energy}): 
\begin{enumerate}
\item (``Convergence of the mean energies''): There exists a lsc functional
$E(\mu)$ on $\mathcal{P}_{2}(X)$ with the property that $\{E<\infty\}$
is dense in $\mathcal{P}_{2}(X)$, $\{E<\infty\}\cap\{H<\infty\}\neq\emptyset$,
and such that for any sequence of symmetric probability measures $\mu^{(N)}$
on $X^{N}$ satisfying $\Gamma_{N}:=(\delta_{N})_{*}\mu^{(N)}\rightarrow\Gamma$
weakly in $\mathcal{P}(\mathcal{P}(X))$ and with a uniform bound
on the second moments we have 
\[
\int_{\mathcal{P}(X)}E(\mu)\Gamma(\mu)\leq\liminf_{N\rightarrow\infty}\mathcal{E}^{(N)}(\mu^{(N)})
\]
and for any $\mu\in\mathcal{P}_{2}(X)$ we have that 
\[
\limsup_{N\rightarrow\infty}\mathcal{E}^{(N)}(\mu^{\otimes N})\leq E(\mu)
\]

\item (``Convexity of the mean energies''): The mean energy functional
$\mathcal{E}^{(N)}$ on $\mathcal{P}(X^{N})^{S^{N}}$is $\lambda-$convex
along generalized geodesics in $\mathcal{P}(X^{N})^{S^{N}}.$
\item (``Coercivity'') There exists a constant $C$ such that 
\[
E^{(N)}(x_{1},...x_{N})\geq-C\frac{|x_{1}|^{2}+...+|x_{1}|^{2}}{N}-C
\]
(or equivalently, that $\mathcal{E}^{(N)}\geq-C((\delta_{N})_{*})^{*}d^{2}(\cdot,\Gamma_{0})-C$
for a fixed element $\Gamma_{0}$ in $\mathcal{P}_{2}(\mathcal{P}_{2}(X)))$ \end{enumerate}
\begin{lem}
The functional $E(\mu)$ is \textup{$\lambda-$convex along generalized
geodesics in $\mathcal{P}_{2}(X)$ and coercive.}\end{lem}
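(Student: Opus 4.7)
The plan is to deduce both the coercivity and the $\lambda$-convexity of $E$ from the analogous properties of the mean energies $\mathcal{E}^{(N)}$ by comparing a candidate geodesic at the macroscopic level with the corresponding product geodesic on $X^N$, then passing to the $N\to\infty$ limit via assumption (1) together with Sanov's lemma (Lemma \ref{lem:sanov type}).

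For coercivity, I would fix $\mu\in\mathcal{P}_2(X)$ and test assumption (1) on the symmetric tensor $\mu^{(N)}:=\mu^{\otimes N}$. Sanov yields $(\delta_N)_*\mu^{\otimes N}\to\delta_\mu$ weakly with uniformly bounded second moments, so the lower bound in (1) gives $E(\mu)\leq\liminf_N\mathcal{E}^{(N)}(\mu^{\otimes N})$. But assumption (3) immediately implies $\mathcal{E}^{(N)}(\mu^{\otimes N})\geq -C\int_X|x|^2\mu(x)-C$, so $E(\mu)\geq -C\,d_2(\mu,\delta_0)^2-C$, which is the coercivity estimate.

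For $\lambda$-convexity, by Proposition \ref{prop:gener conv} (whose approximation hypothesis is trivially satisfied by the constant sequence $\mu_j\equiv\mu$) it suffices to verify the defining inequality on generalized geodesics of the form $\mu_s=\bigl((1-s)T_0+sT_1\bigr)_*\nu$ with $\mu_0,\mu_1\in\{E<\infty\}$ and base $\nu\in\mathcal{P}_{2,ac}(X)$, where $T_i$ is the Brenier optimal transport from $\nu$ to $\mu_i$. The key observation is that optimal transport tensorizes: the optimal transport from $\nu^{\otimes N}$ to $\mu_i^{\otimes N}$ in the product space $(X^N,\tfrac{1}{\sqrt N}d_2)$ is $T_i$ applied componentwise, all of these measures are symmetric, and therefore the generalized geodesic in $\mathcal{P}_2(X^N)^{S^N}$ with base $\nu^{\otimes N}$ connecting $\mu_0^{\otimes N}$ and $\mu_1^{\otimes N}$ is the product curve $\mu_s^{\otimes N}$. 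With the normalized product metric, the associated squared Wasserstein distance equals $d_2(\mu_0,\mu_1)^2$ for every $N$.

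Applying assumption (2) along this product geodesic gives, for every $N$, the inequality
\[
\mathcal{E}^{(N)}(\mu_s^{\otimes N}) \leq (1-s)\,\mathcal{E}^{(N)}(\mu_0^{\otimes N}) + s\,\mathcal{E}^{(N)}(\mu_1^{\otimes N}) - \frac{\lambda}{2}\,s(1-s)\,d_2(\mu_0,\mu_1)^2.
\]
Letting $N\to\infty$, Sanov applied to $\mu_s$ gives $(\delta_N)_*\mu_s^{\otimes N}\to\delta_{\mu_s}$ and so the $\liminf$ part of (1) yields $E(\mu_s)\leq\liminf_N\mathcal{E}^{(N)}(\mu_s^{\otimes N})$; the $\limsup$ part applied to $\mu_0$ and $\mu_1$ yields $\limsup_N\mathcal{E}^{(N)}(\mu_i^{\otimes N})\leq E(\mu_i)$. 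Combining these passes the desired convexity inequality
\[
E(\mu_s)\leq (1-s)E(\mu_0)+s E(\mu_1)-\frac{\lambda}{2}s(1-s)\,d_2(\mu_0,\mu_1)^2
\]
to the limit. The step I expect to need the most care is the verification that the generalized geodesic between the symmetric tensors $\mu_i^{\otimes N}$ with symmetric base $\nu^{\otimes N}$ is indeed the product curve $\mu_s^{\otimes N}$, and that the normalization of the product metric exactly reproduces $d_2(\mu_0,\mu_1)^2$ at every $N$, so that the constant $\lambda$ does not deteriorate in the limit.
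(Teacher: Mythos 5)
Your convexity argument is correct and follows essentially the same route as the paper: tensorize the optimal maps, observe that $\mu_s^{\otimes N}$ is the symmetric generalized geodesic with base $\nu^{\otimes N}$ in the normalized product, apply Assumption (2) for each $N$, and pass the inequality to the limit using both halves of Assumption (1) together with Sanov. The appeal to Proposition \ref{prop:gener conv} to reduce to absolutely continuous bases is a reasonable bit of extra care that the paper glosses over.

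However, your coercivity argument has a direction error. You invoke the $\liminf$ lower bound of Assumption (1), which gives $E(\mu)\leq\liminf_N\mathcal{E}^{(N)}(\mu^{\otimes N})$, and then combine this with $\mathcal{E}^{(N)}(\mu^{\otimes N})\geq -C\int|x|^2\mu - C$. But from $E(\mu)\leq L$ and $L\geq A$ one cannot conclude $E(\mu)\geq A$; the two inequalities point the wrong way to chain together. What you actually need here is the \emph{second} half of Assumption (1), namely $\limsup_N\mathcal{E}^{(N)}(\mu^{\otimes N})\leq E(\mu)$; together with the termwise lower bound from Assumption (3) this yields $E(\mu)\geq\limsup_N\mathcal{E}^{(N)}(\mu^{\otimes N})\geq -C\,d_2(\mu,\mu_*)^2-C$. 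Equivalently, as the paper does, first combine both halves of Assumption (1) to conclude $\lim_N\mathcal{E}^{(N)}(\mu^{\otimes N})=E(\mu)$ and then read off the coercive bound. The fix is trivial, but as written the step does not follow.
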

\begin{proof}
We first observe that taking $\mu^{(N)}=\mu^{\otimes N}$ and using
Sanov's theorem (or Lemma \ref{lem:sanov type}) gives 
\[
\lim\mathcal{E}^{(N)}(\mu^{\otimes N})=\lim_{N\rightarrow\infty}\frac{1}{N}\int_{X^{N}}E^{(N)}\mu^{\otimes N}=E(\mu)
\]
Fix $\nu,\mu_{0},\mu_{1}\in\mathcal{P}_{2}(X)$, and let $\mu_{t}$
be the generalized geodesic with base measure $\nu$. But then $\mu_{t}^{\otimes N}$
is the (symmetric) generalized geodesic with base $\nu^{\otimes N}$
connecting $\mu_{0}^{\otimes N},\mu_{1}^{\otimes N}$, and the convexity
statement follows from the convexity assumption on $\mathcal{E}^{(N)}$.
The coercivity of $E$ follows from the coercivity assumption on $\mathcal{E}^{(N)}$
by letting $\Gamma_{0}=\delta_{\mu_{*}}$ and writing

\[
\mathcal{E}^{(N)}(\mu^{\text{\ensuremath{\otimes}N}})\geq-Cd_{2}(\delta_{N*}\mu^{\otimes N},\delta_{\mu_{*}})^{2}-C\rightarrow-Cd_{2}(\mu,\mu_{*})\text{\texttwosuperior}-C,
\]
 By Sanov's theorem and the isometry properties of Lemma \ref{lem:isometries}.
\end{proof}

\subsection{Formulation of the general convergence results}
\begin{thm}
\label{thm:general}Let $E^{(N)}$ be a sequence of functions on $(\R^{d})^{N}$
satisfying the assumptions above and let $\mu^{(N)}$ be a sequence
of symmetric probability measures on $(\R^{d})^{N}$ such that 
\[
\Gamma_{N}:=(\delta_{N})_{*}\mu^{(N)}\rightarrow\delta_{\mu_{0}}
\]
in $W_{2}(\mathcal{P}_{2}(\R^{d}))$ as $N\rightarrow\infty,$ where
$\mu_{0}\in\mathcal{P}_{2}(\R^{d}).$ Further assume that $\mu^{(N)}\in\{F_{\beta}^{(N)}<\infty\}$,
$\mu_{0}\in\overline{{F_{\beta}<\infty}}$. Then the EVI gradient
flow solution $\mu^{(N)}(t)$ of the corresponding forward Kolmogorov
equation \ref{eq:forward kolm eq} at inverse temperature $\beta_{N}$
on $(\R^{d})^{N}$ with initial data $\mu^{(N)}(0)=\mu^{(N)}$ satisfies
\[
\Gamma_{N}(t):=(\delta_{N})_{*}\mu^{(N)}(t)\rightarrow\delta_{\mu_{t}},
\]
in $W_{2}(\mathcal{P}_{2}(\R^{d}))$ as $N\rightarrow\infty,$ where
$\mu_{t}$ is the EVI gradient flow on $\mathcal{P}_{2}(\R^{d})$
of the corresponding free energy type functional $F_{\beta}(\mu).$
\end{thm}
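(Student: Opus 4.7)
The plan is to adapt the proof of stability of EVI gradient flows from \cite{a-g-s,a-g-z} to the present non-Hilbertian setting by combining the De Giorgi minimizing movement scheme with the isometric embeddings of Lemma \ref{lem:isometries}. First, the lemma immediately preceding the theorem shows that the limiting interaction functional $E$ is $\lambda$-convex along generalized geodesics and coercive; combined with the entropy contribution, Theorem \ref{thm:existence of evi} guarantees that both $\mu_{t}$ and $\mu^{(N)}(t)$ are well-defined EVI gradient flows. Now fix a time step $\tau>0$ and let $\mu^{(N),\tau}_{j+1}$ be the minimizer of
\[
J^{N}_{j+1}(\mu^{(N)}):=\tfrac{1}{2\tau N}d_{2}^{2}(\mu^{(N)},\mu^{(N),\tau}_{j})+\mathcal{F}^{(N)}(\mu^{(N)})
\]
over $S^{N}$-invariant probability measures, with $\mu^{(N),\tau}_{0}=\mu^{(N)}$, and analogously define $\mu^{\tau}_{j}$ for $F_{\beta}$ on $\mathcal{P}_{2}(\R^{d})$ starting from $\mu_{0}$. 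Theorem \ref{thm:existence of evi} supplies the quantitative error $d_{2}(\mu^{\tau}(t),\mu(t))\leq C\tau^{1/2}$, with $C$ depending only on $\lambda$, $T$, and the initial free energy; this error is therefore uniform in $N$ once one checks that $\mathcal{F}^{(N)}(\mu^{(N)})$ stays bounded as $N\to\infty$, which in turn follows from the $\Gamma$-limsup in assumption (1) and Lemma \ref{lem:sanov type}.

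The heart of the argument is the inductive claim that $\Gamma_{N}^{\tau}(t_{j}):=(\delta_{N})_{*}\mu^{(N),\tau}_{j}\to\delta_{\mu^{\tau}_{j}}$ in $W_{2}(\mathcal{P}_{2}(\R^{d}))$, for every fixed $\tau$ and $j$. The base case $j=0$ is the hypothesis. For the inductive step the second isometry of Lemma \ref{lem:isometries} lifts the one-step minimization exactly to the outer Wasserstein space, since on symmetric measures $\frac{1}{N}d_{2}^{2}(\mu^{(N)},\mu^{(N),\tau}_{j})$ coincides with the squared $W_{2}(\mathcal{P}_{2}(\R^{d}))$-distance between the corresponding pushforwards. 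Inserting the product-measure competitor $\mu^{\otimes N}$ for any $\mu$ with $H(\mu)<\infty$, using Lemma \ref{lem:sanov type} for the distance term and the $\Gamma$-limsup of assumption (1) for the mean energy, gives the upper bound
\[
\limsup_{N\to\infty} J^{N}_{j+1}(\mu^{(N),\tau}_{j+1}) \leq \tfrac{1}{2\tau}W_{2}^{2}(\mu,\mu^{\tau}_{j})+F_{\beta}(\mu).
\]
For the matching lower bound, the coercivity in assumption (3) together with standard a priori estimates for minimizing movements yields uniform tightness and second-moment control for $\Gamma_{N}^{\tau}(t_{j+1})$; on any $W_{2}$-limit point $\Gamma^{*}$, the $\Gamma$-liminf half of assumption (1), the lower semicontinuity of the outer squared $W_{2}$-distance, and the classical subadditivity-based entropy lower bound $\liminf_{N}\mathcal{H}^{(N)}(\mu^{(N)})\geq \int H\,d\Gamma^{*}$ jointly give $\int[\tfrac{1}{2\tau}W_{2}^{2}(\cdot,\mu^{\tau}_{j})+F_{\beta}]\,d\Gamma^{*}$. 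Comparing the two bounds forces $\Gamma^{*}$ to be supported on minimizers of the bracket, and the strict $(1/\tau+\lambda)$-convexity of the bracket along generalized geodesics (for $\tau$ small) forces $\Gamma^{*}=\delta_{\mu^{\tau}_{j+1}}$, closing the induction.

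Choosing, for each $t$, the grid point $t_{j}$ nearest to $t$ and applying the triangle inequality in $W_{2}(\mathcal{P}_{2}(\R^{d}))$, together with the uniform $\tau^{1/2}$ minimizing movement error on both sides and the inductive step for the middle term, then yields the theorem after letting first $N\to\infty$ and then $\tau\to 0$. The main obstacle I anticipate is upgrading narrow convergence of $\Gamma_{N}^{\tau}(t_{j})$ to actual $W_{2}$-convergence at each step of the induction: narrow tightness is immediate, but $W_{2}$-convergence additionally requires uniform-in-$(\tau,N)$ control of second moments, which must be propagated through the discrete scheme by combining the coercivity bound (3) with a Gronwall-type estimate exploiting the $\lambda$-convex structure. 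A secondary technical point is that the entropy $\Gamma$-lower bound used above is classical (going back to Robinson--Ruelle) but is not recalled in the excerpt; a direct proof via Donsker--Varadhan duality works under the present assumptions. The dependence $\beta_{N}\to\beta$ enters only linearly in $1/\beta_{N}$ and is harmless.
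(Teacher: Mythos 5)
Your core strategy is the same as the paper's: discretize time, compare the one-step minimizers of the lifted mean free energy $\mathcal{F}^{(N)}$ on $\mathcal{P}_{2}(\mathcal{P}_{2}(\R^{d}))$ with those of $F_{\beta}$ via the isometries of Lemma \ref{lem:isometries} and a relative (strong) $\Gamma$-convergence argument, use the $\tau^{1/2}$ minimizing-movement error estimate to close the triangle inequality, and send first $N\to\infty$ and then $\tau\to 0$. The inductive one-step convergence and the use of the $(1/\tau+\lambda)$-convexity to force uniqueness of the limit minimizer are both in the spirit of the paper, which invokes the same fact (``basic properties of $\Gamma$-convergence'') more tersely.

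There is, however, a genuine gap in your control of the error term $d(\Gamma_{N}(t),\Gamma_{N}^{\tau}(t))\leq C\tau^{1/2}$ uniformly in $N$. You claim that $\mathcal{F}^{(N)}(\mu^{(N)})$ stays bounded ``from the $\Gamma$-limsup in assumption (1) and Lemma \ref{lem:sanov type}.'' But the $\limsup$ part of assumption (1) only controls $\mathcal{E}^{(N)}(\mu^{\otimes N})$ along \emph{product} measures, while the theorem allows an arbitrary sequence of symmetric measures $\mu^{(N)}$ subject only to $(\delta_{N})_{*}\mu^{(N)}\to\delta_{\mu_{0}}$ in $W_{2}(\mathcal{P}_{2}(\R^{d}))$ and $\mathcal{F}^{(N)}(\mu^{(N)})<\infty$. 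For such general $\mu^{(N)}$ there is no a priori reason for $\mathcal{F}^{(N)}(\mu^{(N)})$ to remain bounded as $N\to\infty$, so the constant in your $\tau^{1/2}$ estimate is not controlled uniformly and the argument does not close. Moreover, even if you only aim at $F(\mu_{0})<\infty$, the hypotheses allow the weaker $\mu_{0}\in\overline{\{F_{\beta}<\infty\}}$, which your argument does not address at all.

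The paper resolves both issues by a three-step bootstrap that your proposal is missing: Step 1 proves the statement under the additional hypotheses $\mu_{0}\in\{F_{\beta}<\infty\}$ and $\limsup_{N}\mathcal{F}^{(N)}(\mu^{(N)})<\infty$, exactly as you do; Step 2 removes the second hypothesis by replacing $\mu^{(N)}$ with $\mu_{0}^{\otimes N}$, showing via Lemma \ref{lem:sanov type} and the isometries that $N^{-1}d_{2}(\mu_{0}^{\otimes N},\mu^{(N)})\to 0$, and then transferring the conclusion back to $\mu^{(N)}(t)$ by $\lambda$-contractivity of the EVI flow; Step 3 handles $\mu_{0}\in\overline{\{F_{\beta}<\infty\}}$ by approximating $\mu_{0}$ by $\mu_{j,0}$ with $F(\mu_{j,0})<\infty$ and again invoking $\lambda$-contractivity. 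You should add these two approximation steps; without them the proof only covers a strict sub-case of the theorem, and the claimed derivation of the uniform free energy bound is incorrect as stated.
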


\subsection{The proof of Theorem \ref{thm:general}}

In the proof we will need a modifed form of strong Gamma-convergence
that we call\emph{ Gamma convergence relative to the subset }$\mathcal{D}\subset\mathcal{P}$
defined by only requiring that the equality in the second condition
in Definition \ref{def:Gamma conv} holds for $\mu\in\mathcal{D}$
(and similarly for strong Gamma-convergence). We will also say that
a functional $\mathcal{F}$ on $\mathcal{P}_{2}(Y,d)$ admits \emph{mininimizing
movements relative to $\mathcal{D}\subset\mathcal{P}_{2}(Y,d)$} if
for any given $\Gamma_{0}\in\mathcal{D}$ there exists a continuous
curve $\Gamma(t)\in\mathcal{D}$ emanating from $\Gamma_{0},$ which
can be realized as the limit, as $\tau\rightarrow0,$ of time discrete
minimizing movements $\Gamma_{\tau}(t)\in\mathcal{D},$ where $\Gamma_{\tau}(t)$
is assumed to be uniquely determined by $\Gamma_{0}.$

Now set $Y:=\mathcal{P}_{2}(\R^{D}).$ By embedding $\mathcal{P}_{2}((\R^{D})^{N}/S_{N})$
isometrically into $\mathcal{P}_{2}(Y),$ using the push-ward map
$(\delta_{N})_{*}$ we can and identify the mean free energies $\mathcal{F}^{(N)}$
with functionals on $\mathcal{P}_{2}(Y),$ extended by $\infty$ to
all of $\mathcal{P}_{2}(Y).$ We will take $\mathcal{D}$ to be subset
of all Dirac measures $\Gamma$ in $\mathcal{P}_{2}(Y),$ i.e. $\Gamma=\delta_{\mu}$
for some $\mu\in Y.$
\begin{lem}
\label{lem:gamma relative}If Assumption $1$ in Section \ref{sec:A-general-convergence}
holds, then the mean free energies $\mathcal{F}^{(N)}$ strongly Gamma-converges
to the lsc linear functional $\mathcal{F}(\Gamma)$ on $\mathcal{P}_{2}(Y),$
relative to $\mathcal{D}.$\end{lem}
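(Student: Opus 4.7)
The plan is to identify the limit functional as $\mathcal{F}(\Gamma) = \int_Y F_\beta(\mu)\,\Gamma(d\mu)$, which on a Dirac $\Gamma = \delta_{\mu_0} \in \mathcal{D}$ reduces to $F_\beta(\mu_0) = E(\mu_0) + \frac{1}{\beta}H(\mu_0)$. I would split $\mathcal{F}^{(N)} = \mathcal{E}^{(N)} + \frac{1}{\beta_N}\mathcal{H}^{(N)}$ into its energy and entropy pieces and verify the two halves of strong $\Gamma$-convergence separately for each piece.

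For the liminf inequality, which must hold along any weakly convergent $\Gamma_N := (\delta_N)_*\mu^{(N)} \to \Gamma$ with a uniform bound on second moments, the energy contribution is exactly the first half of Assumption $1$ in Section \ref{sub:The-assumptions-on}. For the entropy piece, I need the standard lower semicontinuity of the normalized Boltzmann entropy along empirical-measure convergence:
\[
\int_Y H(\mu)\,\Gamma(d\mu) \;\leq\; \liminf_{N\to\infty}\,\tfrac{1}{N} H(\mu^{(N)}).
\]
Since $H$ is entropy relative to Lebesgue (not a probability measure), I would reduce to the relative-entropy case by fixing any Gaussian probability measure $\nu$ on $X$ and writing $H(\mu) = H_\nu(\mu) + \int_X (\log \rho_\nu)\,\mu$. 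The functional $\mu^{(N)} \mapsto \frac{1}{N}H_{\nu^{\otimes N}}(\mu^{(N)})$ is lsc along empirical-measure convergence by the Donsker--Varadhan variational characterization combined with Sanov's theorem, a chain-rule argument already used in \cite{b-=0000F6}. The affine correction $\int(\log\rho_\nu)\,\mu$ is continuous under the uniform second-moment bound, since $-\log\rho_\nu$ is quadratic at infinity. Multiplying by $1/\beta_N$ and using $\beta_N \to \beta$ then gives the entropy liminf for $\mathcal{F}^{(N)}$.

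For the recovery condition at $\Gamma = \delta_{\mu_0} \in \mathcal{D}$, take $\mu^{(N)} := \mu_0^{\otimes N}$. Lemma \ref{lem:sanov type} gives $(\delta_N)_*\mu_0^{\otimes N} \to \delta_{\mu_0}$ weakly, and the second moments of this sequence are uniformly bounded by the law of large numbers applied to $\int|x|^2\,d\delta_N$. The independent-identically-distributed structure yields $\frac{1}{N}H(\mu_0^{\otimes N}) = H(\mu_0)$ \emph{exactly}. The second half of Assumption $1$ gives $\limsup \mathcal{E}^{(N)}(\mu_0^{\otimes N}) \leq E(\mu_0)$, and combining with the liminf already established forces $\lim \mathcal{E}^{(N)}(\mu_0^{\otimes N}) = E(\mu_0)$. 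Hence
\[
\lim_N \mathcal{F}^{(N)}(\mu_0^{\otimes N}) \;=\; E(\mu_0) + \tfrac{1}{\beta}H(\mu_0) \;=\; F_\beta(\mu_0) \;=\; \mathcal{F}(\delta_{\mu_0}),
\]
which is the sought recovery equality; linearity of $\mathcal{F}$ on $\mathcal{P}_2(Y)$ and its lower semicontinuity follow from its integral representation together with the lsc of $F_\beta$.

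The main obstacle is the entropy liminf, since it is not covered by the hypotheses of Section \ref{sub:The-assumptions-on} and must be supplied separately. The reduction to a Gaussian reference measure is essential: it turns $H$ into a relative entropy for which Donsker--Varadhan duality yields lsc along empirical-measure convergence, while the quadratic correction term is absorbed by precisely the uniform second-moment hypothesis that is built into the strong form of $\Gamma$-convergence used here.
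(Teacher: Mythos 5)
Your split into an energy piece handled by Assumption~1 and an entropy piece handled by lower-semicontinuity of the normalized relative entropy under push-forward by the empirical measure is the right overall structure, and your recovery choice $\mu^{(N)}=\mu_0^{\otimes N}$ with $\frac{1}{N}H(\mu_0^{\otimes N})=H(\mu_0)$ is the same as the paper's. For the entropy liminf you route through a Gaussian reference, Donsker--Varadhan duality and Sanov, while the paper simply cites subadditivity of the mean entropy (Robinson--Ruelle, \cite{r-r}) and the generalization in \cite{h-m}; that difference is cosmetic.

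However, there is a genuine gap in your treatment of the recovery (limsup) condition when $\beta=\infty$ but $\beta_N<\infty$ along the sequence. In that regime the target is $\mathcal{F}(\delta_{\mu_0})=E(\mu_0)$ alone, with no entropy contribution, but your recovery sequence gives
\[
\mathcal{F}^{(N)}(\mu_0^{\otimes N})=\mathcal{E}^{(N)}(\mu_0^{\otimes N})+\tfrac{1}{\beta_N}H(\mu_0),
\]
and if $H(\mu_0)=\infty$ (which is allowed whenever $E(\mu_0)<\infty$) the right-hand side is identically $+\infty$, so no convergence to $E(\mu_0)<\infty$ takes place. Your formula ``$\lim_N\mathcal{F}^{(N)}(\mu_0^{\otimes N})=E(\mu_0)+\tfrac{1}{\beta}H(\mu_0)=F_\beta(\mu_0)$'' silently interprets $\tfrac{1}{\infty}\cdot\infty$ as $0$, which is exactly where the argument breaks. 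The paper repairs this by first fixing an auxiliary measure with finite entropy and finite energy, taking the displacement interpolation $\mu_t$ to $\mu_0$, observing that $H(\mu_t)<\infty$ for $t<1$ by a Monge--Amp\`ere Jacobian bound (Lemma~\ref{lem:stability-wrt-beta}), using $\lambda$-convexity plus lower semicontinuity to get continuity of $t\mapsto\mathcal{F}(\mu_t)$ as $t\to1$, and then a diagonal argument to pick $t_N\to1$ so that $\mathcal{F}^{(N)}(\mu_{t_N}^{\otimes N})\to\mathcal{F}(\mu_0)$. You need some version of this regularization step; without it the recovery inequality fails on a nontrivial portion of the effective domain when $\beta=\infty$.
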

\begin{proof}
The lower bound follows directly from Assumption $1$ together with
the fact that the mean entropy functionals satisfy the lower bound
in the Gamma convergence (by subadditivity \cite{r-r}; see also Theorem
5.5 in \cite{h-m} for generalizations). To prove the upper bound,
we first consider the case $\beta<\infty$ and fix an element $\Gamma$
of the form $\delta_{\mu}$. We may then take the approximating sequence
to be for the form $(\delta_{N})_{*}\mu^{\otimes N}.$ Then the required
convergence follows from Assumption $1$ together with the basic property
$\mathcal{H}^{(N)}(\mu^{\otimes N})=H(\mu).$ When $\beta=\infty$,
since it may happen that $H(\mu)=\infty$, we must first regularize
the measure $\mu$. Fix an arbitrary measure $\mu_{0}$ such that
$\mathcal{F}(\mu)<\infty$, $H(\mu)<\infty$, let $\mu_{1}=\mu$,
and for $t\in[0,1]$ let $\mu_{t}$ be the displacement interpolation.
Then $H(\mu_{t})<\infty$ for all $t\in[0,1[$ (see Lemma \ref{lem:stability-wrt-beta}),
and by the above argument it then holds that $\mathcal{F}^{(N)}(\mu_{t}^{\otimes N})\rightarrow\mathcal{F}(\mu_{t})$.
But by $\lambda$-convexity and lower semicontinuity, $t\mapsto\mathcal{F}(\mu_{t})$
is continuous, and thus $\mathcal{F}(\mu_{t})\rightarrow\mathcal{F}(\mu)$
as $t\rightarrow1$. By a diagonal argument we can then find a sequence
$t_{N}$ such that $\mathcal{F}^{(N)}(\mu_{t_{N}}^{\otimes N})\rightarrow\mathcal{F}(\mu)$,
completing the proof. \end{proof}
\begin{lem}
\label{lem:min mov relative to}Let $F$ be a lsc functional on a
metric space $(Y,d)$ with the property that at for any given element$y_{0}$
in $(Y,d)$ there is an EVI gradient flow (with parameter $\lambda)$
emanating from $y_{0}$ and which can be realized as a limit of discrete
minimizing movements. Denote by $\mathcal{F}$ the lsc linear functional
on $\mathcal{P}_{2}(Y,d)$ associated to $F.$ Then, for any given
element $\Gamma_{0}$ of the form $\Gamma_{0}=\delta_{y_{0}}$ in
$\mathcal{P}_{2}(Y,d)$ there is an EVI gradient flow (with parameter
$\lambda)$ emanating from $\Gamma_{0},$ namely $\Gamma_{0}:=\delta_{y_{t}},$
where $y_{t}$ is the EVI gradient flow of $F$ emanating from $y_{0}.$
In particular, $\mathcal{F}$ admits mininimizing movements relative
to $\mathcal{D}.$\end{lem}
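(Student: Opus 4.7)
The plan rests on two elementary facts that together trivialize Wasserstein computations at Dirac measures. First, $\mathcal{F}$ is affine: $\mathcal{F}(\Gamma) = \int_Y F\,\Gamma$. Second, since any coupling with first marginal $\delta_z$ equals $\delta_z\otimes\Gamma'$, one has
\[
W_2^2(\delta_z,\Gamma') \;=\; \int_Y d^2(z,y)\,\Gamma'(dy),\qquad \forall\,\Gamma'\in\mathcal{P}_2(Y).
\]
These two identities reduce every Wasserstein computation involving Dirac-valued curves to a $\Gamma'$-averaged pointwise computation in $(Y,d)$.

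\emph{Step 1 (EVI).} To verify that $\Gamma_t := \delta_{y_t}$ satisfies the EVI for $\mathcal{F}$ with the same constant $\lambda$, I would start from the integrated form of the $(Y,d)$-EVI for $y_t$: for every $z$ with $F(z)<\infty$ and every $t\geq 0$,
\[
\tfrac{1}{2}e^{\lambda t}d^2(y_t,z)\;-\;\tfrac{1}{2}d^2(y_0,z)\;\leq\;\int_0^t e^{\lambda s}\bigl(F(z)-F(y_s)\bigr)\,ds,
\]
which is equivalent to the differential EVI by \cite[Cor.~4.3.3]{a-g-s}. I would then integrate this inequality against $\Gamma'\in\mathcal{P}_2(Y)$ with $\mathcal{F}(\Gamma')<\infty$ and apply Fubini on the right-hand side (permissible since $\mathcal{F}(\Gamma')<\infty$ forces $F$ to be $\Gamma'$-integrable, together with the lower bound on $F$ intrinsic to the existence of its EVI flow). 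In view of the two identities above the outcome is exactly the integrated EVI for $\mathcal{F}$ at the Dirac curve $\Gamma_t$ with test point $\Gamma'$, and hence the EVI.

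\emph{Step 2 (Minimizing movements).} One step of the Minimizing Movement scheme for $\mathcal{F}$ starting from $\Gamma_j=\delta_{y_j}$ amounts to minimizing
\[
\Gamma\;\mapsto\;\int_Y\Bigl[\tfrac{1}{2\tau}d^2(y,y_j)+F(y)\Bigr]\,\Gamma(dy)
\]
over $\mathcal{P}_2(Y)$. Being the $\Gamma$-average of a single function on $Y$, its minimizers are exactly the probability measures supported on the minimizers of the bracket, so the unique $\mathcal{P}_2(Y)$-level minimizer is $\delta_{y_{j+1}}$, where $y_{j+1}$ is the (by hypothesis uniquely determined) discrete movement of $F$ at $y_j$. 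Iterating and passing to the limit $\tau\to 0$, the isometric embedding $y\mapsto\delta_y$ lifts the $(Y,d)$-convergence $y^\tau(t)\to y_t$ to $\Gamma^\tau(t)\to\delta_{y_t}$ in $\mathcal{P}_2(Y)$, which is the asserted minimizing-movement realization relative to $\mathcal{D}$.

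\emph{Main obstacle.} The only subtle point is the null set in the differential EVI: for fixed $z$ the inequality holds only a.e.\ in $t$ and the exceptional set may depend on $z$. I would circumvent this by working throughout with the integrated EVI, which holds for every $t\geq 0$ and every admissible $z$, so that the $z$-integration against $\Gamma'$ requires no measurable-selection argument.
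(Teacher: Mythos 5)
Your proof is correct, and it rests on the same two observations that the paper's terse reference to Lemma~\ref{lem:strong sing} is pointing at: linearity of $\mathcal{F}$ on $\mathcal{P}_2(Y)$, and the identity $W_2^2(\delta_z,\Gamma')=\int_Y d^2(z,\cdot)\,\Gamma'$, which together reduce every Wasserstein computation at Dirac measures to a $\Gamma'$-averaged computation in $(Y,d)$. You differ from the paper in two ways, both improvements in explicitness. In Step~2 you replace the paper's appeal to Choquet's theorem (used there to argue that optimizers of the linear one-step functional $J_{j+1}$ at a Dirac base point are themselves Dirac) with the elementary remark that a linear functional $\Gamma\mapsto\int g\,\Gamma$ over probability measures is minimized exactly by measures carried on the set of minimizers of $g$; this is cleaner and needs no abstract extremality theorem. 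More substantially, your Step~1 supplies something the paper's pointer to Lemma~\ref{lem:strong sing} leaves implicit: the extremality argument alone only shows that the minimizing-movement scheme for $\mathcal{F}$ started at $\delta_{y_0}$ stays in $\mathcal{D}$ and converges to $\delta_{y_t}$, whereas the EVI for $\mathcal{F}$ against \emph{arbitrary} test measures $\Gamma'\in\mathcal{P}_2(Y)$ (as opposed to Dirac test points, which come for free from the isometry $y\mapsto\delta_y$) genuinely requires integrating the $(Y,d)$-EVI against $\Gamma'$ as you do. Your switch to the integrated form of the EVI, to sidestep the $z$-dependent exceptional null set in the a.e.\ differential form, is precisely the right device and is what licenses the Fubini step. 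The one point worth stating explicitly rather than gesturing at a bound \emph{intrinsic to the existence of the EVI flow}: Fubini needs $F^-\in L^1(\Gamma')$, which in the paper's setting follows from the quadratic coercivity hypothesis~\eqref{eq:coercivity type condition} (restricted to $(Y,d)$ via $y\mapsto\delta_y$, giving $F(z)\geq-\tfrac{1}{\tau_*}d^2(z,y_*)-C$) together with $\Gamma'\in\mathcal{P}_2(Y)$.
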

\begin{proof}
This is an abstract version of the third point in Lemma \ref{lem:strong sing}
below and is proved in exactly the same way using that $\mathcal{F}$
is linear. 
\end{proof}
Now we discretize time, with mesh $\tau,$ and consider the minimizing
movements $\Gamma_{N}^{\tau}(t)$ and $\Gamma^{\tau}(t)$ of $\mathcal{F}^{(N)}$
and $\mathcal{F},$ respecively. By Lemma \ref{lem:gamma relative}
we have that, for any fixed $\Gamma\in\mathcal{P}_{2}(Y)$ 
\[
\mathcal{J}^{(N)}(\cdot):=\frac{1}{2\tau}d(\cdot,\Gamma)^{2}+\mathcal{F}^{(N)}\rightarrow\mathcal{J}(\cdot):=\frac{1}{2\tau}d(\cdot,\Gamma)^{2}+\mathcal{F}
\]
 in the sense of relative strong Gamma-convergence, as $N\rightarrow\infty.$
But then it follows from basic properties of Gamma-convergence, using
that $\mathcal{J}^{(N)}(\cdot)$ is uniformly coercive and compactness
properties in the Wasserstein space,  that the corresponding minimizers
converge. Hence, starting with $\Gamma_{0}:=\delta_{\mu_{0}}$ it
follows by induction that 
\[
\lim_{N\rightarrow\infty}\Gamma_{N}^{\tau}(t)=\Gamma^{\tau}(t),
\]
 where $\Gamma^{\tau}(t)\in\mathcal{D}\subset\mathcal{P}_{2}(Y)$
at any fixed time $t.$ 

To conclude the proof of Theorem \ref{thm:general} we just have to
make sure that the error terms appearing when comparing $\Gamma_{N}^{\tau}(t)$
with $\Gamma_{N}(t),$ when $\tau\rightarrow0$ can be uniformly controlled
as $N\rightarrow\infty.$

\emph{Step 1:} First assume that $\mu_{0}\in\{F_{\beta}<\infty\}$
and $\limsup F_{\beta}^{(N)}(\mu^{(N)})<\infty$. By Theorem \ref{thm:existence of evi}
the gradient flow $\mu_{t}$ of $F$ emanating from a given $\mu_{0}$
exists and is uniquely determined. We let $\Gamma_{t}:=\delta_{\mu_{t}}$
be the corresponding flow on $\mathcal{P}_{2}(\mathcal{P}_{2}(X)).$
Consider the fixed time interval $[0,T]$ and fix a small time step
$\tau>0.$ Denote by $\mu^{\tau}(t)$ the discretized minimizing movement
of $F(\mu)$ with time step $\tau$ and set $\Gamma_{t}^{\tau}:=\delta_{\mu_{t}^{\tau}}.$
For any fixed $t\in]0,T[$ we then have, by the triangle inequality, 

\[
d(\Gamma_{N}(t),\Gamma(t))\leq d(\Gamma_{N}(t),\Gamma_{N}^{\tau}(t))+d(\Gamma(t),\Gamma^{\tau}(t))+d(\Gamma_{N}^{\tau}(t),\Gamma^{\tau}(t))
\]
By the isometry property in Lemma \ref{lem:isometries} and the assumed
convexity properties we have, by \cite[Theorem 4.0.4,4.0.7, p.79]{a-g-s},
that $d(\Gamma_{N}(t),\Gamma_{N}^{\tau}(t))\leq C\tau^{1/2}$ (uniformly
in $N)$ and $d(\Gamma(t),\Gamma^{\tau}(t))\leq C\tau^{1/2}.$ Moreover,
by \ref{lem:min mov relative to},$\lim_{N\rightarrow\infty}d(\Gamma_{N}^{\tau}(t),\Gamma^{\tau}(t))=0$
for any fixed $\tau.$ Hence, letting first $N\rightarrow\infty$
and then $\tau\rightarrow0$ gives $\lim_{N\rightarrow\infty}d(\Gamma_{N}(t),\Gamma(t))=0,$
which concludes the proof.

\emph{Step 2:} The case when $F(\mu_{0})<\infty$

Set $\nu_{0}^{(N)}:=\mu_{0}^{\otimes N}$ and denote by $\nu_{t}^{(N)}$the
EVI gradient flow of $\mathcal{F}^{(N)}$ emanating from $\nu_{0}^{(N)}.$
Let $\epsilon_{N}(t):=N^{-1}d_{2}(\nu_{t}^{(N)},\mu_{t}^{(N)}).$
By Lemma \ref{lem:sanov type} and the isometry properties in Lemma
\ref{lem:isometries} $\epsilon_{N}(0)\rightarrow0.$ Hence, by the
$\lambda-$contractivity in Theorem \ref{thm:existence of evi} $\epsilon_{N}(t)\leq e^{-\lambda t}\epsilon_{N}(0)\rightarrow0$
for any fixed postive $t>0.$ But then the desired convergence follows
from the previous step, using the triangle inequality. 

\emph{Step 3: }The case of a general $\mu_{0}\in\mathcal{P}_{2}(\R^{D})$

By assumption there exists a sequence $\mu_{j,0}$ in $\mathcal{P}_{2}(\R^{D})$
such that $d(\mu_{j,0},\mu_{0})\leq1/j$ and $F(\mu_{j,0})<\infty.$
We define $\nu_{j}^{(N)}(t)$ as before, up to replacing $\mu_{0}(:=\nu_{\infty})$
with $\nu_{j}$ and then set $\Gamma_{j,N}(t):=(\delta_{N})_{*}\nu_{j}^{(N)}(t)$.
By the triangle inequality we have, for any fixed $j,$ 
\[
d(\Gamma_{N}(t),\Gamma(t))\leq d(\Gamma_{j,N}(t),\Gamma_{j}(t))+d(\Gamma(t),\Gamma_{j}(t))+d(\Gamma_{N}(t),\Gamma_{j,N}(t)),
\]
 where the first term tends to zero as $N\rightarrow\infty$ by the
previous step. By construction the second term satisfies $d(\Gamma(0),\Gamma_{j}(0))\leq1/j$
and hence, by $\lambda-$contractivity, $d(\Gamma(t),\Gamma_{j}(t))\leq e^{-\lambda t}/j.$
Similarly, using the triangle inequalty again, the third term satisfies

\[
d(\Gamma_{N}(0),\Gamma_{j,N,}(0))\leq N^{-1}d_{2}(\mu^{(N)}(0),\mu_{0}^{\otimes N})+1/j:=\epsilon_{N}+1/j
\]
where$\epsilon_{N}\rightarrow0$ as $N\rightarrow\infty$ (as in Step
1). Hence, by $\lambda-$contractivity, $d(\Gamma_{N}(t),\Gamma_{j,N,}(t))\leq e^{-\lambda t}(\epsilon_{N}+1/j).$
Accordingly, letting first $N\rightarrow\infty$ and then $j\rightarrow\infty$
concludes the proof.

\subsection{\label{sub:Applications}General structure of the applications}

For applications to the purely deterministic Setting $1$ (described
in the introduction of the paper) one simply take $\mu^{(N)}$ to
be the normalized $S_{N}-$orbit in $X^{N}$of the Dirac measure supported
at $(x_{1}(0),....,x_{N}(0)).$ In the Settting $2$ and $3$ in the
introduction of the paper one takes $\mu^{(N)}=\mu^{\otimes N}.$
Before developing these applications in some particular settings in
Sections \ref{sec:Singular-pair-interactions}, \ref{sec:Further-examples},
we make some remarks about more general situations where the assumptions
in Section \ref{sub:The-assumptions-on} hold. First of all, as shown
in \cite{b-=0000F6}, the assumptions hold when 
\begin{equation}
\frac{1}{N}E^{(N)}(x_{1},x_{2},....,x_{N})=E(\delta_{N})+o(1),\label{eq:limit of EN intro}
\end{equation}
when $E$ is uniformly Lipchitz continuous and $\lambda-$convex and
the error term tends to zero, as $N\rightarrow\infty$ (for $x_{i}$
uniformly bounded). But the main point in the present paper is that
the assumptions are also satisified in some naturally occuring very
singular situations. For example, Assumption $1$ is satisfied if
one starts with a ``polynomial'' functional $E(\mu)$ on $\mathcal{P}(X),$
i.e. 
\[
E(\mu)=\sum_{m=1}^{M}\int_{X^{m}}w_{m}\mu^{\otimes m}
\]
where $w_{m}$ are assumed upper semi-continuous functions $X^{m}\rightarrow]-\infty,\infty]$
in $L_{loc}^{1}(X^{m}),$ which are smooth (or continuous) on the
open subset of configurations in $X^{m}$ where no two points coincide.
Then one can then definie a ``renormalized'' interaction $N-$particle
interaction $E^{(N)}(x_{1},...x_{N})$ by setting 
\[
E^{(N)}(x_{1},...x_{N}):=\frac{1}{N^{(m-1)}}\sum_{m=1}^{M}\sum_{I}w_{m}(x_{i_{1}},...,x_{i_{m}}),
\]
 where the inner sum rums over all multiindices $I=(i_{1_{i}},...,i_{m})$
of length $m$ and with the property that no two indices of $I$ coincide.
Then $E^{(N)}(x_{1},...x_{N})$ is finite for generic configurations,
or more precisely on the complement of the fixed point locus of the
$S_{N}-$action on $X^{N}$ (but the equality \ref{eq:limit of EN intro}
does not hold as the right hand side is identically $\infty$ if some
$w_{m}$ takes the value $\infty).$ Moreover, it can be shown that
the Assumption $1$ is valid (as discussed below). However, the main
issue is the convexity of the corresponding mean free energy, which
in particular implies that $E^{(N)}(x_{1},...x_{N})$ must be $\lambda-$convex
on the interior of any fundemental domain $\Lambda$ for the $S_{N}-$action
on $X^{N}.$ As it turns out the latter condition is, in fact, also
sufficent for the convexity assumption $2$ two hold when $D=1$ (as
is shown precisely as in the proof of Propostion \ref{prop:hidden convexity}
below). This is the reason that we will mainly consider the one dimensional
setting in Section \ref{sec:Singular-pair-interactions}.

\section{\label{sec:Singular-pair-interactions}Applications to singular pair
interactions in 1D}

In the following it will be convenient to use the following (non-standard)
terminilogy: a continuous function $\psi(x)$ on a convex domain of
$\R^{D}$ is \emph{quasi-convex} if it is $\lambda-$convex, i.e.
it can be written as $\psi(x)=\phi(x)+\lambda|x|^{2}/2$ for some
convex function $\phi$ and if, for $|x|$ sufficently large, $\psi(x)=\phi(x)+o(|x|^{2})$
for some (possibly different) convex function $\phi.$

\subsection{Setup\label{sub:Setup sing}}

Let $w(s)$ be a quasi-convex real-valued function on $]0,\infty[$
such that there exist positive constants $A$ satisfying 
\[
\liminf_{s\rightarrow0}w(s)\geq-A
\]
Extend $w$ to a lsc function $w:\,\R\rightarrow]-\infty,\infty]$
by demanding that $w(-s)=w(s)$ for $s\neq0$ and $w(0):=\liminf_{s\rightarrow0}w(s).$
We define the corresponding \emph{pair interaction function} by 
\[
W(x,y):=w(x-y)(=w(|x-y|)
\]
 which is called \emph{repulsive (attractive)} if $w(s)$ is decreasing
(increasing) on $]0,\infty[.$ Given a quasi-convex function $V(x)$
we define the corresponding $N-$point interaction energy by 
\begin{equation}
E_{W,V}^{(N)}(x_{1},x_{2},....,x_{N}):=\frac{1}{N-1}\frac{1}{2}\sum_{i\neq j}w(x_{i}-x_{j})+V(x_{i})\label{eq:def of interaction energy for pair inter}
\end{equation}

\begin{rem}
Note that even if $\lambda>0$, the function $W(x,y)$ is at best
only $0$-convex due to translation invariance. Since a $\lambda$-convex
function for $\lambda>0$ is also $0$-convex , we will to simplify
notation in the sequel implicitly that assume $\lambda\leq0$.
\end{rem}
We will consider the general setting of an $N-$dependent inverse
temperature $\beta_{N}$ such that 
\[
\lim_{N\rightarrow\infty}\beta_{N}:=\beta\in]0,\infty].
\]
 Then the corresponding SDEs can be formally written as 
\begin{equation}
dx_{i}(t)=-\frac{1}{(N-1)}\sum_{j\neq i}(\nabla w)(x_{i}-x_{j})dt-(\nabla V)(x_{i})dt+\sqrt{\frac{2}{\beta_{N}}}dB_{i}(t),\label{eq:sde for v and w}
\end{equation}
(see Section \ref{sub:The-forward-Kolmogorov}). 
\begin{lem}
\label{lem:approx with cont lambda conv}There exists a sequence of
continuous functions $w_{R}(t)$ on $[0,\infty[$ increasing to $w_{R}$
such that $w_{R}$ is quasi-convex, with a $\lambda$ independent
of $R$ and such that $w_{R}=a_{R}t+o(t^{2})$ for some $a_{R}\in\R$
when $|x|\geq R$\end{lem}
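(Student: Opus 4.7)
The plan is a double-truncation: replace $w$ near $0$ by an affine extension (to cure possible unboundedness at the origin) and replace $w$ on $[R,\infty[$ by an affine function (so that the quadratic growth forced by the $\lambda t^{2}/2$ term is absent at infinity). The resulting $w_{R}$ will agree with $w$ on the middle interval $[1/R,R]$ and be affine outside it, which will automatically give the asymptotic form $w_{R}(t)=a_{R}t+o(t^{2})$.

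Concretely, first I would use the $\lambda$-convex decomposition $w(t)=\tilde{\phi}(t)+\lambda t^{2}/2$ on $]0,\infty[$ with $\tilde{\phi}$ convex, together with the quasi-convex asymptotic decomposition $w(t)=\phi_{\infty}(t)+o(t^{2})$ as $t\to\infty$ with $\phi_{\infty}$ convex. For each large $R$, I would set $w_{R}=w$ on $[1/R,R]$; on $[0,1/R]$ extend by the affine tangent of $\tilde{\phi}$ at $1/R$ plus the quadratic $\lambda t^{2}/2$; and on $[R,\infty[$ set $w_{R}(t)=a_{R}t+b_{R}$ with $a_{R}$ equal to the right derivative of $\phi_{\infty}$ at $R$ and $b_{R}$ determined by continuity at $t=R$. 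By construction $w_{R}$ is continuous on $[0,\infty[$ and equals $a_{R}t+b_{R}=a_{R}t+o(t^{2})$ on $[R,\infty[$.

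It then remains to verify $\lambda$-convexity and the monotone convergence $w_{R}\uparrow w$. For $\lambda$-convexity, each of the three pieces is $\lambda$-convex (an affine piece is $\lambda$-convex for any $\lambda\leq 0$), and the gluing at $t=1/R$ and $t=R$ respects the slope condition for the convex function $u:=w_{R}-\lambda t^{2}/2$, since we glue using tangent lines, so the right derivative of $u$ jumps up at each matching point, which is exactly what convexity permits. For the monotone convergence, once $R$ is large enough that $t\in[1/R,R]$ one has $w_{R}(t)=w(t)$, so pointwise $w_{R}(t)\to w(t)$; and the property $w_{R_{1}}\leq w_{R_{2}}$ for $R_{1}<R_{2}$ reduces to the tangent lines lying below the relevant convex parts.

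The main obstacle is ensuring that the affine extension on $[R,\infty[$ really lies below $w$ when $\lambda<0$: the quadratic term $\lambda t^{2}/2$ drags $w$ downward, so any affine function with positive slope will eventually overshoot $\tilde{\phi}+\lambda t^{2}/2$. This is precisely where the quasi-convex structure at infinity enters. Using the tangent of $\phi_{\infty}$ (not of $\tilde{\phi}$ or $w$), together with $w-\phi_{\infty}=o(t^{2})$, guarantees the required bound for all $R$ sufficiently large, possibly after a small downward adjustment of $b_{R}$ to absorb the $o(t^{2})$ error term; this step is what makes the quasi-convexity assumption essential rather than just $\lambda$-convexity.
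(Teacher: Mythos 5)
Your construction follows the same double-truncation strategy as the paper's own proof: keep $w$ on a middle interval, replace $w$ near $0$ by the tangent line of the convex part $\tilde\phi$ (plus the quadratic), and replace $w$ near $\infty$ by an affine function. You are also right to flag that the quasi-convexity at infinity is exactly what is needed to make the truncation at $R$ work. However, the way you implement the truncation at $R$ leaves two genuine gaps, and only one of them is acknowledged.

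First, the $\lambda$-convexity of the gluing at $t=R$ does not follow from ``gluing by tangent lines''. Writing $u:=w_R-\lambda t^2/2$, on $[1/R,R]$ one has $u=\tilde\phi$, so the left derivative of $u$ at $R$ is $\tilde\phi'(R^-)=w'(R^-)-\lambda R$; on $[R,\infty[$ one has $u=a_Rt+b_R-\lambda t^2/2$, so the right derivative at $R$ is $a_R-\lambda R$. Convexity of $u$ requires $a_R\ge w'(R^-)$. But your $a_R=\phi_\infty'(R^+)$ is the slope of a \emph{different} function, and $w'=\phi_\infty'+e'$ with $e=w-\phi_\infty=o(t^2)$, so you actually need $e'(R)\le 0$, which is not part of the hypotheses. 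The statement that ``the right derivative of $u$ jumps up at each matching point'' is true at $1/R$ (where the tangent is taken of $\tilde\phi$ itself), but it does not automatically hold at $R$. Had you used the tangent of $\tilde\phi$ at $R$ instead, the jump would be zero, but then $w_R=\tilde\phi(R)+\tilde\phi'(R)(t-R)+\lambda t^2/2$ on $[R,\infty[$ retains a $\lambda t^2/2$ term and fails the subquadraticity requirement; this is precisely the tension the construction must resolve, and it is not resolved in your proposal.

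Second, your acknowledged obstacle — that the affine extension $a_Rt+b_R$ may fail to lie below $w$ on $[R,\infty[$ — cannot be repaired by ``a small downward adjustment of $b_R$''. You have $a_Rt+b_R\le\phi_\infty(t)$ by convexity, but $w=\phi_\infty+e$ with $e=o(t^2)$, and $e$ may be negative and unbounded (e.g. $e(t)\sim -t^{3/2}$); moreover $\phi_\infty$ may be exactly affine past $R$, so the defect $\phi_\infty-(a_Rt+b_R)$ need not grow to absorb $|e|$. A constant shift of $b_R$ therefore cannot force $a_Rt+b_R\le w$ on all of $[R,\infty[$. The paper's own wording (``replace $\phi$ on $[R,\infty[$ with the corresponding affine function which is tangent to $\phi$'') suggests retaining the error $e$ so that $w_R=a_Rt+b_R+e$, which does guarantee $w_R\le w$ and $w_R=a_Rt+o(t^2)$; but with that choice $u=w_R-\lambda t^2/2=(a_Rt+b_R-\phi_\infty)+\tilde\phi$ is a convex function minus a convex function and its $\lambda$-convexity again needs justification. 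Your proof should make a definite choice for $w_R$ on $[R,\infty[$ and verify both the slope jump at $R$ and the domination $w_R\le w$ against the same choice, rather than appealing separately to $\tilde\phi$ for convexity and to $\phi_\infty$ for domination.
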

\begin{proof}
The only issue is the $\lambda-$convexity close to $t=0,$ for $t\geq0$
and it may be obtained as follows. First assume that $\lambda=0$
and fix and number $\epsilon>0.$ Let $w^{(\epsilon)}$ be the convex
function on $[0,\infty[$ coinciding with $w$ on the complement of
$]0,\epsilon]$ and on $]\epsilon,\infty[$ it coincides with the
affine function defined by the left tangent line of $w$ at $t=\epsilon.$
By convexity $w^{(\epsilon)}$ increases to $w$ as $\epsilon\rightarrow0.$
In general, if $w$ is $\lambda-$convex for some $\lambda\leq0$
we can decompose $w(t)=:\phi(t)-\lambda t^{2}/2$ where the first
term is convex. Replacing $\phi(t)$ by $\phi^{(\epsilon)}(t)$ as
above and defining $\phi^{(\epsilon)}:=v^{(\epsilon)}(t)-\lambda t^{2}/2$
then gives a sequence of continuos $\lambda-$convex functions on
$[0,\infty[$ increasing to $w.$ Similarly, by decompising $w(t)=\phi(t)+o(t^{2})$
for $|t|\geq R$ for a (possibly different) convex function $\phi$
we can replace $\phi$ on $[R,\infty[$ with the corresponding affine
function which is tangent to $\phi$ as $t$ increases to $R.$ \end{proof}
\begin{rem}
In the case when $w$ is decreasing on $]0,\infty[$ we could simply
take $w_{R}(t):=w(t+1/R)$ above.
\end{rem}
We fix a sequence of quasi-convex continuous functions $w_{R}$ and
$V_{R}$ which are bounded from above and increase to to $w$ and
$V,$ respectively, where $R$ will be referred to as the ``truncation
parameter'' (such sequences exist by the assumption on lower semi-continuity). 
\begin{example}
(power-laws) Our setup applies in particular to the \emph{repulsive}
power-laws 
\[
w(|x|)\sim|x|^{s}\,\,\,s\in]-1,0[,\,
\]
whose role in the case $s=0$ it played by the repulsive logarithmic
potential $w(|x|)=-\log|x|$ and for $s>0$ by 
\[
w(x)\sim-|x|^{s},\,\,\,\,\,s\in]0,1]
\]
The results also apply in the following cases of\emph{ attractive}
potentials (see Section\ref{sub:Convex-interactions-in} for $D>1):$
\[
w(x)\sim|x|^{s}\,\,\,s\in[1,\infty[
\]

\end{example}
Similarly, the assumptions are satisfied by linear combinations of
interactions whose asymptotics as $|x|\rightarrow0$ and $|x|\rightarrow\infty$
are comparible to the power-laws as above. In particular, this is
the case for the interactions used in applications to swarming and
flocking models, which are usually taken to be repulsive at a short
distances and attractive at large distances (see \cite{cch} and references
therein). For example, this is the case for the Morse potential used
in swarming models: 
\[
w(|x|)=C_{R}e^{-|x|/l_{R}}-C_{A}e^{-|x|/l_{A}}
\]
which is clearly $\lambda-$convex on $]0,\infty[$ for some (possibly
negative) $\lambda.$ When $C_{R}/C_{A}>1$ and $l_{R}/l_{A}<1$ it
is repulsive/attractiv at small/large distances.

\subsection{Propagation of chaos in the large $N-$limit and convexity}
\begin{prop}
\label{prop:macroscopic pair energy}The functional 
\begin{equation}
E_{W,V}(\mu):=\frac{1}{2}\int_{\R\times\R}W\mu\otimes\mu+\int V\mu:=\mathcal{W}(\mu)+\mathcal{V}(\mu)\in]\infty,\infty]\label{eq:macro scopic pair energy}
\end{equation}
is well-defined, and lsc on $\mathcal{P}_{2}(\R)$ and satisfies the
coercivity property \ref{eq:coercivity type condition}. Moreover,
\textup{
\begin{equation}
\frac{1}{N}\int E_{W,V}^{(N)}(x_{1},x_{2},....,x_{N})\mu^{\otimes N}=E_{W,V}(\mu),\label{eq:exact formula for mean energy}
\end{equation}
if $E_{W,V}(\mu)<\infty$ and in general} 
\begin{equation}
E_{W,V}(\mu)=\lim_{R\rightarrow\infty}E_{W_{R},V_{R}}(\mu)\label{eq:energy as limit of truncations}
\end{equation}
where $E_{W_{R},V_{R}}(\mu)$ is continuous along any sequence $\mu_{j}$
converging weakly in $\mathcal{P}(\R)$ with a uniform bound on the
second moments. In particular, taking $\mu=\delta_{N}(x_{1},...,x_{N})$
 we have
\[
E_{W_{R},V_{R}}(\delta_{N}(x_{1},...,x_{N}))+O(\frac{C_{R}}{N})=\frac{1}{N}E_{W_{R},V_{R}}^{(N)}(x_{1},x_{2},....,x_{N})
\]
\end{prop}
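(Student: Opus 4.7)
The plan is to reduce everything to three basic facts: (a) a quadratic lower bound on $w$ and $V$; (b) monotone convergence applied to the truncations $w_R\nearrow w$ and $V_R\nearrow V$; and (c) a direct Fubini calculation linking $E_{W,V}^{(N)}$ to $E_{W,V}$. First I would derive the lower bound. Writing $w(s)=\phi(s)+\lambda s^2/2$ on $]0,\infty[$ with $\phi$ convex, using $\liminf_{s\to 0}w(s)\geq -A$, and using the asymptotic $w(s)=\phi_\infty(s)+o(s^2)$ at infinity with $\phi_\infty$ convex (hence affinely minorised), one obtains $w(s)\geq -C(1+s^2)$ on $\R$; the analogous bound $V(x)\geq -C(1+|x|^2)$ follows from quasi-convexity of $V$. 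For any $\mu\in\mathcal{P}_2(\R)$ this yields $E_{W,V}(\mu)\geq -C(1+\int|x|^2 d\mu)$, so that $E_{W,V}$ is well defined in $]-\infty,\infty]$ and satisfies \ref{eq:coercivity type condition}. For lower semicontinuity on $\mathcal{P}_2(\R)$, I would split
\[
w(x-y)=\bigl[w(x-y)+C(1+(x-y)^2)\bigr]-C(1+(x-y)^2).
\]
The first bracket is non-negative and lsc (with $w(0):=\liminf_{s\to 0}w(s)\in[-A,+\infty]$ allowed), so its integral against $\mu\otimes\mu$ is weakly lsc, while the second bracket has quadratic growth and is continuous on $\mathcal{P}_2(\R)$; the potential $V$ is handled in the same fashion.

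The exact formula in the case $E_{W,V}(\mu)<\infty$ is a direct Fubini computation: the symmetry of $E_{W,V}^{(N)}$ together with the product structure of $\mu^{\otimes N}$ gives
\[
\int_{\R^N}E_{W,V}^{(N)}\,d\mu^{\otimes N}=\frac{N(N-1)}{2(N-1)}\iint w(x-y)\,d\mu(x)\,d\mu(y)+N\int V\,d\mu=N\,E_{W,V}(\mu),
\]
and dividing by $N$ yields the stated identity. The limit $E_{W,V}(\mu)=\lim_{R\to\infty} E_{W_R,V_R}(\mu)$ follows from monotone convergence applied to the non-negative integrands $w_R(s)+C(1+s^2)\nearrow w(s)+C(1+s^2)$ (and analogously for $V$). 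Continuity of $E_{W_R,V_R}$ along weakly convergent sequences with uniformly bounded second moments is a consequence of the structure of the truncations from Lemma \ref{lem:approx with cont lambda conv}: $w_R$ is continuous with at most linear growth at infinity, so $|w_R(s)|\leq C_R(1+|s|^2)$, and integrals of such sub-quadratic continuous functions are continuous on $\mathcal{P}_2(\R)$ (by the standard uniform-integrability characterisation of $\mathcal{P}_2(\R)$-convergence); the same applies to $V_R$.

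Finally, for the empirical-measure identity one expands
\[
E_{W_R,V_R}(\delta_N)=\frac{1}{2N^2}\sum_{i,j}w_R(x_i-x_j)+\frac{1}{N}\sum_i V_R(x_i),
\]
\[
\frac{1}{N}E_{W_R,V_R}^{(N)}=\frac{1}{2N(N-1)}\sum_{i\neq j}w_R(x_i-x_j)+\frac{1}{N}\sum_i V_R(x_i).
\]
Their difference consists of the diagonal contribution $\frac{w_R(0)}{2N}$ together with $\bigl(\tfrac{1}{N^2}-\tfrac{1}{N(N-1)}\bigr)\cdot\tfrac{1}{2}\sum_{i\neq j}w_R(x_i-x_j)$; using $|w_R|\leq C_R(1+|s|^2)$ and $\sum_{i\neq j}(x_i-x_j)^2\leq 2N\sum_i x_i^2$, the second piece is bounded by $\frac{C_R}{2N}+\frac{C_R\int|x|^2\,d\delta_N}{N-1}$, which is $O(C_R/N)$ uniformly on configurations whose second moment stays bounded.

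The main obstacle is the possible $+\infty$ value of $w$ at the origin (the strongly singular repulsive case): it forces one to work with the lsc extension of $w$ and to accept that the exact identity in the second line of the proposition can only hold when $E_{W,V}(\mu)<\infty$; this is precisely the reason why the continuous truncations $w_R$ and the monotone-limit reformulation are required in order to make Theorem \ref{thm:general} applicable to the singular pair interactions considered here.
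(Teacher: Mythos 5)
Your proof is correct and takes broadly the same route as the paper, but with a few genuine differences worth noting. The paper establishes well-definedness by splitting the domain $\R\times\R$ into $U_\delta=\{|x-y|>\delta\}$ and its complement, obtaining a lower bound on the $U_\delta$-part from the (quadratic) behaviour of $w$ on $[\delta,\infty[$ and using $W\geq A_\delta$ on the complement; you instead derive a global quadratic lower bound $w(s)\geq -C(1+s^2)$ from the quasi-convexity data, which simultaneously gives well-definedness, the coercivity \ref{eq:coercivity type condition}, and feeds the decomposition $w=[w+C(1+(\cdot)^2)]-C(1+(\cdot)^2)$ that you use to prove lsc directly (non-negative lsc part weakly lsc by Portmanteau, quadratic part continuous on $\mathcal{P}_2$). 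The paper instead infers lsc indirectly, as the increasing limit of the continuous truncated functionals $E_{W_R}$. Both are legitimate; your route makes the coercivity and lsc steps more self-contained, while the paper's route gets lsc for free once continuity of the truncations is established. For the truncation continuity, the paper uses the explicit tail estimate $\int_{|x|\geq k}W_R\mu_j\leq\sup_{|x|\geq k}(W_R/|x|^2)\int|x|^2\mu_j$ and the fact that $W_R=o(|x|^2)$ at infinity; you invoke the same sub-quadratic growth, though the displayed bound $|w_R(s)|\leq C_R(1+|s|^2)$ is not sub-quadratic as written — what Lemma \ref{lem:approx with cont lambda conv} actually gives is $w_R=a_Rt+o(t^2)$, hence $|w_R(s)|\leq C_R(1+|s|)$, and it is this linear (or at least $o(s^2)$) growth that powers the uniform-integrability argument. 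Finally, for the empirical-measure identity your calculation is more careful than the paper's: the paper attributes the whole $O(C_R/N)$ discrepancy to the missing diagonal terms $i=j$, whereas in fact there is a second contribution from the mismatch between the normalizations $1/N^2$ (in $E_{W_R}(\delta_N)$) and $1/N(N-1)$ (in $\frac1N E^{(N)}_{W_R}$); you account for both and correctly note that the off-diagonal contribution is $O(C_R/N)$ only under a uniform bound on the second moment of the configuration, which is implicit in the way the estimate is used in Theorem \ref{thm:w}.
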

\begin{proof}
To simplify the notation we assume that $V=0,$ but the general case
is similar. First note that the fact that $E_{W}(\mu)$ is well-defined
is trivial in case $\mu$ has compact support since then $W\geq-C$
on the support of $\mu.$ Formula \ref{eq:exact formula for mean energy}
then follows immediately from the definition, using the Fubini-Tonelli
theorem to interchange the order of integration. In the general case
we note that fixing $\delta>0$ and setting $U_{\delta}:=\{(x,y):\,|x-y|>\delta\}$
gives $\int_{U_{\delta}}W\mu\otimes\mu\geq-C_{\delta}\int|x-y|^{2}\mu\otimes\mu\geq2C_{\delta}\int(|x|^{2}+y|^{2})\mu\otimes\mu:=C'_{\delta}>\infty$
since $\mu\in\mathcal{P}_{2}(\R).$ Hence, $E_{W}(\mu):=\int_{\R\times\R}W\mu\otimes\mu:=\int_{U_{\delta}}W\mu\otimes\mu+\int_{U_{\delta}^{c}}W\mu\otimes\mu$
is well-defined, since $W\geq A_{\delta}$ on $U_{\delta}^{c}.$ The
convergence \ref{eq:energy as limit of truncations} as $R\rightarrow\infty$
then follows from the monotone and dominated convergence theorems.
We note that for any fixed $R$ the functional $E_{W_{R}}(\mu)$ is
continuous along any sequence $\mu_{j}$ converging weakly in $\mathcal{P}(\R)$
with a uniform bound on the second moments, as follows from the fact
that $W_{R}$ is continuous and bounded from above on any compact
subset $[-k,-k]$ of $\R$ together with the simple tail estimate
\begin{equation}
\int_{|x|\geq k}W_{R}\mu_{j}\leq\sup_{|x|\geq k}\frac{W_{R}(x)}{|x|^{2}}\int|x|^{2}\mu_{j},\label{eq:tail estimate}
\end{equation}
 which by the quasi-convexity assumption tends to zero, as $k\rightarrow\infty,$
uniformly in $j.$ Finally, since $E_{W}(\mu)$ is an increasing sequence
of continuous functionals $E_{W_{R}}(\mu)$ it follows that $E_{W}(\mu)$
is lower semi-continuous on $\mathcal{P}_{2}(\R).$ To prove the last
statement we note that, by definition, the error term in question
comes from the missing diagonal terms in the definition of $E_{W_{R}}^{(N)}(x_{1},x_{2},....,x_{N})$
corresponding to $i=j$ i.e. from 
\[
\frac{1}{N}\frac{1}{N-1}\sum_{i=1}^{N}w_{R}(x_{i}-x_{i})=\frac{1}{N}\frac{N}{N-1}w_{R}(0)=O(\frac{C_{R}}{N}),
\]
 which concludes the proof.\end{proof}
\begin{prop}
\label{prop:hidden convexity}(convexity). The mean energy functional
\textup{
\[
\mathcal{E}_{W,V}(\mu^{N}):=\frac{1}{N}\int_{\R^{N}}E_{W,V}^{(N)}\mu^{N},
\]
}restricted to the subspace of symmetric probability measures $\mathcal{P}_{2}(\R^{N})^{S_{N}}$
in $\mathcal{P}_{2}(\R^{N}),$ is $\min(\lambda,0)-$convex along
generalized geodesics with symmetric base $\nu_{N}.$ In particular,
the functional $E_{W,V}(\mu)$ is lsc and $\min(\lambda,0)-$convex
along generalized geodesics in $\mathcal{P}_{2}(\R)$ and satisfies
the coercivity condition \ref{eq:coercivity type condition}. \end{prop}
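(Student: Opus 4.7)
The plan is to exploit the one-dimensional feature that the closed chamber $C=\{x_1\le x_2\le\cdots\le x_N\}$ is a \emph{convex} fundamental domain for the $S_N$-action on $\R^N$, combined with a monotone regularization. First, I would use the sequences $w_R,V_R$ of Lemma \ref{lem:approx with cont lambda conv}: since $w_R\nearrow w$ and $V_R\nearrow V$, the mean energies $\mathcal{E}^{(N)}_{W_R,V_R}$ increase monotonically to $\mathcal{E}^{(N)}_{W,V}$ pointwise on $\mathcal{P}_2(\R^N)^{S_N}$, which yields strong $\Gamma$-convergence (the recovery sequence being the constant sequence). By the first bullet of Proposition \ref{prop:gener conv}, it then suffices to prove $\min(\lambda,0)$-convexity of the regularized mean energies with a constant uniform in $R$ and $N$.

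For the continuous regularized case, I would identify $\mathcal{P}_2(\R^N)^{S_N}$, with the normalized Wasserstein metric, with $\mathcal{P}_2(C)$ via the third isometry of Lemma \ref{lem:isometries}. The claim is that symmetric generalized geodesics with symmetric base $\nu_N$ correspond to genuine generalized geodesics in $\mathcal{P}_2(C)$: when the base and endpoints are $S_N$-invariant, the optimal Brenier map between them can be chosen $S_N$-equivariant (by uniqueness of Brenier together with the fact that pre- and post-composing with a permutation preserves the $L^2$ transport cost and the convexity of the Kantorovich potential), hence descends to an optimal map on the quotient $C$. On $C$, every difference $x_j-x_i$ with $j>i$ is non-negative, so each pair term $w_R(x_j-x_i)$ is the composition of the quasi-convex function $w_R$ on $[0,\infty[$ with a linear functional; a direct Hessian computation on $C$ using $\sum_{i\ne j}(v_i-v_j)^2\le 2N|v|^2$ shows that $\frac{1}{N}E^{(N)}_{W_R,V_R}$ is $\min(\lambda,0)$-convex on $C$ in the normalized Euclidean metric, uniformly in $N$. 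Lemma \ref{lem:generalized conv} applied on the convex domain $C$ then yields $\min(\lambda,0)$-convexity of $\mathcal{E}^{(N)}_{W_R,V_R}$ along generalized geodesics in $\mathcal{P}_2(C)$, which transfers back to symmetric generalized geodesics in $\mathcal{P}_2(\R^N)^{S_N}$.

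For the second part of the statement concerning the macroscopic functional $E_{W,V}$, I would combine the tensorization trick already used in Section \ref{sub:The-assumptions-on} with Proposition \ref{prop:macroscopic pair energy}. Given a generalized geodesic $\mu_s$ in $\mathcal{P}_2(\R)$ with base $\nu$, the tensor product $\mu_s^{\otimes N}$ is a symmetric generalized geodesic in $\mathcal{P}_2(\R^N)$ with symmetric base $\nu^{\otimes N}$, and $\mathcal{E}^{(N)}_{W,V}(\mu^{\otimes N}_s)\to E_{W,V}(\mu_s)$ as $N\to\infty$; passing to the limit in the $\min(\lambda,0)$-convexity inequality for $\mathcal{E}^{(N)}_{W,V}$ gives the same inequality for $E_{W,V}$. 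Lower semicontinuity of $E_{W,V}$ on $\mathcal{P}_2(\R)$ is already part of Proposition \ref{prop:macroscopic pair energy}. Finally, the coercivity estimate (\ref{eq:coercivity type condition}) follows from the elementary lower bound $w(s)\ge-A+\frac{\lambda}{2}s^2$ (a consequence of quasi-convexity on $]0,\infty[$ together with the uniform lower bound near the origin) and the analogous quadratic lower bound for the quasi-convex potential $V$.

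The main obstacle is the second step: verifying that symmetric generalized geodesics with symmetric base in $\mathcal{P}_2(\R^N)^{S_N}$ arise as pushforwards of generalized geodesics in $\mathcal{P}_2(C)$. This requires the $S_N$-equivariance of Brenier maps between symmetric measures combined with the convexity of $C$, which is where the hypothesis $D=1$ is essential: for $D>1$ the quotient $(\R^D)^N/S_N$ admits no realization as a convex subset of a Euclidean space, and the above descent argument breaks down.
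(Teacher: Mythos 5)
Your overall strategy---passing to the quotient chamber $C=\overline{\Lambda_N}=\{x_1\le\cdots\le x_N\}$, where $E^{(N)}$ is manifestly $\lambda$-convex, and then appealing to the isometry $\mathcal{P}_2(\R^N)^{S_N}\cong\mathcal{P}_2(C)$---is a genuinely different route from the one taken in the paper's proof of this proposition, though it is in the spirit of the ``alternative approach'' sketched in Section \ref{sub:An-alternative-approach}. The paper's own argument works directly on $\R^N$: it proves two facts about the optimal map $T$ between symmetric absolutely continuous measures, namely (a) $S_N$-equivariance (via the invariance of the Kantorovich potential) \emph{and} (b) order-preservation, i.e.\ $(x_i-x_j)(T(x)_i-T(x)_j)\ge 0$ (via cyclical monotonicity applied to transpositions $\sigma=(ij)$), and then decomposes $\int E^{(N)}\,d\mu_t^{(N)}$ as a sum over the permuted chambers $\sigma(\Lambda)$, using that $T_t$ preserves each chamber.

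The gap in your proposal is exactly at the spot where the paper proves claim (b). You argue that the Brenier map can be chosen $S_N$-equivariant and ``hence descends to an optimal map on the quotient $C$.'' But equivariance alone only gives a well-defined \emph{induced} map $\bar T:C\to C$ via the quotient projection; it does not imply that $T$ \emph{restricts} to a map $C\to C$, i.e.\ that $T(\Lambda)\subset\overline\Lambda$. These are the same only if $T$ is order-preserving. Without order-preservation, the generalized geodesic in $\mathcal{P}_2(\R^N)$, given by $\mu_s=\big((1-s)T_0+sT_1\big)_*\nu$, need not coincide with the generalized geodesic in $\mathcal{P}_2(C)$ obtained from $\bar T_0,\bar T_1$: in general $(1-s)\bar T_0(x)+s\bar T_1(x)\neq\overline{(1-s)T_0(x)+sT_1(x)}$, and the interpolation $T_s(x)$ can leave $\Lambda$, where you have no convexity of $E^{(N)}$. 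One can patch this either by importing the paper's cyclical-monotonicity argument, or by making the ``descent'' argument quantitative: by the isometry of Lemma \ref{lem:isometries} the cost of $\bar T_i$ in $\mathcal{P}_2(C)$ is at most the cost of $T_i$ in $\mathcal{P}_2(\R^N)$ with equality only if $T_i(\Lambda)\subset\overline\Lambda$ $\nu$-a.e.; optimality of $T_i$ then forces order-preservation. Either way, this step needs an explicit argument, not just the word ``descends.''

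Two minor remarks. First, invoking Lemma \ref{lem:generalized conv} ``on the convex domain $C$'' should be phrased by extending $E^{(N)}$ to $\chi_{\overline{\Lambda_N}}+E^{(N)}$ on all of $\R^N$, which is $\lambda$-convex there, as in Section \ref{sub:An-alternative-approach}. Second, your Hessian bookkeeping is slightly off: on $C$ one gets $D^2E^{(N)}\ge\lambda\frac{2N-1}{N-1}I$ from $\sum_{i\ne j}(v_i-v_j)^2\le 2N|v|^2$, so $\frac{1}{N}E^{(N)}$ is $\lambda$-convex in the \emph{unnormalized} Euclidean metric only for $N\ge 3$, and in the normalized metric one obtains a uniform bound of the order $2\lambda$ rather than $\lambda$. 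This does not affect the intended application (any uniform-in-$N$ modulus of semi-convexity suffices), but the precise constant should not be called $\min(\lambda,0)$ in the normalized metric.
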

\begin{proof}
We first claim that it is enough to prove the $\lambda-$convexity
for generalized geodesics in $\mathcal{P}_{2}(\R^{N})_{abs}^{S_{N}}$
such that the base $\nu_{N}$ is also in $\mathcal{P}_{2}(\R^{N})_{abs}^{S_{N}}.$
Indeed, by monotonicity the mean energy functionals corresponding
to the approximations $E_{W_{R},V_{R}}$ $\Gamma-$converge towards
$\mathcal{E}_{W,V}$ and hence we may (by the first point in Prop
\ref{prop:gener conv}) first assume that $E_{W,V}^{(N)}$ is continuous
and bounded with $V$ $\lambda-$convex on $\R$ and $w(t)$ $\lambda-$convex
for $t\geq0.$ Next, we observe that when $E_{W,V}^{(N)}$ continuous
and bounded the corresponding mean energy functional on $\mathcal{P}(\R^{N})$
is continuous wrt the weak topology. Now, given a generalized geodesic
$\mu_{s}$ in $\mathcal{P}_{2}(\R^{N})$ with a base $\nu$ in $\mathcal{P}_{2}(\R^{N})_{abs}$
we can approximate the end points weakly by measures with finite entropy
(by a simple convolution and truncation argument). By the convexity
of the entropy, the corresponding generalized geodesics $\mu_{s}^{(j)}$
have finite entropy for any fixed $s\in[0,1]$ and in particularly
$\mu_{s}^{(j)}$ is a curve in $\mathcal{P}_{2}(\R^{N})_{abs}.$ By
Prop \ref{prop:gener conv}, this proves the claim above. 

We will write $x:=(x_{1},..,x_{N})$ etc. Let $\mu_{0}^{(N)},$ $\mu_{1}^{(N)}$
and $\nu^{(N)}$ be three given symmetric measures in $\mathcal{P}_{2,ac}(\R^{N})$
and denote by $T_{0}$ and $T_{1}$ the optimal maps pushing forward
$\nu^{(N)}$ to $\mu_{0}^{(N)}$ and $\mu_{1}^{(N)},$ respectively.
Let $T_{t}:=(1-t)T_{0}+tT$ so that $\mu_{t}:=T_{t}\nu^{(N)}$ is
the corresponding generalized geodesic. The key point of the proof
is the following 

\emph{Claim:} $(a)$ $T_{t}$ commutes with the $S_{N}-$action and
$(b)$ $T_{t}$ preserves order, i.e. $x_{i}<x_{j}$ iff $T(x)_{i}<T(x)_{j}.$

The first claim $(b)$ follows directly from Kantorovich duality \cite{br,v1}.
Indeed, $T_{i}$ (for $i\in\{0,1\})$ is an optimal transport map
iff we can write $T_{i}=\nabla\phi_{i}$ where the convex function
$\phi_{i}$ on $\R^{N}$ minimizes the Kantorovich functional $J_{i}$
corresponding to the two $S_{N}-$invariant measures $\mu_{i}^{(N)}$
and $\nu^{(N)}.$ But then it follows from general principles that
the minimizer can also be taken $S_{N}-$invariant. To prove the claim
$(b)$ we will use the well-known fact that any optimal map $T$ is
cyclical monotone and in particular for any $x$ and $x'$ in $\R^{N}$
\[
\left|x-T(x)\right|^{2}+|x'-T(x')|^{2}\leq\left|x-T(x')\right|^{2}+|x'-T(x)|^{2}
\]
(as follows from the fact that $T$ is the gradient of a convex function).
In particular, denoting by $\sigma(=(ij))\in S_{N}$ the map on $\R^{N}$
permuting $x_{i}$ and $x_{j}$ we get, 
\[
\left|x-T(x)\right|^{2}+|\sigma x-T(\sigma x)|^{2}\leq\left|x-T(\sigma x)\right|^{2}+|\sigma x-T(x)|^{2}
\]
But since (by $(a))$ $T\sigma=\sigma T$ and $\sigma$ acts as an
isometry on $\R^{N}$ the left hand side above is equal to $2\left|x-T(x)\right|^{2}$
and similarly, since $\sigma^{-1}=\sigma$ the right hand side is
equal to $2|\sigma x-T(x)|^{2}.$ Hence setting $y:=T(x)$ gives 
\[
\left|x-y\right|^{2}\leq|\sigma x-y|^{2}
\]
Finally, expanding the squares above and using that $\sigma=(ij)$
gives $-2(x_{i}y_{i}+x_{j}y_{j})\leq-2(x_{j}y_{i}+x_{i}y_{j})$ or
equivalently: $(x_{i}-x_{j})(y_{i}-y_{j})\geq0,$ which means that
$x_{i}<x_{j}$ iff $y_{i}<y_{j}$ and that concludes the proof of
$(b).$

Now, by the previous claim the map $T_{t}$ preserves the fundamental
domain
\begin{equation}
\Lambda:=\{x:\,x_{1}<x_{2}<...<x_{N}\}\label{eq:fundam domain}
\end{equation}
 for the $S_{N}-$action on $\R^{N}.$ But, by assumption, on the
subset $\Lambda$ the function $E^{(N)}$ is convex and this is enough
to run the usual argument to get convexity of the mean energy on the
subspace of symmetric measures. Indeed, we can decompose 
\begin{equation}
\int_{X^{N}}E^{(N)}\mu_{t}^{(N)}=\sum_{\sigma}\int_{\sigma(\Lambda)}E^{(N)}(T_{t})_{*}\nu^{(N)}\label{eq:decomp of mean energy}
\end{equation}
(using that $(T_{t})_{*}\nu^{(N)}$ does not charge null sets, since
it has a density). For any fixed $\sigma$ the integral above is equal
to $\int_{\sigma(\Lambda)}T_{t}^{*}E^{(N)}\nu^{(N)}$ (since $T_{t}$
preserves $\sigma(\Lambda))$ which, by the $S_{N}-$invariance of
$\nu^{(N)}$ and $E^{(N)}$ in turn is equal to $\int_{\Lambda}T_{t}^{*}E^{(N)}\nu^{(N)}.$
But since $E^{(N)}$ is convex on $\Lambda$ and $T_{t}$ preserves
$\Lambda$ the function $T_{t}^{*}E^{(N)}$ is convex in $t$ for
any fixed $x\in\Lambda$ and hence, by the decomposition \ref{eq:decomp of mean energy},
$\int_{X^{N}}E^{(N)}\mu_{t}^{(N)}$ is convex wrt $t,$ as desired.
Finally, the convexity of $E(\mu)$ follows immediately by taking
$\mu^{(N)}$ to be a product measure $\mu^{\otimes N}$ and using
formula \ref{eq:exact formula for mean energy}.\end{proof}
\begin{rem}
\label{rem:hidden}The first convexity statement may appear to contradict
the second point in Lemma \ref{lem:generalized conv}, which seems
to force $E^{(N)}$ to be convex on all of $\R^{N}$(which will not
be the case in general). But the point is that we are only integrating
against \emph{symmetric} measures. As for the convexity of $E_{W,V}(\mu)$
is is indeed well-known that it holds precisely when the symmetric
function $w(x)$ is convex on $]0,\infty[$ (see \cite{bl,Carrillo-f-p}).
This fact can be proved more directly by using that, in this special
case, $T(x)=(f(x_{1}),f(x_{2}),....f(x_{N}))$ clearly preserves order
since $f,$ being the derivative of a convex function, is clearly
increasing.
\end{rem}
Given a sequence $\beta_{N}\in]0,\infty]$ converging to $\beta\in]0,\infty]$
we recall that $\mathcal{F}^{(N)}$ denotes the corresponding mean
free energy functional \ref{eq:def of mean free energy}. Similarly,
we define the the corresponding\emph{ (macroscopic) free energy functional}
on $\mathcal{P}(\R)$ by 
\begin{equation}
F_{\beta}(\mu):=E_{W,V}(\mu)+\frac{1}{\beta}H(\mu)\label{eq:free energy functional on prob}
\end{equation}
Combining the previous proposition with Theorem \ref{thm:existence of evi}
shows that the EVI-gradient flows $\mu_{t}$ of $F_{\beta_{N}}$ and
$F_{\beta}^{(N)}$ on $\mathcal{P}_{2}(\R)$ and $\mathcal{P}_{2}(\R^{N})^{S_{N}},$
respectively exist for appropriate initial measures.
\begin{thm}
\label{thm:w}Let $W$ and $V$ be a two-point interaction energy
and potential as in Section \ref{sub:Setup sing} and denote by $\mu_{t}^{(N)}$
the corresponding probability measures on $\R^{n}$ evolving according
to the forward Kolmogorov equation associated to the stochastic process
\ref{eq:sde for v and w}. Assume that at the initial time $t=0$
\[
\lim_{N\rightarrow\infty}(\delta_{N})_{*}\mu_{t}^{(N)}=\delta_{\mu_{0}}
\]
in the $L^{2}-$Wasserstein metric. Then, at any positive time 
\[
\lim_{N\rightarrow\infty}(\delta_{N})_{*}\mu_{t}^{(N)}=\delta_{\mu_{t}}
\]
in the $L^{2}-$Wasserstein metric, where $\mu_{t}$ is the EVI-gradient
flow on $\mathcal{P}_{2}(\R)$ of the free energy functional $F_{\beta},$
emanating from $\mu_{0}.$ \end{thm}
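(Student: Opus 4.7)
The plan is to apply the general convergence result Theorem \ref{thm:general} to the sequence of interaction energies $E^{(N)}_{W,V}$ defined in \ref{eq:def of interaction energy for pair inter}. The first step would be to identify the flows on both sides: by Proposition \ref{prop:hidden convexity} the mean energy $\mathcal{E}^{(N)}_{W,V}$ satisfies the hypotheses of the equivariant Theorem \ref{thm:existence of evi} on the symmetric subspace of $\mathcal{P}_2(\R^N)$, so the curve $\mu^{(N)}(t)$ in the statement of the theorem is intrinsically defined as the EVI-gradient flow of $\mathcal{F}^{(N)}_{\beta_N}$ (this is the appropriate interpretation of the forward Kolmogorov equation in the singular regime, as discussed in Section \ref{sub:The-forward-Kolmogorov}); similarly $\mu_t$ is the EVI-gradient flow of $F_\beta$ on $\mathcal{P}_2(\R)$.

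Next I would verify the three assumptions of Section \ref{sub:The-assumptions-on}. Assumption 2 is precisely Proposition \ref{prop:hidden convexity}. Assumption 3 follows from the quasi-convexity of $w$ and $V$ together with $\liminf_{s\to 0}w(s)\geq -A$: a $\lambda$-convex function with such a bound admits a pointwise minorant of the form $w(s)\geq -C-Cs^2$, which after summation and normalization yields $E^{(N)}_{W,V}(x_1,\dots,x_N)\geq -C-C\frac{1}{N}\sum_i|x_i|^2$, as required.

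Assumption 1 is the step where the real analytic content sits, and it is essentially packaged in Proposition \ref{prop:macroscopic pair energy}. For the $\limsup$ part I would take the trivial recovery sequence $\mu^{(N)}:=\mu^{\otimes N}$; the exact identity \ref{eq:exact formula for mean energy} then gives $\mathcal{E}^{(N)}_{W,V}(\mu^{\otimes N})=E_{W,V}(\mu)$ whenever $E_{W,V}(\mu)<\infty$. For the $\liminf$ inequality along an arbitrary sequence with $(\delta_N)_*\mu^{(N)}\to\Gamma$ weakly and bounded second moments, I would argue by truncation. The last identity in Proposition \ref{prop:macroscopic pair energy} rewrites the truncated mean energy as an integral against the push-forward,
$$\mathcal{E}^{(N)}_{W_R,V_R}(\mu^{(N)}) = \int_{\mathcal{P}_2(\R)} E_{W_R,V_R}(\mu)\,\Gamma_N(d\mu) + O(C_R/N),$$
and by the tail estimate \ref{eq:tail estimate} the functional $E_{W_R,V_R}$ is continuous on $\mathcal{P}_2(\R)$ for the topology of weak convergence together with a uniform second moment bound. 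Hence the right-hand side converges to $\int E_{W_R,V_R}\,d\Gamma$. Monotone convergence $E_{W_R,V_R}\nearrow E_{W,V}$, combined with the elementary lower bound $\mathcal{E}^{(N)}_{W,V}\geq\mathcal{E}^{(N)}_{W_R,V_R}$, then yields Assumption 1 by first letting $N\to\infty$ and then $R\to\infty$.

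With the three assumptions in place, Theorem \ref{thm:general} applies verbatim and produces the claimed convergence $(\delta_N)_*\mu^{(N)}_t\to\delta_{\mu_t}$ in $W_2(\mathcal{P}_2(\R))$, with $\mu_t$ the EVI-gradient flow of $F_\beta$. The genuine obstacle, and the one feature specific to the singular setting, is already subsumed in Proposition \ref{prop:hidden convexity}: the fact that $\mathcal{E}^{(N)}_{W,V}$ is $\lambda$-convex along symmetric generalized geodesics, even though $W$ is singular on the diagonal and need not be convex globally on $\R^N$, is what makes the whole scheme work, and it rests on the order-preserving property of optimal transport in one dimension. This is also what confines the present argument to $D=1$ and is the true conceptual heart of the proof.
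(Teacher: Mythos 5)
Your proposal is correct and follows essentially the same route as the paper's own proof: identify the $N$-particle and macroscopic flows as EVI gradient flows via Theorem \ref{thm:existence of evi} and Proposition \ref{prop:hidden convexity}, reduce to checking the three assumptions of Section \ref{sub:The-assumptions-on}, and verify Assumption 1 by truncation (using the exact identity \ref{eq:exact formula for mean energy} for the upper bound and the monotone approximation $E_{W_R,V_R}\nearrow E_{W,V}$ plus the continuity of $E_{W_R,V_R}$ under weak convergence with bounded second moments for the lower bound, then Theorem \ref{thm:general}). The only difference is that you spell out the coercivity verification, which the paper dispatches as already done in Propositions \ref{prop:macroscopic pair energy} and \ref{prop:hidden convexity}; note, as a very minor point, that the summation $\frac{1}{N-1}\frac12\sum_{i\neq j}(-C-C|x_i-x_j|^2)$ produces the bound $E^{(N)}\geq-CN-C\sum_i|x_i|^2$ rather than $-C-\frac{C}{N}\sum_i|x_i|^2$; this is still exactly what is needed for $\mathcal{E}^{(N)}=\frac1N\int E^{(N)}\mu_N$ to satisfy \ref{eq:coercivity type condition} with respect to the normalized metric, so the conclusion stands unchanged.
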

\begin{proof}
Assumptions $2$ and $3$ in Section \ref{sec:A-general-convergence}
have been verified above and we just need the verify Assumption $1$
in order to apply Theorem \ref{thm:general}. The upper bound \ref{eq:upper bound in induction}
follows precisely as before, using formula \ref{eq:exact formula for mean energy}.
In order to verify the lowerer bound we fix the truncation parameter
$R>0$ and observe that, since, $E_{W,V}^{(N)}\geq E_{W_{R},V_{R}}^{(N)}$,
formula \ref{eq:energy as limit of truncations} gives 
\[
E_{W,V}^{(N)}(\mu_{N})/N\geq\int E_{W_{R},V_{R}}(\delta_{N}(x_{1},...,x_{N}))\mu_{N}+C_{R}/N
\]
But 
\[
\int E_{W_{R},V_{R}}(\delta_{N}(x_{1},...,x_{N}))\mu_{N}=\int_{\mathcal{P}(\R)}E_{W_{R},V_{R}}(\mu)(\delta_{N})_{*}\mu_{N}\rightarrow\int_{\mathcal{P}(\R)}E_{W_{R},V_{R}}(\mu)\Gamma
\]
as $N\rightarrow\infty,$ by the continuity properties of the functional
$E_{W_{R},V_{R}}(\mu)$ (Prop \ref{lem:approx with cont lambda conv}).
Hence, 
\[
\liminf_{N\rightarrow\infty}\frac{1}{N}\int_{\R^{N}}E^{(N)}\mu^{(N)}\geq\int E_{W_{R},V_{R}}(\mu)\Gamma
\]
for any $R>0.$ Finally, letting $R\rightarrow\infty$ and using the
monotone convergence theorem concludes the proof. 
\end{proof}

\subsection{\label{sub:Deterministic-limits-for}Deterministic mean field limits
for strongly singular pair interactions}

Let us next specialize to deterministic ``strongly singular'' interactions,
i.e. the case when $\beta_{N}=\infty$ and $w(s)$ blows up as $s\rightarrow0.$
In this case the corresponding EVI gradient flows on $\R^{N}$ in
the deterministic case $\beta_{N}=\infty,$ are induced by the classical
solutions to the corresponding system of ODEs 
\begin{lem}
\label{lem:strong sing}Fix a positive integer $N>0.$ Let $w$ be
as in Section \ref{sub:Setup sing} and assume moreover that $w$
is $C^{2}-$smooth on $]0,\infty[$ with $w(s)\rightarrow\infty$
as $s\rightarrow0$ and that $|\nabla w|$ remains bounded as $s\rightarrow\infty.$
Then 
\begin{itemize}
\item there is a unique smooth solution to the corresponding ODE on $\R^{N}$
obtained by setting $\beta_{N}=\infty$ in equation \ref{eq:sde for v and w}:
\[
dx_{i}(t)/dt=-\sum_{j\neq i}(\nabla w)(x_{i}-x_{j})
\]
if the initial condition satisfies $x_{i}(0)\neq x_{i}(0)$ when $i\neq j$
(then this condition is preserved for any $t>0.$
\item the corresponding curve of symmetric probability measures 
\begin{equation}
\mu^{(N)}(x_{1}(t),....,x_{N}(t)):=\frac{1}{N!}\sum_{\sigma\in S_{N}}\delta_{(x_{\sigma(1)}(t),....,x_{\sigma(N)}(t))}\label{eq:def of symmetric dirac prob measure}
\end{equation}
on $\R^{N}$ (i.e. the normalized Dirac measure supported on the $S_{N}-$orbit
in $\R^{N}$ of $(x_{1}(t),....,x_{N}(t))$ coincides with the unique
EVI gradient flow solution to the corresponding mean energy functional
$\mathcal{E}^{(N)}$ on $\mathcal{P}(\R^{N})^{S_{N}},$ with initial
data $\mu^{(N)}(x_{1}(0),....,x_{N}(0)).$
\item if the initial coordinates $x_{i}(0)$ are viewed as $N$ iid random
variables on $\R$ with distribution $\mu_{0}\in\mathcal{P}_{2}(\R),$
then the law $(x_{1}(t),....,x_{N}(t))_{*}\mu_{0}^{\otimes N}$ of
the corresponding random variable \textup{$x_{1}(t),....,x_{N}(t)),$
for a given positive time $t,$ gives the unique EVI gradient flow
solution of the }corresponding mean energy functional $\mathcal{E}$
on $\mathcal{P}(\R^{N})$ with initial data $\mu_{0}^{\otimes N}.$
\end{itemize}
\end{lem}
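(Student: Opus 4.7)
My plan is to treat the three items in sequence: the first by classical ODE theory plus energy monotonicity, the second by combining order preservation with the EVI existence/uniqueness from Theorem \ref{thm:existence of evi}, and the third by a density argument exploiting $\lambda$-contractivity.

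For item one, the vector field $F_i(x)=-\sum_{j\neq i}(\nabla w)(x_i-x_j)$ is of class $C^1$ on the open non-collision set $\Omega=\{x\in\R^N:x_i\neq x_j\text{ for }i\neq j\}$, so Cauchy--Lipschitz furnishes a unique local smooth solution. Two energy-type bounds upgrade this to a global one. First, since the ODE is the classical gradient flow of $E^{(N)}$, the value $E^{(N)}(x(t))$ is non-increasing along solutions; if two coordinates collided at a finite time $T$, the hypothesis $w(s)\to\infty$ as $s\to 0$ would force $E^{(N)}\to\infty$, a contradiction. Thus the trajectory never leaves $\Omega$. Second, the boundedness of $|\nabla w|$ at infinity makes $F$ bounded on all of $\Omega$, so no finite-time escape is possible. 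The resulting flow $\Phi_t:\Omega\to\Omega$ is $S_N$-equivariant by symmetry of $E^{(N)}$.

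For item two, Proposition \ref{prop:hidden convexity} combined with Theorem \ref{thm:existence of evi} already guarantees a unique EVI gradient flow of $\mathcal{E}^{(N)}$ on $\mathcal{P}_2(\R^N)^{S_N}$ emanating from $\mu^{(N)}(x(0))$. To identify it with $\mu^{(N)}(x(t))$, I would exploit the order preservation implicit in item one: starting inside the fundamental chamber $\Lambda=\{x_1<\cdots<x_N\}$, the trajectory stays in $\Lambda$, on which $E^{(N)}$ is genuinely $\lambda$-convex. The classical AGS identification of the pushforward of a Dirac mass under a $\lambda$-convex classical gradient flow with the Wasserstein gradient flow of the corresponding integrated functional (compare Section \ref{sub:Comparison-with-stability}) then gives the EVI inequality chamber by chamber for $\delta_{x(t)}$, and the isometry of Lemma \ref{lem:isometries} transfers this to $\mu^{(N)}(x(t))$ as a curve in $\mathcal{P}_2(\R^N)^{S_N}$. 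This is where the main obstacle lies: the standard EVI framework demands $\lambda$-convexity on all of $\R^N$, which $E^{(N)}$ never satisfies, so one must verify the EVI inequality against arbitrary symmetric test measures $\nu$, not just those supported in the same chamber as $x(t)$; this is precisely the point addressed by the symmetrization argument of Proposition \ref{prop:hidden convexity}, which ensures one may choose an optimal transport plan matching $x(t)$ to a representative of $\nu$ inside a single chamber.

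For item three, I would approximate $\mu_0\in\mathcal{P}_2(\R)$ in $d_2$ by atomless measures $\mu_{0,n}$ (further reduced to empirical measures if convenient) and apply item two to each disintegration fiber of $\mu_{0,n}^{\otimes N}$, which is a symmetric Dirac at a non-colliding configuration. Continuity of the flow map $\Phi_t$ on bounded subsets of $\Omega$ then yields $\Phi_{t,*}\mu_{0,n}^{\otimes N}\to\Phi_{t,*}\mu_0^{\otimes N}$ weakly with second moments, while the $\lambda$-contractivity in Theorem \ref{thm:existence of evi} gives that the EVI gradient flows starting from $\mu_{0,n}^{\otimes N}$ converge to the one starting from $\mu_0^{\otimes N}$. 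Matching the two limits completes the identification, with the caveat that if $\mu_0$ carries atoms one must interpret the pushforward through the Wasserstein gradient flow itself rather than via the pointwise ODE.
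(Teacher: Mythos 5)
Your plan inverts the logical structure of the paper's argument in a way that opens a real gap in item one. You propose to establish the classical ODE solution first (by Cauchy--Lipschitz plus two ``energy-type bounds''), and only then identify it with the abstract EVI flow. The problem is in the two bounds you invoke. First, you claim that $E^{(N)}(x(t))$ bounded forces $|x_i-x_j|$ away from zero; this is correct only if the other terms in $E^{(N)}$ stay bounded below along the trajectory, which in turn requires an a priori bound on $\|x(t)\|$. If $w(s)\to-\infty$ as $s\to\infty$ (the paper explicitly flags $w(s)=-\log s$ as the problematic case), the $+\infty$ contribution from a near-collision could in principle be compensated by other particles escaping to infinity, and the energy bound alone does not rule this out. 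Second, your statement that ``the boundedness of $|\nabla w|$ at infinity makes $F$ bounded on all of $\Omega$'' is simply false: $\nabla w$ is only assumed bounded at \emph{large} separations, while it is unbounded near $s=0$, so $F$ blows up near the collision set $\partial\Omega$. Thus your argument for the moment bound presupposes the collision-avoidance you are trying to prove, and vice versa.

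The paper resolves exactly this circularity by reversing the order: it establishes item two \emph{first}, purely from the abstract EVI machinery (Proposition \ref{prop:hidden convexity} and Theorem \ref{thm:existence of evi}), and shows via a minimizing-movement argument with Choquet's theorem that the resulting EVI flow on $\mathcal{P}_2(\R^N/S^N)$ remains concentrated on Dirac masses; this simultaneously produces the classical trajectory $x(t)$ and the uniform second-moment bound $\|x(t)\|\leq B$ on $[0,T]$, which is then fed back to close the argument for item one. Your item two also differs from the paper's: you try to verify the EVI inequality against arbitrary symmetric test measures by working chamber by chamber and correctly identify the obstruction (that $E^{(N)}$ is not $\lambda$-convex on all of $\R^N$), but you do not actually resolve it; the paper's linearity-plus-Choquet argument on the quotient $\R^N/S^N$ sidesteps the issue entirely. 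Your item three is essentially the paper's approximation-by-Diracs argument and is fine modulo the earlier gaps.
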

\begin{proof}
Set $x:=(x_{1},...,x_{N}),$ where we may without loss of generality
assume that $x_{1}<...<x_{N},$ i.e. that $x$ is a point in the fundamental
domain $\Lambda$ for the $S_{N}-$action. By the standard Cauchy-Lipschitz
existence result for ODE's with Lipschitz continuous drift there exists
$T>0$ and a solution $x(t)$ for $t\in[0,T[$ where $x(t)$ stays
in the open convex set $\Lambda.$ Moreover, since the flow $x_{t}$
on $\Lambda$ is the gradient flow (wrt the Euclidean metric) of the
smooth function $E^{(N)}(:=E_{W}^{(N)})$ on $\Lambda$ it follows
that 
\begin{equation}
E^{(N)}(x(t))\leq E^{(N)}(x(0))=:A<\infty.\label{eq:bound A}
\end{equation}
 Moreover, we claim that
\begin{equation}
\left\Vert x(t)\right\Vert \leq B\label{eq:bound B}
\end{equation}
 for some constant $B$ only depending on the initial data and $T.$
Accepting this for the moment it then follows from the bounds \ref{eq:bound A},
\ref{eq:bound B} and the singularity assumption on $w(0)$ that there
is a positive constant $\delta$ such that $|x_{i}(t)-x_{j}(t)|\geq\delta$
for any $(i,j)$ such that $i\neq j$ and any $t\in[0,T[.$ Hence,
by restarting the flow again the short-time existence result translates
into a long-time existence result, i.e. $T=\infty.$ Before establishing
the claimed bound \ref{eq:bound B} it may be worth pointing out that
the bound in question is not needed in the case when $w(s)\geq-C$
as $s\rightarrow\infty$ (indeed, the $+\infty$ singularity appearing
when two particles merge can then not be compensated by letting some
particles go off to infinity), but this problem could a priori appear
if, for example, $w(s)=-\log s.$ To prove the bound \ref{eq:bound B}
we observe that it will be enough to prove the second point in the
statement of the lemma. Indeed, since the second moments of the EVI
gradient flow $\mu^{(N)}(x_{1}(t),....,x_{N}(t))$ are uniformly bounded
on any fixed time interval this will imply the desired uniform bound
\ref{eq:bound B} on $\R^{N}.$

We next turn to the proof of the second point. We recall that by the
convexity result in Proposition \ref{prop:hidden convexity} together
with the general result Theorem \ref{thm:existence of evi} the functional
$\mathcal{E}$ admits an EVI gradient flow $\mu^{(N)}(t)$ on $\mathcal{P}(\R^{N})_{2}$
emanating from any given symmetric measure $\mu^{(N)}(0)\in\mathcal{P}(\R^{N})_{2}^{S^{N}}.$
Here we will take $\mu^{(N)}(0)$ to be as in formula \ref{eq:def of symmetric dirac prob measure}.
Setting $Y=\R^{N}/S^{N}$ we can identify the space $\mathcal{P}(\R^{N})_{2}^{S_{N}}$
with $\mathcal{P}(Y)_{2}$ and $\mu^{(N)}(t)$ with an EVI gradient
flow $[\mu^{(N)}(t)]$ on $\mathcal{P}(Y)_{2}$ emanating from $[\mu^{(N)}(t)]=\delta_{y},$
where $y$ is the point $[x(0),...x(N)]\in Y.$ We claim that $[\mu^{(N)}(t)]$
is of the form $\delta_{y(t)}$ for a curve $y(t)$ in $\mathcal{P}(Y)_{2}.$
Indeed, since $[\mu^{(N)}(t)]$ arises as limit of minimizing movements
it is enough to establish the claim for the minimizing movement corresponding
to any given time discretization. But in the latter situation the
corresponding functionals 
\[
J_{j+1}(\cdot):=\frac{\text{1}}{2\tau}d(\cdot,\mu_{t_{j}})^{2}+\mathcal{E}(\cdot)-\mathcal{E}(\mu_{t_{j}})
\]
 appearing in formula \ref{eq:def of J functional} (with $F=\mathcal{E}$)
are linear (wrt the ordinary affine structure) at each time step if
$\mu_{t_{j}}$ is assumed to be a Dirac mass, i.e of the form $\delta_{y_{t_{j}}}$
(since $\mathcal{E}$ and $d^{2}(\cdot,\delta_{y_{t_{j}}})$ are both
linear). Hence, by Choquet's theorem, any optimizer is of the form
$\delta_{y_{t_{j+1}}}$ for some point $y_{t_{j+1}}\in Y$ and thus,
by induction, this proves the claim. Next we apply the general isometric
embedding 
\begin{equation}
Y\rightarrow\mathcal{P}_{2}(Y),\,\,\,y\mapsto\delta_{y}\label{eq:isometr embed proof lemma ode}
\end{equation}
 which clearly has the property that $E(y)=E(x^{(N)})=\mathcal{E}(\delta_{y}).$
The EVI gradient flow $[\delta_{y(t)}]$ on $\mathcal{P}(Y)_{2}$
thus gives rise to an EVI gradient flow $y(t)$ on $Y.$ Finally,
the curve $y(t)\in\R^{N}/S_{N}$ may be identified with a curve $x(t)$
in the domain $\Lambda\subset\R^{N},$ which concludes the proof of
the second point. 

The last point can be proved in a similar manner by approximating
the initial measure with a sum of Dirac masses. Alternatively, one
can use that the result is well-known when $E$ is smooth on $\R^{N}$
and the initial symmetric measure $\nu_{0}$ on $\R^{N}$ has a smooth
density (indeed, then $y(t)_{*}\nu_{0}$ satisfies a transport equation
which defines the unique EVI gradient flow solution of $\mathcal{E}$
emanating from $\nu_{0}).$ The general case then follows by approximation
using the stability property of EVI gradient flows \cite{a-g-s}.\end{proof}
\begin{rem}
By Theorem \ref{thm:existence of evi} one can, in fact, start the
EVI gradient flow of $\mathcal{E}^{(N)}$ from any symmetric measure
$\mu^{(N)}(0)$ and in particular from any measure of the form $\mu^{(N)}(x_{1}(0),....,x_{N}(0)),$
given an arbitrary configuration $x_{1}(0),....,x_{N}(0).$ Since,
$\mathcal{E}^{(N)}(\mu^{(N)}(t)<\infty$ when $t>0$ this corresponds
to a classical solution $(x_{1}(t),....,x_{N}(t))$ of mutually distinct
points when $t>0.$ 
\end{rem}
We can now give a purely deterministic version of Theorem \ref{thm:w}.
We also establish the corresponding propagation of chaos result when
the initial particle positions are taken as random iid variables:
\begin{thm}
\label{thm:deterministic strong}Assume that $w$ defines a strongly
singular two-point interaction and denote by $x^{(N)}(t):=(x_{1},...,x_{N})(t)$
the solution of the system of ODEs on $\R^{N}$ in the previous lemma. 
\begin{itemize}
\item If 
\[
\frac{1}{N}\sum_{i=1}^{N}\delta_{x_{i}(0)}\rightarrow\mu_{0}
\]
in $\mathcal{P}_{2}(\R)$ for a given measure $\mu_{0}$ in $\mathcal{P}_{2}(\R).$
Then, for any positive time $t,$ 
\[
\frac{1}{N}\sum_{i=1}^{N}\delta_{x_{i}(t)}\rightarrow\mu_{t},
\]
 where $\mu_{t}$ is the EVI gradient flow solution of the corresponding
functional $E_{W}$ on $\mathcal{P}_{2}(\R).$
\item Similarly, if the $x_{i}(0):$s are viewed as $N$ iid random variables
with distribution $\mu_{0}\in\mathcal{P}_{2}(\R),$ then the corresponding
random measure $\frac{1}{N}\sum_{i=1}^{N}\delta_{x_{i}(t)}$ converges
to $\mu_{t}$ in law, as $N\rightarrow\infty$ (and hence propagation
of chaos holds). 
\end{itemize}
\end{thm}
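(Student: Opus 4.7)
The plan is to derive both bullets of Theorem \ref{thm:deterministic strong} as immediate consequences of the general convergence result Theorem \ref{thm:general}, using Lemma \ref{lem:strong sing} to identify the classical ODE flow with the EVI gradient flow of the mean energy functional $\mathcal{E}^{(N)}$. Assumptions 1, 2 and 3 of Section \ref{sub:The-assumptions-on} have already been checked for this pair-interaction setup in Proposition \ref{prop:macroscopic pair energy} and Proposition \ref{prop:hidden convexity}, so the whole task reduces to choosing the correct initial data $\mu^{(N)}$ in each setting and translating the conclusion of Theorem \ref{thm:general} back into a statement about the empirical measures built from $x^{(N)}(t)$.

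For the first, purely deterministic bullet, I would take $\mu^{(N)}(0)$ to be the symmetric Dirac measure on the $S_N$-orbit of $(x_1(0),\ldots,x_N(0))$ as in formula \eqref{eq:def of symmetric dirac prob measure}. The second bullet of Lemma \ref{lem:strong sing} identifies the EVI gradient flow of $\mathcal{E}^{(N)}$ starting from $\mu^{(N)}(0)$ with the analogous symmetric Dirac measure supported on the orbit of $x^{(N)}(t)$. Via the isometry of Lemma \ref{lem:isometries}, the hypothesis $\frac{1}{N}\sum_i \delta_{x_i(0)} \to \mu_0$ in $\mathcal{P}_2(\R)$ is equivalent to $(\delta_N)_\ast \mu^{(N)}(0) \to \delta_{\mu_0}$ in $W_2(\mathcal{P}_2(\R))$. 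Applying Theorem \ref{thm:general} with $\beta_N = \beta = \infty$ gives $(\delta_N)_\ast \mu^{(N)}(t) \to \delta_{\mu_t}$, which is exactly the desired convergence of the empirical measure at positive time to the EVI gradient flow $\mu_t$ of $E_W$.

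For the second, stochastic bullet, I would instead take $\mu^{(N)}(0) := \mu_0^{\otimes N}$. The third bullet of Lemma \ref{lem:strong sing} identifies the EVI gradient flow $\mu^{(N)}(t)$ starting from $\mu_0^{\otimes N}$ with the law of the random vector $(x_1(t),\ldots,x_N(t))$ when the initial coordinates $x_i(0)$ are iid of law $\mu_0$. The Sanov-type Lemma \ref{lem:sanov type} gives $(\delta_N)_\ast \mu_0^{\otimes N} \to \delta_{\mu_0}$ weakly in $\mathcal{P}(\mathcal{P}(\R))$; combining this with the law of large numbers for the second moment $\frac{1}{N}\sum |x_i(0)|^2 \to \int |x|^2 \mu_0$ (which is finite since $\mu_0 \in \mathcal{P}_2(\R)$) upgrades the convergence to $W_2(\mathcal{P}_2(\R))$, so Theorem \ref{thm:general} applies. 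Its conclusion $(\delta_N)_\ast \mu^{(N)}(t) \to \delta_{\mu_t}$ in $W_2(\mathcal{P}_2(\R))$ is the statement that the law of the random empirical measure $\frac{1}{N}\sum \delta_{x_i(t)}$ converges to a Dirac mass at $\mu_t$; this is equivalent to convergence in probability to the deterministic limit $\mu_t$, which in particular yields the claimed convergence in law and hence propagation of chaos.

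The main obstacle I foresee is the finiteness hypothesis $\mu^{(N)}(0) \in \{F_\beta^{(N)} < \infty\}$ in Theorem \ref{thm:general}, which in the strongly singular regime requires excluding coincidences at time zero. In the deterministic case this is built into the hypotheses of Lemma \ref{lem:strong sing}, so nothing extra is needed. In the iid case, if $\mu_0$ is not absolutely continuous one can have $\mathcal{E}^{(N)}(\mu_0^{\otimes N}) = +\infty$, and this is precisely the situation addressed by Step 3 of the proof of Theorem \ref{thm:general}: one regularizes $\mu_0$ by a sequence $\mu_{j,0}$ with $F_\beta(\mu_{j,0}) < \infty$ (e.g.\ obtained by a small convolution), runs the argument for each $j$, and then closes up the estimate by $\lambda$-contractivity of the EVI gradient flow together with the triangle inequality on $W_2(\mathcal{P}_2(\R))$.
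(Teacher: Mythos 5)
Your proposal is correct and follows essentially the same route as the paper: both reduce the statement to the identification of the classical ODE flow with the EVI gradient flow of $\mathcal{E}^{(N)}$ via the second and third bullets of Lemma \ref{lem:strong sing}, and then invoke the general convergence machinery (the paper cites Theorem \ref{thm:w}, which is itself a direct corollary of Theorem \ref{thm:general}, so this is the same argument). The identity $(\delta_N)_*\mu^{(N)}(x^{(N)}) = \delta_{\delta_{x^{(N)}}}$ used to translate between empirical-measure convergence and $W_2(\mathcal{P}_2(\R))$-convergence, and the appeal to Lemma \ref{lem:sanov type} in the iid case, also match the paper's proof; your remark about the finiteness hypothesis $\mathcal{E}^{(N)}(\mu_0^{\otimes N})=\infty$ and its resolution by Step 3 is a sensible gloss on what the paper leaves implicit in deferring to Theorem \ref{thm:w}.
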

\begin{proof}
To prove the first point we continue with the notation in the previous
lemma and note that
\begin{equation}
\Gamma_{N}:=(\delta_{N})_{*}\mu^{(N)}(x^{(N)})=\delta_{\delta_{x^{(N)}}},\label{eq:proof det}
\end{equation}
i.e. $\Gamma_{N}$ is the delta measure on $\mathcal{P}(\mathcal{P}(\R^{N}))$
supported on the measure $\delta_{x^{(N)}}$ (as follows directly
from the definitions). By assumption $\Gamma_{N}(0)\rightarrow\delta_{\mu_{0}}$
in $\delta_{x^{(N)}}$ in $\mathcal{P}_{2}(\mathcal{P}_{2}(\R^{N}))$
as $N\rightarrow\infty.$ But then it follows, precisely as in the
proof of Theorem \ref{thm:w}, that $\Gamma_{N}(t)\rightarrow\delta_{\mu_{t}}$
in $\mathcal{P}_{2}(\mathcal{P}_{2}(\R^{N}))$ for any fixed $t>0.$
Finally, using the relation \ref{eq:proof det} again concludes the
proof of the first point in the theorem, using the very definition
of $\Gamma_{N}(t)$ (and the $S_{N}-$invariance of the empirical
measure $\delta_{N}).$ As for the last point it follows immediately
from the last point in the previous lemma combined with Theorem \ref{thm:w}
(applied to the case $\beta_{N}=\infty).$
\end{proof}

\subsection{\label{sub:An-alternative-approach}An alternative approach not involving
the $S_{N}-$action }

In this section we point out that when $D=1$ it is possible to dispense
with the $S_{N}-$action and thus bypass the use of the $G-$equivariance
in Theorem \ref{thm:existence of evi}. On the other hand, one virtue
of the more geometric approach used above is that it may turn out
to be useful when $D>1.$ Indeed, given an interaction energy $E^{(N)}$
the Assumption $2$ in Section \ref{sub:The-assumptions-on} can be
replaced by the convexity of $E^{(N)}$ along any class of ``interpolating
curves'' satisfying the two conditions appearing in the proof of
Theorem \ref{thm:existence of evi} (as emphasized in the introduction
of \cite{a-g-s} this is all that it is needed to develop the general
theory of gradient flows in \cite{a-g-s}). However, finding such
interpolating curves for a given singular interaction $E^{(N)}$ appears
to be a highly non-trivial task. 

Denote by $\Lambda_{N}\subset\R^{N}$ the interior of the standard
fundamental domain for the $S_{N}-$action on $\R^{N}.$ Its closure
$\overline{\Lambda_{N}}$ is the closed convex subset $\{x_{1}\leq x_{2}\leq....\leq x_{N}\}$
that inherits a (geodesically) complete metric from the Euclidean
metric on $\R^{N}.$ The starting point is the following basic
\begin{lem}
The quotient map induces an isometry $\Phi$ between the the Euclidean
space $\overline{\Lambda_{N}}$ and the metric quotient space $\R^{N}/S^{N}.$ \end{lem}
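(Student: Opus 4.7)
The plan is to verify three things: (a) $\pi$ restricted to $\overline{\Lambda_N}$ is a bijection onto $\R^N/S^N$; (b) for any two points $x,y\in\overline{\Lambda_N}$ the Euclidean distance $|x-y|$ equals the quotient distance $\inf_{\sigma\in S^N}|x-\sigma y|$; (c) the resulting map $\Phi$ is complete, which is automatic since $\overline{\Lambda_N}$ is closed in $\R^N$ with the induced Euclidean metric.

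Step (a) is elementary. Every $S^N$-orbit in $\R^N$ contains a representative in $\overline{\Lambda_N}$, obtained by sorting the coordinates in increasing order, which gives surjectivity of $\pi|_{\overline{\Lambda_N}}$. Conversely, two tuples in $\overline{\Lambda_N}$ lying in the same orbit must agree as multisets and hence, being both in increasing order, must agree as tuples. So $\Phi:\overline{\Lambda_N}\to\R^N/S^N$ defined by $\Phi(x)=[x]$ is a well-defined bijection.

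Step (b) is the heart of the argument and reduces to a standard rearrangement inequality. For $x,y\in\overline{\Lambda_N}$ and $\sigma\in S^N$, expand
\[
|x-\sigma y|^{2}=|x|^{2}+|y|^{2}-2\sum_{i=1}^{N}x_{i}\,y_{\sigma(i)}.
\]
Minimizing over $\sigma$ is therefore equivalent to maximizing the pairing $\sum_i x_i y_{\sigma(i)}$ over permutations. The classical rearrangement inequality states that this maximum is attained precisely when $(y_{\sigma(i)})$ is sorted in the same order as $(x_i)$, i.e.\ increasing. Since $y\in\overline{\Lambda_N}$ is already increasing, $\sigma=\mathrm{id}$ is a maximizer, and hence
\[
\inf_{\sigma\in S^N}|x-\sigma y|^{2}=|x-y|^{2},
\]
which is precisely the isometry statement.

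The proof has no real obstacle: the only non-trivial input is the rearrangement inequality, which is classical and can be obtained in two lines by the standard swap argument (any adjacent inversion in the permutation can be removed without decreasing $\sum_i x_i y_{\sigma(i)}$, using $(x_i-x_j)(y_k-y_l)\ge 0$ when the indices are sorted). The conceptual content lies entirely in the identification of the quotient metric on $\R^N/S^N$ with the restriction of the Euclidean metric to the sorted chamber, and in observing that the fundamental domain $\overline{\Lambda_N}$, being a closed convex cone, inherits a geodesically complete length metric from $\R^N$ so that $\R^N/S^N$ becomes a Polish geodesic space. This identification will be used in the subsequent alternative approach to reformulate optimal transport on the quotient purely in terms of the convex subset $\overline{\Lambda_N}\subset\R^N$.
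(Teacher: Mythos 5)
Your proof is correct and follows essentially the same route as the paper's: both reduce the quotient distance to minimizing $\sum_i |x_i - y_{\sigma(i)}|^2$ over permutations and observe that, for sorted tuples, the identity permutation is optimal. Where the paper dismisses this with a terse "by monotonicity" and first restricts to the open chamber $\Lambda_N$ by a density/continuity argument, you make the rearrangement-inequality step explicit (expanding the square and maximizing $\sum_i x_i y_{\sigma(i)}$), which works directly on the closed chamber $\overline{\Lambda_N}$ and avoids the density reduction — a slight but genuine simplification in presentation.
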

\begin{proof}
For completeness we give the simple proof. The quotient map induces
a surjective map $\Phi$ from $\overline{\Lambda_{N}}$ to $\R^{N}$
and it is, by continuiuty, enough to prove that $\Phi$ is an isometry
on the dense open subset $\Lambda_{N}$ of $\overline{\Lambda_{N}}.$
But, by definition, $d(\Phi(x_{1},...,x_{N}),\Phi(y_{1},...,y_{N})^{2}=\inf_{\sigma\in S_{N}}\sum_{i}|x_{i}-y_{\sigma(i)}|^{2},$
which, by monotonicity, is realized when $\sigma$ is the identity,
as desired. 
\end{proof}
Using the isometry in the previous lemma we can identify the space
$\mathcal{P}_{2}(\R^{N})^{S^{N}}$ with the space of all probability
measures on $\overline{\Lambda_{N}}$. Concretely, this means that
a measure $\mu_{N}$ on $\overline{\Lambda_{N}}$ is identified with
the avarage of its $S_{N}-$orbit in $\R^{N}.$ In this way the EVI
gradient flow $\mu_{N}(t)$ of the mean energy functional $\mathcal{E}^{(N)}$
on $\mathcal{P}_{2}(\R^{N})^{S^{N}}$ considered above can be identified
with the EVI gradient flow of a functional $\widetilde{\mathcal{E}^{(N)}}$
on $\mathcal{P}_{2}(\R^{N}),$ emanating from a measure supported
on $\overline{\Lambda_{N}},$ where $\widetilde{\mathcal{E}^{(N)}}$
is the linear functional associated to the function $\widetilde{E^{(N)}}:=\chi_{\overline{\Lambda_{N}}}+E^{(N)},$
where $\chi_{\overline{\Lambda_{N}}}$ is the indicator function of
$\overline{\Lambda_{N}},$ i.e. equal to zero on $\overline{\Lambda_{N}}$
and infinity on its complement. By our assumptions $\widetilde{E^{(N)}}$
is $\lambda-$convex on all of $\R^{N}$ and hence, by Prop \ref{prop:macroscopic pair energy},
$\widetilde{\mathcal{E}^{(N)}}$ satisfies the assumptions in Theorem
\ref{thm:existence of evi}, with $G$ trivial. 
\begin{rem}
\label{rem:monotone maps}In the deterministic setting $\beta_{N}=\infty$
this means that one can identify the EVI gradient flow of $E^{(N)}$
on $\R^{N}/S^{N}$ with the gradient flow of $\widetilde{E^{(N)}}$
the Euclidean space$\R^{N},$ defined in terms of the classical theory
of gradient flows on Hilbert spaces (the multivalued subdifferential
of $\widetilde{E^{(N)}}$ at the boundary of $\overline{\Lambda_{N}}$
then gets a contribution from the normal cone of the boundary of $\overline{\Lambda_{N}}).$
Similarly, the identifications above also show that when $\beta_{N}<\infty$
the EVI gradient flow of the corresponding mean energy functional
coincides with the laws of the strong solution of the corresponding
SDEs constructed in \cite{ce} using the theory mulivalued montone
maps (see also \cite{a-g-z}). 
\end{rem}

\section{\label{sub:Realization-of-}Realization of $\mu_{t}$ as a weak solution
of the McKean-Vlasov equation}

In this section we consider the relations between the EVI gradient
flow $\mu_{t}$ appearing in Section \ref{sec:Singular-pair-interactions}
and weak solutions of the corresponding McKean-Vlasov equation, generalizing
some results in \cite{a-g-s,Carrillo-f-p} to the present setting
(and bypassing a gap in the argument in \cite{Carrillo-f-p} in the
case $\beta=\infty$). We conclude by briefly pointing out some subtle
regularity problems of the flows in question. First recall that, in
a general metric space $(M,d)$ the metric slope $|dF|(\mu)$ of a
functional $F$ on $M$ at $\mu\in M$ such that $F(\mu)<\infty$
is defined by 
\[
\left|\partial F\right|(\mu):=\limsup_{\nu\rightarrow\mu}\frac{(F(\nu)-F(\mu))^{+}}{d(\mu,\nu)}
\]
and if $F(\mu)=\infty,$ then $dF|(\mu):=\infty.$ As explained in
\cite{a-g-s} the subdifferential calculus on the Wasserstein $\mathcal{P}_{2}(\R^{D})$
is considerably simpler in the case when $F$ satisfies the following
assumption:
\begin{equation}
|dF|(\mu)<\infty\implies\mu\in\mathcal{P}_{abs}(\R^{D})\label{eq:assump finite slope implies abs}
\end{equation}
(which we will assume below). In particular, this is always the case
for $F=F_{\beta}$ with $\beta<\infty,$ where $F_{\beta}$ denotes
the free energy functional \ref{eq:def of free energy of v notation}.
Now the \emph{subdifferential} $(\partial F)(\mu)$ at $\mu$ of a
$\lambda-$convex functional on $\mathcal{P}_{2}(\R^{D})$ satisfying
the assumption \ref{eq:assump finite slope implies abs} may be defined
as the convex subset of all $\xi\in L^{2}(\R^{d},\mu)$such that 
\[
F(\nu)\geq F(\mu)+\left\langle \xi,\nabla\phi_{\mu.\nu}\right\rangle _{L^{2}(\R^{D},\mu)}+\frac{\lambda}{2}d_{2}(\mu,\nu)^{2},\,\,\forall\nu\in\mathcal{P}_{2}(\R^{D})
\]
where $T_{\mu,\nu}:=I+\nabla\phi_{\mu.\nu}$ is the unique optimal
$L_{loc}^{\infty}-$map transporting $\mu$ to $\nu$ (see \cite[10.1.1 B]{a-g-s}).
The corresponding \emph{minimal subdifferential} $(\partial F)^{0}(\mu)$
is defined as the unique element in $(\partial F)(\mu)\subset L^{2}(\R^{d},\mu)$
with minimal norm.
\begin{thm}
\cite{a-g-s}\label{thm:evi is subdiff gradient flow} Let $F$ be
a lsc real-valued functional on $\mathcal{P}_{2}(\R^{d})$ which is
$\lambda-$convex along generalized geodesics and satisfies the coercivity
property \ref{eq:coercivity type condition}. Then the corresponding
EVI gradient flow $\mu_{t}$ emanating from a given $\mu_{0}$ has
the property that $|dF|(\mu)<\infty$ for $t>0$ and $\mu_{t}$ is
a weak solution on $\R^{d}\times]0,\infty[$ of the continuity equation
\begin{equation}
\frac{d}{dt}\mu_{t}=-\nabla\cdot(\mu_{t}v_{t}),\label{eq:cont eq}
\end{equation}
 where the time-dependent Borel vector fields $v_{t}$ with the property
that $v_{t}=-(\partial^{0}F)(\mu_{t})$ for a.e. $t>0.$\end{thm}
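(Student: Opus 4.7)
The plan is to follow the strategy of Chapters 10--11 of \cite{a-g-s}, passing from the metric EVI formulation to the differential formulation in two stages: first recovering a velocity field via the continuity equation, then identifying it with the minimal subdifferential element. First I would show that $\mu_t$ is locally absolutely continuous as a curve in $(\mathcal{P}_2(\R^d),d_2)$. The EVI inequality \ref{eq:evi}, tested against a fixed $\nu$ with $F(\nu)<\infty$, combined with $\lambda$-contractivity and the uniform $\tau^{1/2}$ error estimate from Theorem \ref{thm:existence of evi}, yields a locally Lipschitz bound for the metric derivative $|\dot \mu_t|$ on compact subintervals of $]0,\infty[$. Then the lifting theorem for absolutely continuous curves in Wasserstein space (\cite[Thm.~8.3.1]{a-g-s}) provides an essentially unique Borel vector field $v_t\in L^2(\R^d,\mu_t)$, of minimal norm in the tangent space at $\mu_t$, such that the continuity equation \ref{eq:cont eq} holds distributionally on $\R^d\times ]0,\infty[$ and $|\dot\mu_t|=\|v_t\|_{L^2(\mu_t)}$ a.e.

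Second, I would show $|dF|(\mu_t)<\infty$ for $t>0$. Writing the EVI at time $t$ against a test measure $\nu$ close to $\mu_t$, integrating over a small interval $[t,t+h]$, dividing by $d_2(\mu_t,\nu)$, and sending $h\to 0^+$ gives an upper bound of the form $|dF|(\mu_t)\le \|v_t\|_{L^2(\mu_t)}$. Since the right-hand side is locally bounded by the previous step, the metric slope is finite at every $t>0$, and in fact the dissipation identity $\tfrac{d}{dt}F(\mu_t)=-\|v_t\|^2_{L^2(\mu_t)}=-|dF|^2(\mu_t)$ emerges.

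Third, and this is the heart of the argument, I would identify $-v_t$ as an element of the subdifferential $\partial F(\mu_t)$. Fix $\nu$ with $F(\nu)<\infty$ and, using $\lambda$-convexity along generalized geodesics, construct for each small $s>0$ a generalized geodesic from $\mu_t$ to $\nu$ with base $\mu_{t-s}$. Applying the defining convexity inequality to this geodesic and then letting $s\to 0^+$, using the absolute continuity of $t\mapsto d_2^2(\mu_t,\nu)$ and the continuity equation to differentiate the squared Wasserstein distance in $s$ (the formula $\tfrac{d}{ds}\tfrac{1}{2}d_2^2(\mu_s,\nu)|_{s=t}=-\langle v_t,\nabla\phi_{\mu_t,\nu}\rangle_{L^2(\mu_t)}$ of \cite[Thm.~8.4.7]{a-g-s}), yields
\[
F(\nu)\ge F(\mu_t)+\langle -v_t,\nabla\phi_{\mu_t,\nu}\rangle_{L^2(\mu_t)}+\frac{\lambda}{2}d_2(\mu_t,\nu)^2
\]
for a.e.\ $t>0$, which is exactly the condition that $-v_t\in\partial F(\mu_t)$. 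Since $v_t$ already lies in the tangent space (being the minimal velocity in the continuity equation) and is equal in norm to $|dF|(\mu_t)$, which by \cite[Prop.~10.1.7]{a-g-s} coincides with the norm of the minimal-norm element in $\partial F(\mu_t)$, we conclude $v_t=-(\partial^0 F)(\mu_t)$ a.e.

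The main obstacle lies in the third step, specifically in rigorously differentiating $s\mapsto \tfrac{1}{2}d_2^2(\mu_s,\nu)$ along the flow and in exchanging the convexity inequality along generalized geodesics with the limit $s\to 0^+$. This is precisely where the $\lambda$-convexity along generalized geodesics (as opposed to mere displacement convexity) is needed, since the base of the geodesic varies with $s$ and only the generalized notion is stable enough to pass to the limit; assumption \ref{eq:assump finite slope implies abs} enters here implicitly to ensure uniqueness of optimal transport maps from $\mu_t$, so that $\nabla\phi_{\mu_t,\nu}$ and hence $\langle v_t,\nabla\phi_{\mu_t,\nu}\rangle$ are well defined.
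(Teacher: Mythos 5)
This theorem carries no proof in the paper: it is quoted verbatim from Ambrosio--Gigli--Savar\'e (the text cites Thm.~11.1.3 and Thm.~11.2.1 of \cite{a-g-s}), so there is no internal argument to compare your sketch against. Judged as a reconstruction of the AGS proof, your Steps~1 and~2 are sound in outline: absolute continuity of the flow, existence of a tangent velocity via the lifting theorem for AC curves, and the slope/energy dissipation bound $|dF|(\mu_t)=\|v_t\|_{L^2(\mu_t)}=|\dot\mu_t|$ all appear in roughly the right order (though the $\tau^{1/2}$ minimizing-movement error estimate is not the standard device here; the local Lipschitz/AC regularity of an EVI flow on $]0,\infty[$ comes from testing the EVI against points on the flow itself, together with coercivity).

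Step~3, however, takes an unnecessary and somewhat incoherent detour. Constructing a one-parameter family of generalized geodesics from $\mu_t$ to $\nu$ with sliding base $\mu_{t-s}$ and then sending $s\to 0^+$ is not how the identification of $v_t$ proceeds, nor does it obviously lead anywhere: the generalized-convexity inequality along such a geodesic controls $F$ on the segment from $\mu_t$ towards $\nu$, and letting the base slide towards $\mu_t$ produces no direct statement about the flow velocity. The subdifferential inclusion you ultimately write down is obtained much more directly by inserting the derivative formula $\frac{1}{2}\frac{d}{dt}d_2^2(\mu_t,\nu)=-\langle v_t,\nabla\phi_{\mu_t,\nu}\rangle_{L^2(\mu_t)}$ of \cite[Thm.~8.4.7]{a-g-s} into the pointwise-in-time EVI \ref{eq:evi}: this immediately gives $F(\nu)\ge F(\mu_t)+\langle -v_t,\nabla\phi_{\mu_t,\nu}\rangle_{L^2(\mu_t)}+\tfrac{\lambda}{2}d_2^2(\mu_t,\nu)$ for a.e.\ $t>0$, i.e.\ $-v_t\in\partial F(\mu_t)$. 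Generalized $\lambda$-convexity plays its role entirely upstream, in the generation theorem (existence and contractivity of the EVI flow) and in ensuring nonemptiness and minimality properties of $\partial F$; it is not invoked again at this stage. Once the inclusion is in hand, your final matching of $\|v_t\|_{L^2(\mu_t)}$ with $|\partial F|(\mu_t)$ to pin down the minimal selection, and your remark that assumption \ref{eq:assump finite slope implies abs} guarantees $\mu_t\in\mathcal{P}_{ac}(\R^d)$ so that the transport map $\nabla\phi_{\mu_t,\nu}$ is well-defined, are both correct.
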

\begin{rem}
The result applies without the assumption \ref{eq:assump finite slope implies abs}
along $\mu_{t},$ but with a more elaborate definition of $\partial^{0}F$
(by \cite[Thm 11.1.3, Thm 11.2.1]{a-g-s}. Briefly, one first defines
the extended subdifferential $(\boldsymbol{\partial}F)(\mu)$ consisting
of transport plans and then defines $(\partial F)(\mu)$ as the subset
realized by transport maps (thus corresponding to vector fields) \cite[10.3.12]{a-g-s}.
In this general setting the previous theorem then says that for a.e.
$t$ the minimal transport plan is realized by the transpart map defined
by the vector field $v_{t}.$ 
\end{rem}
In order to apply the general theory above we will need the following
\begin{lem}
Assume that $V$ and $W$ are $\lambda-$convex and $C^{1}$ on $\R$
and $]0,\infty[,$ respectively. Given $\mu\in\mathcal{P}_{2}(\R)$
such that $|dE_{W,V}|(\mu)<\infty$ and $\mu\in\mathcal{P}_{abs}(\R)$
the minimal subdifferential $\omega_{\mu}:=(\partial^{0}E_{W,V})(\mu)\in L^{2}(\R,\mu)$
satisfies 
\begin{equation}
\int\omega_{\mu}\psi\mu=\frac{1}{2}\int W'(x-y)(\psi(x)-\psi(y))\mu(x)\otimes\mu(y)+\int V'\psi\mu(<\infty)\label{eq:min-sub-dif}
\end{equation}
 for any $\psi$ in $\omega_{\mu}.$\end{lem}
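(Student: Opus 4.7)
The plan is to reduce to the smooth setting via the truncation of Lemma \ref{lem:approx with cont lambda conv}, apply the classical AGS subdifferential computation to each smooth approximant, and then pass to the limit using $\lambda$-convexity and the metric-slope identity $|dF|(\mu)=\|(\partial^{0}F)(\mu)\|_{L^{2}(\mu)}$ available for $\lambda$-convex $F$.

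First, decompose $E_{W,V}=\mathcal{W}+\mathcal{V}$. The potential part is handled by the standard AGS computation for potential energies: for $V$ smooth and $\lambda$-convex, $(\partial^{0}\mathcal{V})(\mu)=V'$ in $L^{2}(\mu)$ whenever $\mu\in\mathcal{P}_{abs}(\R)$, which accounts for the contribution $\int V'\psi\,\mu$ on the right-hand side. The task thus reduces to proving the analogous pairing identity for the interaction functional $\mathcal{W}$ alone.

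Fix the sequence $w_{R}\nearrow w$ of $C^{2}$ uniformly $\lambda$-convex approximants from Lemma \ref{lem:approx with cont lambda conv} (each affine near infinity), and let $\mathcal{W}_{R}$ denote the corresponding interaction functional. Since $w_{R}'$ is bounded and continuous, the convolution
\[
\omega^{R}_{\mu}(x):=\int w_{R}'(x-y)\,\mu(y)
\]
is everywhere defined, lies in $L^{\infty}\cap L^{2}(\mu)$, and coincides with the unique minimal element of $\partial\mathcal{W}_{R}(\mu)$ on $\mathcal{P}_{abs}(\R)$ by the standard AGS computation for interaction energies (\cite[Thm.~10.4.11]{a-g-s}). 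The oddness $w_{R}'(-s)=-w_{R}'(s)$ together with Fubini yields, for any sufficiently regular test function $\psi$,
\[
\int\omega^{R}_{\mu}\,\psi\,\mu=\tfrac{1}{2}\int w_{R}'(x-y)\bigl(\psi(x)-\psi(y)\bigr)\,\mu(x)\otimes\mu(y).
\]
Now pass $R\to\infty$. Since $\mathcal{W}_{R}\leq\mathcal{W}$ are uniformly $\lambda$-convex with $\mathcal{W}_{R}\nearrow\mathcal{W}$, a slope comparison via the identity $|dF|(\mu)=\|(\partial^{0}F)(\mu)\|_{L^{2}(\mu)}$ produces a uniform bound $\|\omega^{R}_{\mu}\|_{L^{2}(\mu)}\leq C(1+|dE_{W,V}|(\mu))$, finite by hypothesis. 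Extract a weak $L^{2}(\mu)$-subsequential limit $\omega^{R}_{\mu}\rightharpoonup\xi$; passing to the limit in the defining subdifferential inequality while using $\mathcal{W}_{R}(\nu)\to\mathcal{W}(\nu)$ yields $\xi\in\partial\mathcal{W}(\mu)$, and minimality combined with weak lower semicontinuity of the $L^{2}(\mu)$-norm forces $\xi=(\partial^{0}\mathcal{W})(\mu)$; norm convergence then upgrades the mode of convergence to strong. For the right-hand side, dominated convergence applies: $w_{R}'\to w'$ pointwise off the diagonal, and the Lipschitz bound $|\psi(x)-\psi(y)|\leq\|\psi'\|_{\infty}|x-y|$ combined with the estimate $|w'(s)|\cdot s\in L^{1}_{loc}([0,\infty))$, which follows from $w\in L^{1}_{loc}$ by integration by parts against the monotone function $w'$, furnishes an integrable majorant.

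The principal obstacle will be the limit step: the inequality $\mathcal{W}_{R}\leq\mathcal{W}$ alone gives only one-sided slope control, so one must combine the $\lambda$-convex variational inequality with the norm identity to extract uniform $L^{2}(\mu)$-bounds on $\omega^{R}_{\mu}$, and then exploit closedness of the subdifferential graph under $\Gamma$-convergence together with norm minimality to single out $(\partial^{0}\mathcal{W})(\mu)$ as the weak limit, rather than some larger element of $\partial\mathcal{W}(\mu)$. Once this is achieved, adding back the $V$-contribution yields $\omega_{\mu}=(\partial^{0}\mathcal{W})(\mu)+V'$ and both the claimed pairing identity and the finiteness of the right-hand side.
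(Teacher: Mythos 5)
Your plan is genuinely different from the paper's argument, which does not regularize $w$ at all: the paper works directly with the possibly singular $w$ by computing the one-sided Gateaux derivative of $\mathcal{W}$ along $T_{t}=I+t\psi$, noting that for convex $w$ the difference quotient $\bigl(w((x-y)+t(\psi(x)-\psi(y)))-w(x-y)\bigr)/t$ is nondecreasing in $t$, applying the \emph{monotone} convergence theorem in $t$, bounding the resulting linear functional of $\psi$ by $|d\mathcal{W}|(\mu)\,\|\psi\|_{L^{2}(\mu)}$ using the definition of metric slope, and then invoking Riesz representation to produce $\omega_{\mu}$. This sidesteps entirely the truncation, the $\Gamma$-limit of subdifferentials, and the uniform $L^{2}$-bound that your plan needs.

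As written your proposal has two concrete gaps, and you flag (but do not close) the first. First, the claimed uniform bound $\|\omega_{\mu}^{R}\|_{L^{2}(\mu)}\leq C(1+|dE_{W,V}|(\mu))$ does not follow from $\mathcal{W}_{R}\leq\mathcal{W}$ together with $\mathcal{W}_{R}(\mu)\to\mathcal{W}(\mu)$. For $\lambda$-convex $F$ the slope is $\sup_{\nu}\bigl(\tfrac{F(\mu)-F(\nu)}{d}+\tfrac{\lambda}{2}d\bigr)^{+}$; the inequality $\mathcal{W}_{R}(\nu)\leq\mathcal{W}(\nu)$ acts in the \emph{wrong} direction on the subtracted term, so $|d\mathcal{W}_{R}|(\mu)$ can a priori exceed $|d\mathcal{W}|(\mu)$, and the convergence $\mathcal{W}_{R}(\nu)\to\mathcal{W}(\nu)$ is not uniform over the supremum in $\nu$. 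Without this uniform bound, there is no weak $L^{2}(\mu)$-subsequential limit to extract, and the whole limit step collapses. Second, your dominated-convergence argument for the right-hand side uses the majorant $\|\psi'\|_{\infty}\,|w'(x-y)|\,|x-y|$, but the integrability needed is against $\mu\otimes\mu$, not Lebesgue; the fact that $|w'(s)|\,s\in L^{1}_{loc}(ds)$ does not by itself give $\int|w'(x-y)|\,|x-y|\,\mu\otimes\mu<\infty$. In the paper this finiteness is obtained for free because the MCT limit in $t$ is automatically finite once it is bounded by $|d\mathcal{W}|(\mu)\,\|\psi\|_{L^{2}(\mu)}$; in your scheme it is a separate, unproven, claim. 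I would suggest either proving the uniform slope bound (which seems nontrivial) or switching to the paper's direct difference-quotient argument, which is shorter and avoids both issues.
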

\begin{proof}
This is proved essentially as in \cite[Lemma 3.7]{Carrillo-f-p} (see
also \cite[Thm 10.4.11]{a-g-s} for the the general higher dimensional
case, under the stronger assumption that when $w$ is differentiable
on all of $\R^{D}).$ But for completeness we recall the proof. First
note that using \cite[Prop 10.4.2]{a-g-s} (concerning the case $W=0)$
we may as well assume, by linearity, that $V=0$ and by replacing
$w(x)$ with $w(x)-\lambda|x|^{2}/2$ we may as well assume that $w$
is convex (using that, by the Cauchy-Schwartz inequality $x\psi\in L^{2}(\mu)).$
Now, assume that $\psi\in L_{loc}^{\infty}(\R)\cap L^{2}(\mu)$ and
consider the family of $L^{\infty}-$maps $T_{t}(x):=x+t\psi(x)$
on $\R$ for $t\geq0.$ On one hand a direct calculation gives for
$\mathcal{W}:=E_{W,0}$ that 
\begin{equation}
\lim_{t\rightarrow0^{+}}(\frac{\mathcal{W}((T_{t})_{*}\mu)-\mathcal{W}(\mu)}{t})=\frac{1}{2}\int W'(x-y)(\psi(x)-\psi(y))\mu(x)\otimes\mu(y)\label{eq:min-subdiff-proof}
\end{equation}
 using that, by convexity, $\left(w\left((x-y)+t((\psi(x)-\psi(y))\right)-w(x-y)\right)/t$
is definied and nondecreasing in $t$ for a.e. $(x,y)$ such that
$x>y$ and also using that $W'$ is odd on $\R-\{0\}.$ Indeed, applying
the monotone convergence theorem then gives the previous equality,
since we have assumed that $\mu\in\mathcal{P}_{abs}(\R)$. On the
other hand it follows directly from the definition of the metric slope
that 
\[
\left|\lim_{t\rightarrow0^{+}}(\frac{\mathcal{W}((T_{t})_{*}\mu)-\mathcal{W}(\mu)}{t})\right|\leq|d\mathcal{W}|(\mu)\left\Vert \psi\right\Vert _{L^{2}(\R,\mu)}
\]
But, since the RHS in formula \ref{eq:min-subdiff-proof} is linear
wrt $\psi$ it then follows from the Riesz representation theorem
that there exists a unique element $\omega_{\mu}\in L^{2}(\R,\mu)$
satisfying formula \ref{eq:min-sub-dif} with $\left\Vert \omega_{\mu}\right\Vert _{L^{2}(\R,\mu)}\leq|d\mathcal{W}|(\mu).$
Finally, to verify that $\omega_{\mu}\in\partial\mathcal{W}$ we fix
$\nu\in\mathcal{P}(\R).$ Since $\mu\in\mathcal{P}_{abs}(\R)$ there
exists a unique transport map $T:=T_{\mu,\nu}:=I+\nabla\phi_{\mu.\nu}$
such that $T_{*}\mu=\nu.$ Setting $\psi:=\nabla\phi_{\mu,\nu}$ and
using the convexity of $\mathcal{W}$ on the Wasserstein space thus
gives 
\[
\mathcal{W}(\nu)-\mathcal{W}(\mu)\geq\lim_{t\rightarrow0^{+}}(\frac{\mathcal{W}((T_{t})_{*}\mu)-\mathcal{W}(\mu)}{t})=\left\langle \omega_{\mu},\nabla\phi_{\mu.\nu}\right\rangle _{L^{2}(\R^{D},\mu)}
\]
showing that $\omega_{\mu}\in\partial\mathcal{W},$ as desired.
\end{proof}
As a consequence, when $\beta>0,$ the limiting curve $\mu_{t}$ appearing
in Theorem \ref{thm:general pair intro} is a weak solution of the
McKean -Vlasov equation \ref{eq:drift-d equa intro} in the following
sense 
\begin{prop}
\label{prop:weak solution}Let $V$ be a $\lambda-$convex function
on $\R$ and let $w$ be as in the previous proposition and denote
by $\mu_{\beta}(t)$ the EVI gradient flow $\mu_{\beta}(t)$ $\mathcal{P}_{2}(\R)$
of the corresponding free energy functional $F_{\beta}$ (formula
\ref{eq:def of free energy of v notation}) emanating from a given
$\mu(0)\in\mathcal{P}_{2}(\R).$ Then, given any $\phi\in C_{c}^{2}(\R)$
$\mu_{\beta}(t)$ is a distributional solution of the following equation
on $]0,\infty[:$ 
\[
\frac{d}{dt}\int_{\R}\mu_{\beta}(t)\phi(x)=
\]
\[
=\frac{1}{\beta}\int\mu_{\beta}(t)\phi''(x)+\frac{1}{2}\int W'(x-y)\left(\phi'(x)-\phi'(y)\right)\mu_{\beta}(t)\otimes\mu_{\beta}(t)+\int V'(x)\phi(x)\mu_{\beta}(t)
\]
\end{prop}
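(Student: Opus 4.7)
The plan is to read the conclusion off Theorem \ref{thm:evi is subdiff gradient flow} once the minimal subdifferential $(\partial^{0} F_{\beta})(\mu_{t})$ has been identified explicitly. By Proposition \ref{prop:hidden convexity} and Lemma \ref{lem:generalized conv}, $F_{\beta} = E_{W,V} + \beta^{-1} H$ is lsc, $\lambda$-convex along generalized geodesics and satisfies the coercivity bound \eqref{eq:coercivity type condition} on $\mathcal{P}_{2}(\R)$, so Theorem \ref{thm:evi is subdiff gradient flow} applies and produces a Borel time-dependent vector field $v_{t} = -(\partial^{0} F_{\beta})(\mu_{t})$, for a.e.\ $t > 0$, such that $\mu_{\beta}(t)$ solves the continuity equation \eqref{eq:cont eq} in the distributional sense on $\R \times ]0,\infty[$. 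Testing this equation against $\phi \in C_{c}^{2}(\R)$ yields
\[
\frac{d}{dt} \int_{\R} \phi\, \mu_{\beta}(t) \;=\; \int_{\R} \phi'(x)\, v_{t}(x)\, \mu_{t}(x) \;=\; -\int_{\R} \phi'(x)\, (\partial^{0} F_{\beta})(\mu_{t})(x)\, \mu_{t}(x)
\]
as distributions in $t$, where $\phi' \in L^{\infty}_{loc}(\R) \cap L^{2}(\mu_{t})$ since $\phi$ has compact support.

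The second step is to decompose $(\partial^{0} F_{\beta})(\mu_{t})$ as the $L^{2}(\mu_{t})$-sum of the minimal subdifferentials of the two summands of $F_{\beta}$, and to evaluate each contribution against $\psi = \phi'$. For $\beta < \infty$, finiteness of the slope of $F_{\beta}$ forces $\mu_{t} \in \mathcal{P}_{abs}(\R)$ with density $\rho_{t}$, and the classical formula $(\partial^{0} H)(\mu_{t}) = \partial_{x} \rho_{t}/\rho_{t}$ from \cite[Thm.~10.4.17]{a-g-s} applies; pairing with $\phi'$ followed by one integration by parts in $x$ produces the Laplacian term $\beta^{-1} \int \phi''\, \mu_{t}$. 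The preceding Lemma, invoked with $\psi = \phi'$, identifies the interaction contribution $\int \phi'\, (\partial^{0} E_{W,V})(\mu_{t})\, \mu_{t}$ as the symmetrized double integral of $W'(x-y)(\phi'(x)-\phi'(y))$ plus the $V'$--term. Assembling the three pieces, with signs dictated by the continuity equation, yields the stated distributional identity.

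The main obstacle is justifying the sum rule $(\partial^{0} F_{\beta})(\mu_{t}) = (\partial^{0} E_{W,V})(\mu_{t}) + \beta^{-1}(\partial^{0} H)(\mu_{t})$ along the flow. Containment of the right-hand side in the subdifferential of $F_{\beta}$ is immediate from the $\lambda$-convexity of the two summands; equality, and the fact that the right-hand side actually realizes the \emph{minimal} element, has to be extracted from the EVI energy-dissipation identity $\frac{d}{dt} F_{\beta}(\mu_{t}) = -|\partial F_{\beta}|^{2}(\mu_{t})$ combined with the Wasserstein subdifferential calculus of \cite[Ch.~10]{a-g-s}. For $\beta = \infty$ there is no entropy to force $\mu_{t} \in \mathcal{P}_{abs}(\R)$ and the preceding Lemma does not apply directly; here one must use the transport-plan formulation of Theorem \ref{thm:evi is subdiff gradient flow} mentioned in the remark following it, and verify that at a.e.\ $t$ the optimal plan representing $-(\partial^{0} E_{W,V})(\mu_{t})$ is in fact induced by a transport map. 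This last point is precisely what closes the gap in the argument of \cite{Carrillo-f-p} referred to at the start of the section.
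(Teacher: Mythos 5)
Your plan takes essentially the paper's approach---identify the drift via the preceding lemma on $\partial^{0}E_{W,V}$ and test the continuity equation against $\phi$---but it makes the middle step harder than it needs to be. You apply Theorem \ref{thm:evi is subdiff gradient flow} in its general continuity-equation form and then propose to prove the subdifferential sum rule $\partial^{0}F_{\beta}=\partial^{0}E_{W,V}+\beta^{-1}\partial^{0}H$ by hand, correctly flagging this as the main obstacle. The paper's proof avoids it by citing \cite[Thm 11.1.3, Thm 11.2.1]{a-g-s}, which are formulated precisely for free energy functionals of the form $E+\beta^{-1}H$ and deliver the drift-diffusion decomposition directly: once finiteness of $F_{\beta}(\mu_{\beta}(t))$ for $t>0$ (and $\beta<\infty$) forces $\mu_{\beta}(t)\in\mathcal{P}_{abs}(\R)$, those theorems state that the density solves a Fokker--Planck equation in which the entropy already appears as $\beta^{-1}\Delta\rho_{t}$ and the remaining drift is exactly $\partial^{0}E_{W,V}(\mu_{t})$. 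So the sum rule you identify as a gap is exactly the content of those theorems, and nothing further needs to be derived. Your proposed route (containment from $\lambda$-convexity plus the energy-dissipation identity, giving equality in Cauchy--Schwarz) would close the gap, but it duplicates work that is already packaged into the cited results.

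Separately, your $\beta=\infty$ discussion is misplaced for this Proposition. The paper's proof is explicitly restricted to $\beta<\infty$---its first sentence uses $\beta<\infty$ to obtain absolute continuity of $\mu_{\beta}(t)$ for $t>0$---and the extension to $\beta=\infty$ is handled in a \emph{separate} subsequent proposition via the $\Gamma$-convergence/stability argument of Lemma \ref{lem:stability-wrt-beta}, not via the transport-plan reformulation of Theorem \ref{thm:evi is subdiff gradient flow} that you propose. The transport-plan route is not needed here and would not on its own reduce to the stated weak formulation without additional regularity of $\mu_{t}$.
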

\begin{proof}
Since $F_{\beta}(\mu_{\beta})<\infty,$ for $t>0$ and $\beta<\infty$
the assumption \ref{eq:assump finite slope implies abs} holds along
the curve $\mu_{\beta}(t),$ when $t>0$ and hence it follows from
\cite[Thm 11.1.3, Thm 11.2.1]{a-g-s} that the density $\rho_{\beta}(t)$
of $\mu_{\beta}(t)$ is a weak solution on $\R\times]0,\infty[$ of
the following equation 
\[
\frac{d}{dt}\rho_{\beta}(t)=\frac{1}{\beta}\Delta\rho_{\beta}(t)+\nabla(\rho_{\beta}(t)v_{t})),
\]
 where $v_{t}$ is the curve of Borel vector fields defined by $v_{t}(x)=-\partial^{0}E_{W,V}(\mu_{t}).$
The result then follows from the previous proposition.
\end{proof}
In order to consider the case $\beta=\infty$ we will use the following
general stability result: 
\begin{lem}
\label{lem:stability-wrt-beta}(stability wrt $\beta).$ Let $\mu_{\beta}(t)$
the EVI gradient flows in the previous proposition, emanating from
a given $\mu(0)\in\mathcal{P}_{2}(\R),$ independent of $\beta.$
Furhter assume that $E(\mu)<\infty$ for all compactly supperted measures
with an $L^{\infty}$ density. Then $\mu_{\beta}(t)\rightarrow\mu_{\infty}(t)$
in $\mathcal{P}_{2}(\R),$ as $\beta\rightarrow\infty.$\end{lem}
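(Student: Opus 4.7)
The plan is to imitate the minimizing-movement comparison used in the proof of Theorem~\ref{thm:general}, with the inverse temperature $\beta$ now playing the role previously played by $N$. Write $F_{\infty}:=E_{W,V}$. Each $F_{\beta}$ is $\lambda$-convex along generalized geodesics and satisfies the coercivity property \eqref{eq:coercivity type condition} uniformly in $\beta$, so Theorem~\ref{thm:existence of evi} produces well-defined EVI flows $\mu_{\beta}(t)$ together with time-discrete minimizing movements $\mu_{\beta}^{\tau}(t)$ satisfying the Wasserstein error bound $d_{2}(\mu_{\beta}(t),\mu_{\beta}^{\tau}(t))\le C\tau^{1/2}$ on bounded time intervals, with $C$ uniform in $\beta$ provided $F_{\beta}(\mu(0))$ is uniformly bounded.

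The key analytic input is strong $\Gamma$-convergence of $F_{\beta}$ to $F_{\infty}$ on $\mathcal{P}_{2}(\R)$ as $\beta\to\infty$. The liminf inequality is immediate: $E_{W,V}$ is lsc, and on any sequence with uniformly bounded second moments the Boltzmann entropy admits a uniform lower bound via comparison with a Gaussian of fixed variance, so $H/\beta\to 0$ in the sense needed. For the recovery sequence at a point $\mu\in\{F_{\infty}<\infty\}$: if $H(\mu)<\infty$ the constant sequence works; otherwise one approximates $\mu$ by compactly supported measures with bounded density, which the standing assumption on $E$ guarantees to lie in $\{F_{\infty}<\infty\}$ and which automatically have finite entropy. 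Convergence of $E_{W,V}$ along this approximation is obtained by splitting $E_{W,V}$ into its near-diagonal part and its tail part, applying monotone convergence to the former and the continuity of the truncated energies $E_{W_{R},V_{R}}$ established in Prop.~\ref{prop:macroscopic pair energy} to the latter, then letting the truncation $R\to\infty$.

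With $\Gamma$-convergence in hand, I would compare the flows via the triangle inequality
\[
d_{2}(\mu_{\beta}(t),\mu_{\infty}(t))\le d_{2}(\mu_{\beta}(t),\mu_{\beta}^{\tau}(t))+d_{2}(\mu_{\beta}^{\tau}(t),\mu_{\infty}^{\tau}(t))+d_{2}(\mu_{\infty}^{\tau}(t),\mu_{\infty}(t)).
\]
The outer two terms are at most $C\tau^{1/2}$ by the discretization estimate. The middle term tends to zero for each fixed $\tau$ as $\beta\to\infty$: at each iteration the functional $\nu\mapsto\tfrac{1}{2\tau}d_{2}^{2}(\nu,\mu_{j,\beta}^{\tau})+F_{\beta}(\nu)$ strongly $\Gamma$-converges to its $F_{\infty}$-analogue, and uniform coercivity forces the minimizers to converge; induction on $j$ then gives $\mu_{\beta}^{\tau}(t)\to\mu_{\infty}^{\tau}(t)$. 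Letting first $\beta\to\infty$ and then $\tau\to 0$ concludes the proof when $\mu(0)$ has finite entropy. The general case is handled by approximating $\mu(0)$ by measures with finite entropy and invoking $\lambda$-contractivity, exactly as in Step~3 of the proof of Theorem~\ref{thm:general}.

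The main obstacle is the construction of the recovery sequence for the $\Gamma$-convergence when $H(\mu)=\infty$: while the standing hypothesis guarantees finiteness of $E_{W,V}$ on regularised approximants, one must also ensure the \emph{convergence} $E_{W,V}(\mu_{n})\to E_{W,V}(\mu)$. This is where the decomposition $w=w_{R}+(w-w_{R})$ underlying Prop.~\ref{prop:macroscopic pair energy} enters essentially; all remaining steps are soft and borrowed from the proof of Theorem~\ref{thm:general}.
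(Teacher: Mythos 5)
Your overall strategy matches the paper's: establish $\Gamma$-convergence of $F_{\beta}$ to $F_{\infty}=E_{W,V}$ and then transfer this to convergence of the flows via a stability argument. (The paper simply invokes \cite[Thm 11.2.1]{a-g-s} for the stability step rather than re-deriving it through a minimizing-movement comparison; your longer route is sound but redundant.) Your liminf argument is correct and, if anything, stated more explicitly than the paper's: a uniform bound on second moments gives a uniform lower bound on $H$, whence $H/\beta\to 0$ and lower semicontinuity of $E$ takes over.

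The gap is exactly where you flag it, namely the construction of the recovery sequence when $E(\mu)<\infty$ but $H(\mu)=\infty$, and your proposed fix does not close it. Splitting $E$ as $E_{W_R,V_R}+(E-E_{W_R,V_R})$ and invoking monotone convergence only yields $\liminf_n E(\mu_n)\geq E(\mu)$, which is already guaranteed by lower semicontinuity; for the recovery sequence you need the opposite inequality $\limsup_n E(\mu_n)\leq E(\mu)$, and nothing in your decomposition furnishes it — $E$ is lsc but not upper semicontinuous in the singular case. Moreover you have not specified which regularisation $\mu_n$ of $\mu$ to take, and generic truncations or mollifications need not keep the energy from increasing. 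The paper circumvents this with a targeted construction: reduce to the case where $E$ is geodesically convex (absorb the $\lambda|x|^2$ part as a continuous perturbation), take the displacement interpolation $\mu_t$ from $\mu_0:=dx|_{[0,1]}$ to $\mu_1:=\mu$, and use convexity of $t\mapsto E(\mu_t)$ together with $E(\mu_0)<\infty$ — this is precisely where the standing hypothesis on compactly supported $L^\infty$ measures enters — to obtain $\limsup_{t\to 1}E(\mu_t)\leq E(\mu)$ for free. The entropy bound $H(\mu_t)\leq-\log(1-t)$ then follows from the Monge--Amp\`ere identity $\rho_t=\rho_0\circ(\nabla\phi_t)^{*}/\det D^2\phi_t$ with $\det D^2\phi_t\geq 1-t$, and a diagonal choice $t=t(\beta)\to 1$ finishes. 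Without this (or an equivalent) upper-bound mechanism the $\Gamma$-limsup inequality, and hence the lemma, remains unproved.
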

\begin{proof}
According to \cite[Thm 11.2.1]{a-g-s} we just have to verify that
$F_{\beta}$ gamma-converges towards $F_{\infty}(=E)$ on $\mathcal{P}_{2}(\R).$
First observe that in order to verify the $\liminf$ inequality appearing
in the definition of Gamma convergence it is, by a standard diagonal
argument, enough to verify it for the dense subset of all elements
in $\mathcal{P}_{2}(\R)$ with finite entropy. But for any such element
$\mu$, $F_{\beta}(\mu)\rightarrow F_{\infty}(\mu)$ trivially. 

We next turn to the verification of the limitsup inequality in the
definition of Gamma convergence. Note that if $E(\mu)=\infty$ or
$H(\mu)<\infty$, there is nothing to show. Further, since gamma-convergence
is stable under continuous perturbations, we may as well assume that
$E(\mu)$ is convex. So assume that $E(\mu)<\infty$ and $H(\mu)=\infty$.
Then we need to find a recovery sequence $\mu_{\beta}\rightarrow\mu$
such that $H(\mu_{\beta})/\beta\rightarrow0$ and $E(\mu_{\beta})\rightarrow E(\mu)$.
To this end let $\mu_{0}=dx|_{[0,1]}$ be the Lebesgue measure on
the unit interval, and let $\mu_{1}=\mu.$For $t\in[0,1]$ we let
$\mu_{t}$ be the displacement interpolation. Then $E(\mu_{t})$ is
a convex function of $t$, and it follows that $E(\mu)\geq\limsup_{t\to1}E(\mu_{t})$
since $E(\mu_{0})<\infty$. Hence, it is enough to show that for some
choice $t=t(\beta)$, $\lim_{\beta\to\infty}t(\beta)=1$, it holds
that $H(\mu_{t(\beta)})/\beta\to0$ as $\beta\to\infty$. We claim
that $H(\mu_{t})<\infty$ for all $t\in[0,1[$ , from which the result
is immediate. To show the claim, note that since $\mu_{0}$ is supported
on a convex set, there is a unique Brenier map $\partial\phi_{1}$
transporting $\mu_{0}$ to $\mu_{1}$, and it holds that the displacement
interpolation is given by $\phi_{t}=(1-t)x^{2}/2+t\phi_{1}$. But
then $MA(\phi_{t})=\partial^{2}\phi_{t}=(1-t)+t\partial^{2}\phi_{1}\geq1-t$.
Thus an $L^{\infty}$ bound follows for the density $\mu_{t}=\rho_{t}dx$,
since $\rho_{t}(x)=\rho_{0}(\partial\phi_{t}^{*})/MA(\phi_{t}(x))\leq1/(1-t)$.
Hence $H(\mu_{t})=\int\rho_{t}\log\rho_{t}dx\leq-\int\rho_{t}\log(1-t)dx=-\log(1-t)$.\end{proof}
\begin{rem}
The previous lemma appears as Theorem 3.6 in \cite{Carrillo-f-p}
(in the case of model power laws). However, the verification of the
Gamma convergence of the functionals $F_{\beta}(:=E+H/\beta)$ towards
$E$ was not provided in the proof in \cite{Carrillo-f-p}. This convergence
problem is a special case of the following general problem: consider
a measure $\mu_{0}$ on a topological space $X$ (for example $\R^{D})$
and denote by $H_{\mu_{0}}(\mu)$ the entropy of a probability measure
relative to $\mu_{0}.$ Let $E(\mu)$ be defined by a singular integral
operator on $\mathcal{P}(X)$ with lower semi-continuous kernel $W(x,y).$
Showing that $F_{\beta}:=E+H_{\mu_{0}}/\beta$ Gamma converges towards
$E,$ as $\beta\rightarrow\infty,$ appears to be a rather subtle
problem in general and requires an appropriate compatability between
$\mu_{0}$ and the kernel $W(x,y).$ For example, when $X=\R^{2}$
or $X=\R$ and $W(x,y)=-\log|x-y|$ the convergence can be shown to
hold when $\mu_{0}$ is sufficently regular in the potential theoretic
sense (i.e. $\mu_{0}$ has regular asymptotical behaviour in the sense
of \cite{s-t} or satisfies a Bernstein-Markov property in the sense
of \cite{b-b-w}). We shall not develop this further here, but just
point out that the regularity property in question does hold when
$\mu_{0}$ is, for example, Lesbegue measure $dx$ on $\R$ (for non-trivial
reasons). However, in general, it is not enough to assume that $E(\mu_{0})<\infty.$
In general, as in the proof of the previous lemma, in order to verify
the Gamma convergence it is enough to verify the following condition
\begin{itemize}
\item For any $\mu$ such that $E(\mu)<\infty$ there exists a sequence
$\mu_{j}$ converging weakly to $\mu$ such that $\mu_{j}$ is absolutely
continuous with respect to $\mu_{0}$ and satisfies $E(\mu_{j})\rightarrow E(\mu)$ 
\end{itemize}
From the statistical mechanical point of view this condition appears
naturally in the large deviation theory of the $N-$particle Gibbs
measures corresponding to the pair interaction $W(x,y)$ when $\beta_{N}\rightarrow\infty$
\cite[Hypothese 4, page 2373]{c-g-z}. In the model case of power
laws and $\mu_{0}=dx$ on $X=\R^{D}$ the regularizations $\mu_{j}$
above can alternatively be defined using a convolution of $\mu_{0}$,
by exploting the convexity of the corresponding functional $E$ wrt
the usual affine structure (which is a classical, but non-trivial
fact) \cite{c-g-z}.
\end{rem}
With the previous lemma in place we get the following
\begin{prop}
When $\alpha\leq1$ Proposition \ref{prop:weak solution} holds for
$\beta=\infty,$ as well. \end{prop}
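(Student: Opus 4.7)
The plan is to pass to the limit $\beta \to \infty$ in the weak equation already established in Proposition \ref{prop:weak solution} for $\beta < \infty$, using Lemma \ref{lem:stability-wrt-beta} for the convergence $\mu_\beta(t) \to \mu_\infty(t)$ in $\mathcal{P}_2(\R)$. Fix $\phi \in C_c^2(\R)$ and $\eta \in C_c^\infty(]0,\infty[)$, multiply the equation of Proposition \ref{prop:weak solution} by $\eta(t)$ and integrate in $t$, which rewrites the weak equation as
\[
-\int_0^\infty \eta'(t)\left\langle \phi,\mu_\beta(t)\right\rangle dt = \int_0^\infty \eta(t)\Big[\tfrac{1}{\beta}\langle \phi'',\mu_\beta(t)\rangle+\tfrac{1}{2}\langle g_\phi,\mu_\beta(t)^{\otimes 2}\rangle+\langle V'\phi,\mu_\beta(t)\rangle\Big]dt,
\]
where $g_\phi(x,y):=W'(x-y)(\phi'(x)-\phi'(y))$. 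The goal is to send $\beta\to\infty$ pointwise in $t$ and invoke dominated convergence in the time integral.

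The left-hand side and the potential and diffusion pieces on the right are easy. The diffusion term is bounded by $\|\phi''\|_\infty\|\eta\|_{L^1}/\beta\to 0$. For the potential term, $V'\phi$ is continuous and bounded (since $V$ is $C^1$ and $\phi$ is compactly supported), so weak convergence $\mu_\beta(t)\to\mu_\infty(t)$ gives pointwise convergence in $t$, uniformly bounded by $\|V'\phi\|_\infty$, and dominated convergence applies. The same reasoning handles the $\eta'\langle\phi,\mu_\beta(t)\rangle$ term.

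The main work, and the only place where the hypothesis $\alpha\le 1$ is used, is the convergence of the interaction term $\langle g_\phi,\mu_\beta(t)^{\otimes 2}\rangle$. Here I would argue that $g_\phi$ extends to a continuous bounded function on $\R^2$: for $(x,y)$ near the diagonal, the power-law estimate $|W'(x-y)|\le C|x-y|^{-\alpha}$ combined with the Lipschitz bound $|\phi'(x)-\phi'(y)|\le\|\phi''\|_\infty|x-y|$ yields
\[
|g_\phi(x,y)|\le C\|\phi''\|_\infty|x-y|^{1-\alpha},
\]
which tends to $0$ as $x\to y$ precisely because $\alpha\le 1$; away from the diagonal, smoothness of $W'$ on $\R\setminus\{0\}$ and of $\phi'$ gives continuity. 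Since $\phi'$ is supported in some $[-R,R]$, the function $g_\phi$ vanishes off the strip $\{|x|\le R\}\cup\{|y|\le R\}$, and the growth of $W'$ at infinity is at worst polynomial, controlled by the uniform second moment bound on $\mu_\beta(t)$ inherited from convergence in $\mathcal{P}_2(\R)$. Therefore $g_\phi$ is in the class of test functions along which $\mu_\beta(t)^{\otimes 2}\to\mu_\infty(t)^{\otimes 2}$ in $\mathcal{P}_2(\R^2)$ gives $\langle g_\phi,\mu_\beta(t)^{\otimes 2}\rangle\to\langle g_\phi,\mu_\infty(t)^{\otimes 2}\rangle$ for each $t$, with a bound uniform in $t$ on the support of $\eta$ (coming from the $\mathcal{P}_2$-boundedness of $\{\mu_\beta(t)\}$ on compact time intervals). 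Dominated convergence in $t$ then yields the equation with $\beta=\infty$, which is the desired distributional form. The main obstacle, resolved by $\alpha\le 1$, is precisely the continuous extension of $g_\phi$ across the diagonal, without which the weak limit of $\mu_\beta(t)^{\otimes 2}$ — a measure that may charge the diagonal in the absence of diffusion — could not be tested against $g_\phi$.
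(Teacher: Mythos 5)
Your proof is correct and follows essentially the same route as the paper's (combine Proposition~\ref{prop:weak solution} with the $\beta\to\infty$ stability of Lemma~\ref{lem:stability-wrt-beta}, using that $g_\phi(x,y)=W'(x-y)(\phi'(x)-\phi'(y))$ extends continuously across the diagonal because $\alpha\le 1$); you have simply spelled out the dominated-convergence bookkeeping that the paper leaves implicit. One small imprecision: your bound $|g_\phi(x,y)|\le C\|\phi''\|_\infty|x-y|^{1-\alpha}$ only forces $g_\phi\to 0$ on the diagonal when $\alpha<1$; in the borderline case $\alpha=1$ it gives boundedness, and continuity instead follows from $g_\phi(x,y)=c\,(\phi'(x)-\phi'(y))/(x-y)\to c\,\phi''(x)$, which is the same conclusion but needs a one-line extra remark.
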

\begin{proof}
This follows immediately from combining Prop \ref{prop:weak solution}
with the stability property of Lemma \ref{lem:stability-wrt-beta}
using that, in this particular case, $W'(x-y)(\psi(x)-\psi(y)$ defines
a continuous function on $\R^{2},$ since $\psi$ is assumed smooth. \end{proof}
\begin{rem}
Recall that for any finite measure $\mu$ on $\R$ the limit 
\begin{equation}
H_{\mu}(x):=\lim_{\epsilon\rightarrow0}\int_{|x-y|\geq\epsilon}\frac{\mu(y)}{x-y}\label{eq:def of hilb transform using trunc}
\end{equation}
exists for a.e. $x$ in $\R$ \cite[Theorem, page 1085]{lo} and defines
the Hilbert transform $H_{\mu}(x)$ of $\mu.$ Formally, the weak
McKean-Vlasov equation for a repulsive logarithmic interaction $(\alpha=1)$
is equivalent to a weak solution to the continuity equation with drift
$v_{t}=H_{\mu}.$ However, in general $H_{\mu}$ is not in $L_{loc}^{1}(\R),$
even if $f\in L^{1}(\R)$ and hence some further a priori regularity
on $\mu_{t}$ is needed in order to even make sense of the corresponding
continuity equation (which requires that $H_{\mu}\in L^{1}(\mu));$
see the discussion in \cite{fo0}). For example if $\mu_{t}$ is in
$L^{p}(\R)$ for some $p>1$ then so is $H_{\mu}$ by Riesz classical
theorem.
\end{rem}
In the logarithmic case the uniqueness of weak solutions $\mu_{t}$
to the McKean-Vlasov equation was established in \cite{c-l} for $V$
quadratic and in \cite{c-d-g,fon} under the assumption that the Fourier
transform of $V$ has exponential decay (in particular, $V$ is real-analytic).
The proofs in \cite{c-l} exploit that the Hilbert transform $H_{\mu}(x)$
is, for almost any $x,$ the boundary value along the real axes of
the Cauchy transform $G_{\mu}(z)$ of $\mu,$ which defines a holomorphic
function on the upper half plane. Then a weak solution in the sense
of the previous proposition corresponds to a strong solution for a
complex Burger type equation in the upper half plane. However, the
uniqueness of weak solutions seems to be open when $V$ is only assumed
smooth and $\lambda-$convex (in \cite{fon} the uniqueness in the
class of solutions $\mu_{t}$ in $L^{p}(\R)$ for $p\geq2$ is established).
Moreover, in the case of attractive power-laws with $\alpha=0$ uniqueness
of weak solutions fails, in general (see Section \ref{sub:Convex-interactions-in}). 

Let us also recall that in the model case of repulsive power-law with
exponent $\alpha\in]0,2[$ the corresponding drift $V_{t}$ can, at
least formally, be written as minus the gradient of the fractional
Laplacian $(-\Delta)^{-(1-\alpha/2)}\rho_{t}.$ Certain weak solutions
of the corresponding evolutions equations are constructed in \cite{b-k-m}
when $D=1$ and in \cite{c-v} when $D\geq1,$ using an elaborate
regularization scheme involving several parameters. However, a key
point of the Wassestein gradient flow approach is that the constructed
limiting curve $\mu_{t}$ is an EVI gradient flow and thus automatically
uniquely determined. 
\begin{rem}
\label{rem:incorrect}Let now $W$ be a repulsive power law with $\alpha\in[1,2[$
(and take, for example, $V(x)=Cx^{2}$$).$ In \cite[Remark 2.2]{Carrillo-f-p}
it is claimed that $E_{W,V}(\mu)<\infty$ implies that $\mu$ is absolutely
continuous wrt $dx,$ for $t>0,$ and as a consequence it is claimed
in Remark \cite[Remark 3.10]{Carrillo-f-p} that $\mu_{\beta}(t)$
is a weak solution in the sense of Prop \ref{prop:weak solution}
also for $\beta=\infty$ (since $E_{W,V}<\infty$ along the EVI gradient
flow). But the first claim appears to be incorrect. Indeed, in the
logarithmic case it is well-known that there are measures $\mu$ which
are not absolutely continuous wrt $dx,$ but with the property $E_{W,V}(\mu)<\infty,$
for example measures with sufficently small Haussdorf dimension (such
as the standard Cantor set). A similar counter example applies when
$\alpha>1$ \cite{Bea}. On the other hand, it may very well be that
using further properties of the EVI gradient flow (for example, that
the metric slopes are finite for $t>0$) one can establish the first
claim, or directly the second claim, for any repulsive powerlaw. Alternatively,
it seems likely that, assuming that $\mu_{0}\in L^{p}(\R^{D})$ for
some $p>1$ one can show that $\mu_{t}\in L^{p}(\R^{D}),$ by a regularization
procedure, as shown in \cite{c-v} in a related context (when $\alpha\leq1$
this follows from the viscosity approach in \cite{b-k-m}, when $V=0)$.
But we will not go further into this here.
\end{rem}

\section{\label{sec:Further-examples}Further applications}

In this section we give some further examples illustrating the general
structure of the applications discussed in Section \ref{sec:Singular-pair-interactions}.

\subsection{Variants of strongly singular power-laws in 1D}

The next example generalizes the power-law in Section \ref{sec:Singular-pair-interactions}
to pair interaction which are not translationally invariant:
\begin{example}
\label{exa:spin}Given a function $g(x,y)$ which is positive, bounded
and with bounded first and second partial deriviatives consider the
following pair interaction potential:
\begin{equation}
W(x,y):=\frac{g(x,y)}{|x-y|^{s}}.\label{eq:W in terms of g}
\end{equation}
 and let $E^{(N)}(x_{1},x_{2},....,x_{N})$ be the corresponding pair
interaction energy. Then the assumptions in Section \ref{sec:A-general-convergence}
are satisfied, as follows essentially by the same arguments as in
Section \ref{sec:Singular-pair-interactions}. We recall that such
interactions appear, for example. as spin type Hamiltonians in the
mathematical physics litteratyre. To briefly explain this consider
$N$ particles located at $x_{1},...,x_{N}$ with internal spin type
degrees of freedom $S_{1},...,S_{N}$ taking values in the unit-sphere
in $\R^{n}.$ The corresponding spin type Hamiltonian may be defined
by 
\[
H(x_{1},...,x_{N};S_{1},...,S_{N}):=\sum_{i,j}J_{x_{i}x_{j}}(1-S_{i}\cdot S_{j}),
\]
 for a given function $J_{xy}$ on $\R^{2},$ the spin interaction.
These models have been studied extensively in lattice models where
the positions $x_{1},...x_{N}$ are fixed once and for all and put
on a lattice, for example $x_{i}=i$ in the 1D setting. The case $n=1$
and $J_{xy}=|x-y|^{-\alpha}$ is then usually called the $\alpha-$Ising
model (while the general case $n\geq1$ is called the $O(n)-$model);
see\cite[Section 4]{bbdr} for the corresponding static mean field
limit and \cite{g-l} for a similar stochastic dynamical lattice model
with long range interactions, where a McKean-Vlasov type equation
appears in the macroscopic limit. The present setting applies to the
opposite setting where the positions $x_{1},...x_{N}$ are free while
$S_{i}=S(x_{i})$ for a given smooth map $S$ from $\R$ into the
unit-sphere in $\R^{n},$for $n\geq2.$ One then sets $g(x,y)=(1-S(x)\cdot S(y))$
in formula \ref{eq:W in terms of g}. For example, this happens when
the system is coupled to a strong exterior magnetic field $B(x),$
effectively forcing $S(x)$ to be parallell to $B.$ 
\end{example}
Let us also give some 1D examples which are not pair interactions:
\begin{example}
Consider the following 1D $N-$particle interaction energy: 
\[
E^{(N)}(x_{1},x_{2},....,x_{N}):=\frac{1}{N^{3}}\sum_{i,j,k}\frac{1}{\left(|x_{i}-x_{j}|+|x_{k}-x_{i}|+|x_{k}-x_{j}|\right)^{s}}\,\,\,\,s\in]0,1[
\]
where the sum funs over all indices $i,j,k$ in $[1,N]$ such that
$i,j$ and $k$ do not all coincide. This means that $E^{(N)}(x_{1},x_{2},....,x_{N})$
blows up precisely when at least three points merge (but remains bounded
if only two points merge). If $x_{i}<x_{j}<x_{k}$ then the function
appearing above is clearly convex and in $L_{loc}^{1}$ and hence
the assumptions in Section \ref{sec:A-general-convergence} are satisfied,
as discussed in Section \ref{sec:Singular-pair-interactions}. 
\end{example}

\subsection{\label{sub:Convex-interactions-in}Globally quasi-convex interactions
in any dimension}

Next, we give some extensions from the case $D=1$ to higher dimensions. 
\begin{thm}
Let $w_{m},$ $m=1,...,M$ be quasi-convex funtions on $(\R^{D})^{m}$
and $V(x)$ a quasi-convex function on $\R^{D}.$ Assume that at the
initial time $t=0$ 
\[
\lim_{N\rightarrow\infty}(\delta_{N})_{*}\mu_{t}^{(N)}=\delta_{\mu_{0}}
\]
in the $L^{2}-$Wasserstein metric. Then, at any positive time 
\[
\lim_{N\rightarrow\infty}(\delta_{N})_{*}\mu_{t}^{(N)}=\delta_{\mu_{t}}
\]
in the $L^{2}-$Wasserstein metric, where $\mu_{t}$ is the EVI-gradient
flow on $\mathcal{P}_{2}(\R)$ of the corresponding free energy functional
$F_{\beta},$ emanating from $\mu_{0}.$ \end{thm}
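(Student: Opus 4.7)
The plan is to deduce this theorem from the general convergence result Theorem \ref{thm:general} by verifying the three assumptions of Section \ref{sub:The-assumptions-on} for the natural renormalized $N$-point interaction energy
\[
E^{(N)}(x_1,\ldots,x_N) := \sum_{m=1}^M \frac{1}{N^{m-1}} \sum_{I} w_m(x_{i_1},\ldots,x_{i_m}) + \sum_{i=1}^N V(x_i),
\]
where the inner sum runs over all multi-indices $I=(i_1,\ldots,i_m)$ with pairwise distinct entries. The candidate macroscopic limit is the functional $E(\mu) = \sum_{m=1}^M \int_{(\R^D)^m} w_m \, \mu^{\otimes m} + \int V\mu$ on $\mathcal{P}_2(\R^D)$, and the corresponding free energy functional is $F_\beta(\mu) = E(\mu) + H(\mu)/\beta$.

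The crucial advantage of the current setup, compared with the 1D singular setting of Section \ref{sec:Singular-pair-interactions}, is that all $w_m$ are assumed quasi-convex globally on $(\R^D)^m$; in particular they are locally Lipschitz, finite, and satisfy a quadratic lower bound $w_m(x) \geq -C|x|^2 - C$. Consequently, for Assumption $2$ (convexity), I would observe that the symmetrized potential $\sum_I w_m(x_{i_1},\ldots,x_{i_m})$, viewed as a function on $(\R^D)^N$, is itself $\lambda'$-convex on the whole of $(\R^D)^N$ for some $\lambda'$ depending only on $\lambda$ and $m$. Hence the second bullet of Lemma \ref{lem:generalized conv} directly gives $\lambda$-convexity along every generalized geodesic in $\mathcal{P}_2((\R^D)^N)$, without needing to restrict to the symmetric subspace or to work on a fundamental domain. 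Coercivity (Assumption $3$) then follows immediately by summing the pointwise quadratic lower bounds.

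The main technical step is Assumption $1$. For the upper bound one simply plugs in $\mu^{\otimes N}$: the difference between the sum over distinct indices and the full $m$-fold integral consists of $O(N^{m-1})$ diagonal terms, giving $\frac{1}{N} E^{(N)}(\mu^{\otimes N}) = E(\mu) + O(1/N)$ whenever $E(\mu) < \infty$. For the lower bound, I would adapt the argument of Proposition \ref{prop:macroscopic pair energy}: approximate each $w_m$ from below by a monotone sequence $w_m^{(R)}$ of continuous $\lambda$-convex functions that agree with $w_m$ on $\{|x|\leq R\}$ and are affine with bounded slope outside, so that the associated functional $E_R(\mu) := \sum_m \int w_m^{(R)} \mu^{\otimes m} + \int V_R \mu$ is continuous along sequences with a uniform second-moment bound, by a tail estimate of the form \ref{eq:tail estimate}. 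One then has $E^{(N)}(\mu^{(N)})/N \geq \int_{\mathcal{P}(\R^D)} E_R(\mu)\,\Gamma_N(\mu) + O(C_R/N)$; letting first $N\to\infty$ (using the continuity of $E_R$ against the weak limit $\Gamma$) and then $R\to\infty$ (monotone convergence) yields the desired liminf bound.

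The hard part of this plan is ensuring that the quasi-convexity hypothesis really does furnish the uniform integrability needed at infinity, i.e., that the tail estimate survives the passage to $m$-fold products of measures with uniformly bounded second moments. This reduces to the elementary bound $|w_m(x_1,\ldots,x_m)| \leq C(1 + |x_1|^2 + \cdots + |x_m|^2)$ provided by quasi-convexity combined with the standing upper bound of the truncations, and once this is in hand all three assumptions of Section \ref{sub:The-assumptions-on} are verified and Theorem \ref{thm:general} yields the stated convergence of $(\delta_N)_* \mu^{(N)}(t) \to \delta_{\mu_t}$ in $W_2(\mathcal{P}_2(\R^D))$, with $\mu_t$ the EVI gradient flow of $F_\beta$ emanating from $\mu_0$.
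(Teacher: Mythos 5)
Your proof is correct and takes essentially the same approach as the paper, which simply refers to the argument of Theorem \ref{thm:w} and points out that the global quasi-convexity makes the convexity step (Assumption 2) immediate without restricting to the symmetric subspace or a fundamental domain. The only minor difference is the choice of truncation: the paper uses the inf-convolution $\phi^{(R)}(x)=\inf_{y}\bigl(R|x-y|+\phi(y)\bigr)$ to produce globally Lipschitz $\lambda$-convex approximants increasing to $w_{m}$, whereas you use a piecewise-linear extension outside a ball as in Lemma \ref{lem:approx with cont lambda conv}; both furnish the monotone sequence of continuous truncations needed for the lower bound in Assumption 1.
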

\begin{proof}
This is proved essentially as in the proof Theorem \ref{thm:w}. In
fact, the proof is even simpler by the assumtion of global semi-convexity.
One can then approximate each $w_{m}$ on all of $\R^{D}$ with a
sequence $w_{m}^{(R)}$ of quasi-convex functions (with the same $\lambda)$
increasing to $w_{m}$ and such that $|w_{R}(x)|=o(|x|^{2})$ as $|x|\rightarrow\infty,$
for in fixed $R.$ Indeed, by the quasi-convexity assumption it is
enough to treat the case when $w_{m}(=\phi)$ is convex. In that case
we can, for example, take $\phi^{(R)}:=\inf_{y\in\R^{Dm}}R|x-y|+\phi(y)).$
It is then shown, precisely as in the proof of Theorem \ref{thm:w},
that the corresponding $N-$partical energy $E^{(N)}(x_{1},...x_{N})$
satisfies the assumptions in Section \ref{sub:The-assumptions-on}. 
\end{proof}
In particular the previous theorem applies to the non-smooth weakly
singular attractive pair interactions 
\begin{equation}
W(x,y):=w(|x-y|)=|x-y|^{1+\alpha},\,\,\,0\leq\alpha<1.\label{eq:attractive pair int}
\end{equation}
In the purely stochastic setting such pair interactions appear, for
example, in the study of granular media and swarming models (see \cite{c-d-a-k-s}
and references therein). The purely determinstic setting for such
pair interaction has been studied in depth in \cite{c-d-a-k-s}. For
example it is shown that, given any initial $\mu_{0}$ with compact
support the corresponding EVI gradient flow $\mu_{t}$ of $E$ is
a Dirac mass, i.e. $\mu_{t}=\delta_{x(t)},$ when $t\geq T$ for some
finite tume $T.$ But as far as we know there are no results concerning
such weakly singular attractive interactions in the purely stochastic
setting $(\beta_{N}<\infty)$, apart from the case when $\alpha=1$
and $D=1$ where the theory of scalar conservation laws applies \cite{b-t}
(the case of smooth potentials with polynomial growth is studied in
\cite{g=0000E4,mal}).
\begin{rem}
For the attractive pair interactions of the form \ref{eq:attractive pair int}
it is shown in \cite{c-d-a-k-s} that the corresponding minimal subdifferental
is given by 
\[
(\partial^{0}\mathcal{W})(\mu):=\int_{\{x\neq y\}}(\nabla_{x}W)(x,y)\mu(y),
\]
 which is a point-wise well-defined function. Setting $(\nabla w)(0):=0$
(which, by symmetry, coincides with the minimal subdifferential of
$w(x)$ at $0)$ this means that $(\partial^{0}\mathcal{W})(\mu)=\mu*\nabla w.$
\end{rem}
By Theorem \ref{thm:evi is subdiff gradient flow}, the macroscopic
limit $\mu_{t}$ is a weak solution of the corresponding equation
\[
\frac{d}{dt}\mu_{t}=\frac{1}{\beta}\Delta\mu_{t}+\nabla\cdot(\mu_{t}(\partial^{0}\mathcal{W})(\mu_{t}))
\]
When $\beta=\infty$ it is well-known that a weak solution is not
unique, in general. For example, when $D=1$ and $\alpha=0$ the map
$\mu\mapsto\mu*\nabla w$ gives a correspondence between weak solutions
$\mu_{t}$ as above and weak solution $u_{t}$ of the scalar conservation
law (Burger's equation)

\[
\partial_{t}u_{t}=\frac{1}{\beta}\partial_{x}^{2}u_{t}+\partial_{x}(u_{t}^{2})/2
\]
satisfying $|u|\leq1,\,\partial_{x}u\geq0.$ By the general theory
of scalar conservation laws a weak solution $u_{t}$ is uniquely determined
if it is an entropy solution. Moreover, $u_{t}$ is an entropy solution
iff $\mu_{t}$ is the EVI gradient flow of $E(\mu)$ (as follows,
for example, from the stability of entropy solutions and EVI gradient
flows when $\beta\rightarrow\infty;$ see \cite{b-c-f} for futher
results).

\section{\label{sub:Comparison-with-stability}Relations to stability properties
of gradient flows in the deterministic setting}

Let us start by stressing that, in the completely deterministic setting
(i.e. Setting $1$ in the introduction of the paper) the theory of
Wasserstein gradient flows has certainly been used before to establish
mean field limits going beyond the classical setting when $F$ is
locally Lipchitz continuous. Indeed, as explained in \cite{c-d-a-k-s},
as soon as the energy functional $E(\mu)$ is lsc and has the property
that 
\begin{itemize}
\item $E(\mu)$ is $\lambda-$convex along generalized geodesics in the
$L^{2}-$Wasserstein space $\mathcal{P}_{2}(\R^{D})$ 
\item The gradient flow of $E(\mu)$ preserves particles, i.e. it preserves
discrete measures of the form \ref{eq:empirical measure} 
\end{itemize}
then the existence of a mean field limit follows directly from the
$\lambda-$contractivity of the gradient flow of $E$ on $\mathcal{P}_{2}(\R^{D}).$
In particular, as shown in \cite{c-d-a-k-s}, these assumption are
satisfied $w$ is convex on all of $\R^{D}$ and even (and in particular
locally Lip continuous), for example when $W(x)=|x|^{\text{1+\ensuremath{\alpha}}},$
for $\alpha\geq0.$ Such a potential is always attractive. However,
even if the pair interaction potential $w$ is Lipschitz continuous,
i.e. $F$ is bounded, the property of preservation of particles fails
when $F$ is repulsive. Moreover, in the strongly singular case the
gradient flow of $E$ never preserves particles (since $E(\mu)=\infty,$
when $\mu$ is discrete). In our approch this problem is bypassed
by instead working with the $N-$particle mean energy functional $\mathcal{E}^{(N)}$
on the Wasserstein space $\mathcal{P}_{2}(\R^{D}).$ On the other
hand in the purely deterministic setting, there is also an alternative
approach using the following stability result in \cite[Thm 11.1.2]{a-g-s}
(see also \cite{a-g-z}) for gradient flows on $\mathcal{P}_{2}(X)$
for $X$ the Euclidean space $\R^{D}$ (or more generally a Hilbert
space).
\begin{thm}
\label{thm:-stab of gradient flo}\cite{a-g-s} (Stability) Suppose
that $\Phi_{N}$ and $\Phi$ are functionals on $\mathcal{P}_{2}(X)$
which are $\lambda-$convex along generalizes geodesics and such that$\Phi_{N}$
strongly $\Gamma-$converges to $\Phi$ (see Definition \ref{def:Gamma conv})
and $\Phi_{N}$ is uniformly coercive. Let $\mu_{N}(t)$ and $\mu(t)$
be the corresponding EVI gradient flows in $\mathcal{P}_{2}(X)$ emanating
from $\mu_{N,0}$ and $\mu_{0},$ respectively. If $\mu_{N,0}\rightarrow\mu$
in $\mathcal{P}_{2}(X),$ as $N\rightarrow\infty$ and $\limsup_{N\rightarrow\infty}\Phi_{N}(\mu_{N,0})<\infty,$
then $\mu_{N,0}(t)\rightarrow\mu(t)$ in $\mathcal{P}_{2}(X)$ for
any positive time $t.$\end{thm}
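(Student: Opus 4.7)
The plan is to mimic the strategy already used in the proof of Theorem \ref{thm:general}, but now in the simpler abstract setting where the ambient metric space is itself $\mathcal{P}_{2}(X)$. Fix a time horizon $T > 0$ and a small time step $\tau > 0$, and denote by $\mu_{N}^{\tau}(t)$ and $\mu^{\tau}(t)$ the discrete minimizing movements associated respectively to $\Phi_{N}$ and $\Phi$, with initial data $\mu_{N,0}$ and $\mu_{0}$. By uniform $\lambda$-convexity along generalized geodesics and uniform coercivity, the functionals at each step of the scheme,
\[
J_{N,k}(\cdot) := \frac{1}{2\tau}d_{2}(\cdot, \mu_{N,k-1}^{\tau})^{2} + \Phi_{N}(\cdot),
\]
are (for $\tau$ small enough relative to $1/|\lambda|$) uniformly strictly convex along generalized geodesics and uniformly coercive, so each admits a unique minimizer and the scheme is well-defined for all $N$.

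The first main step is to fix $\tau$ and show $\mu_{N}^{\tau}(t) \to \mu^{\tau}(t)$ in $\mathcal{P}_{2}(X)$ as $N \to \infty$. By induction on $k$, this reduces to the following one-step claim: if $\nu_{N} \to \nu$ in $\mathcal{P}_{2}(X)$, then the minimizers of $J_{N}(\cdot) := \tfrac{1}{2\tau}d_{2}(\cdot, \nu_{N})^{2} + \Phi_{N}(\cdot)$ converge in $\mathcal{P}_{2}(X)$ to the unique minimizer of $J(\cdot) := \tfrac{1}{2\tau}d_{2}(\cdot, \nu)^{2} + \Phi(\cdot)$. This follows from the standard $\Gamma$-convergence / compactness principle: strong $\Gamma$-convergence of $\Phi_{N}$ to $\Phi$ in the sense of Definition \ref{def:Gamma conv} combines with the fact that $\eta \mapsto d_{2}(\eta, \nu_{N})^{2}$ converges continuously to $\eta \mapsto d_{2}(\eta, \nu)^{2}$ along sequences with uniformly bounded second moments, giving strong $\Gamma$-convergence of $J_{N}$ to $J$; uniform coercivity of $J_{N}$ together with the compactness of sublevel sets in $\mathcal{P}_{2}(X)$ then yields convergence of minimizers.

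The second main step is the uniform-in-$N$ discretization error. By \cite[Thm 4.0.4, Thm 4.0.7]{a-g-s}, $d_{2}(\mu_{N}(t), \mu_{N}^{\tau}(t)) \leq C_{N} \tau^{1/2}$ on $[0,T]$, where $C_{N}$ depends only on $\lambda$, $T$, the second moment of $\mu_{N,0}$, and $\Phi_{N}(\mu_{N,0})$. The assumption $\mu_{N,0} \to \mu_{0}$ in $\mathcal{P}_{2}(X)$ gives a uniform second-moment bound, while $\limsup_{N} \Phi_{N}(\mu_{N,0}) < \infty$ together with uniform coercivity gives a uniform bound on $C_{N}$. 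The same estimate in the limit yields $d_{2}(\mu(t), \mu^{\tau}(t)) \leq C\tau^{1/2}$ (the finiteness of $\Phi(\mu_{0})$ follows from the $\limsup$ part of strong $\Gamma$-convergence applied to the recovery sequence for $\mu_{0}$, together with the assumed $\limsup_{N} \Phi_{N}(\mu_{N,0}) < \infty$ and the lower-bound inequality).

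To conclude, apply the triangle inequality
\[
d_{2}(\mu_{N}(t), \mu(t)) \leq d_{2}(\mu_{N}(t), \mu_{N}^{\tau}(t)) + d_{2}(\mu_{N}^{\tau}(t), \mu^{\tau}(t)) + d_{2}(\mu^{\tau}(t), \mu(t)),
\]
let $N \to \infty$ (killing the middle term by the first step), and then let $\tau \to 0$ (killing the outer terms by the second step). The main obstacle, and the point requiring real care, is the uniformity in $N$ of the constant in the $C\tau^{1/2}$ error bound: it is precisely the combination of uniform $\lambda$-convexity, uniform coercivity, and the energy bound $\limsup_{N}\Phi_{N}(\mu_{N,0}) < \infty$ that prevents this constant from degenerating, and consequently any weakening of these hypotheses would break the argument.
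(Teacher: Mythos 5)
The paper does not prove this theorem; it is quoted from \cite[Thm 11.1.2]{a-g-s} (as the text preceding it makes explicit). So the relevant comparison is to the paper's own proof of Theorem \ref{thm:general}, which the authors explicitly describe as an analogue of this stability result, and your argument is indeed a faithful re-run of that strategy in the simpler ambient space $\mathcal{P}_2(X)$: discretize in time, use (strong) $\Gamma$-convergence and equi-coercivity to pass to the limit $N\to\infty$ in the one-step minimization, control the $\tau$-discretization error uniformly in $N$ via the $C\tau^{1/2}$ estimate for EVI gradient flows, and finish with the triangle inequality. Your observation that $\Phi(\mu_0)<\infty$ is automatic from the liminf part of strong $\Gamma$-convergence together with the assumed energy bound is correct and matches the way these hypotheses interact in the reference.

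One small point worth making explicit: in the one-step claim, equi-coercivity and the liminf inequality only give narrow convergence of a subsequence of minimizers to a minimizer of $J$. To upgrade this to $W_2$-convergence (which you need to propagate the induction) one should note that, since the recovery sequence also achieves equality of the limiting energies, the two pieces $\tfrac{1}{2\tau}d_2(\cdot,\nu_N)^2$ and $\Phi_N$ must converge separately, and convergence of $d_2(\eta_N,\nu_N)$ together with narrow convergence of $\eta_N$ and $\nu_N\to\nu$ in $W_2$ forces convergence of the second moments, hence $W_2$-convergence. This is standard in the AGS framework, and the paper's own proof of Theorem \ref{thm:general} passes over it with the same brevity, so it is not a gap so much as an implicit appeal to a standard fact; but it is the place where ``compactness of sublevel sets'' by itself does not quite give the claimed mode of convergence.
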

\begin{rem}
\label{rem:stability of gradient flows}If $\Phi(\mu)<\infty$ one
can dispense with the assumption $\limsup_{N\rightarrow\infty}\Phi_{N}(\mu_{N,0})<\infty$
in the previous theorem, using the definition of $\Gamma-$convergence
together with the contractivity property, as in the Step 2 in the
proof of Theorem \ref{thm:general}. Moreover, when $D=1$ it is enough
to assume that $\Phi_{N}$ and $\Phi$ are $\lambda-$convex along
ordinary geodesics in $\mathcal{P}_{2}(\R),$ using that the latter
space is Euclidean.
\end{rem}
Now set $D=1$ and consider a sequence of symmetric, i.e. $S_{N}-$invariant,
lsc functions $E^{(N)}$ on the $N-$particle space $\R^{N}$ which
are $\lambda-$convex on the fundamental domain $\{x_{1}<x_{2}<....<x_{N}\}$
of the $S_{N}-$action, as in Section \ref{sec:Singular-pair-interactions}.
Using the embedding $\delta_{N}$ of $\R^{N}/S_{N}$ in $\mathcal{P}_{2}(\R)$
we can identify $E^{(N)}$ with a sequence of functionals on $\mathcal{P}_{2}(\R)$
set to be equal to $\infty$ on the complement of $\delta_{N}(\R^{N}/S_{N}).$
As observed in Section \ref{sub:An-alternative-approach} $E^{(N)}$
is convex on the quotient space $\R^{N}/S^{N}$ and hence, since $\delta_{N}$
is an isometry and its image is geodesically closed, the corresponding
function on $\mathcal{P}_{2}(\R)$ is also $\lambda-$convex. In order
to apply the previous theorem to pair interactions one can then invoke
the following result from \cite[Prop 2.8, Remark 2.19]{se} (similar
results are used to establish large deviations of the corresponding
Gibbs measures; see \cite{a-g-z} and references therein):
\begin{prop}
\label{prop:gamma conv}\cite{se}Let $E^{(N)}$ be the $N-$point
interaction energy on $(\R^{D})^{N}$ associated to a translationally
invariant radial pair interaction $W(x,y)(:=w(|x-y|)$ such that $W\in L^{1}(\R^{2D})$
and $w(x)$ is monotone in the radial direction and positive when
$|x|\leq1.$ Then the corresponding functionals on $\mathcal{P}_{2}(\R^{D})$
$\Gamma-$converge to $E_{W}(\mu)$.
\end{prop}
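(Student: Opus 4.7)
The hypotheses in fact force $w\ge 0$ throughout $\R^{D}$. Monotonicity combined with positivity on $\{|x|\le 1\}$ rules out $w$ increasing (which would give $w\ge w(0)>0$ everywhere, incompatible with $w\in L^{1}$), so $w$ is monotone decreasing, and any decreasing $w\in L^{1}(\R^{D})$ that is positive near $0$ must satisfy $w(r)\to 0$ as $r\to\infty$, whence $w\ge 0$ throughout. Consequently the truncations $w_{k}:=\min(w,k)$ form a monotone sequence of bounded, nonnegative, lower semicontinuous functions increasing pointwise to $w$; this is the structural fact I will exploit for both halves of the $\Gamma$-convergence. Throughout, I view $\mathcal{E}^{(N)}$ as a functional on $\mathcal{P}_{2}(\R^{D})$, equal to $\tfrac{1}{2N(N-1)}\sum_{i\neq j}w(x_{i}-x_{j})$ when $\mu=\tfrac{1}{N}\sum_{i}\delta_{x_{i}}$ and $+\infty$ otherwise.

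For the recovery sequence at a given $\mu\in\mathcal{P}_{2}(\R^{D})$ with $E_{W}(\mu)<\infty$ (the case $E_{W}(\mu)=\infty$ is trivial), I would draw an i.i.d.\ sample $X_{1},X_{2},\ldots$ from $\mu$ and set $\mu_{N}:=\tfrac{1}{N}\sum_{i=1}^{N}\delta_{X_{i}}$. Since $\mu\in\mathcal{P}_{2}(\R^{D})$, Varadarajan's theorem together with the SLLN applied to second moments gives $\mu_{N}\to\mu$ in the $L^{2}$-Wasserstein topology almost surely. Moreover $2\mathcal{E}^{(N)}(\mu_{N})$ is a $U$-statistic of degree two with symmetric kernel $w(x-y)\in L^{1}(\mu\otimes\mu)$ (the $L^{1}$ condition being precisely $E_{W}(\mu)<\infty$), so Hoeffding's strong law for $U$-statistics yields $\mathcal{E}^{(N)}(\mu_{N})\to E_{W}(\mu)$ almost surely. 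Picking any sample path $\omega$ in the intersection of these two full-probability events produces the required deterministic recovery sequence.

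For the $\liminf$ inequality, assume $\mu_{N}\to\mu$ in $\mathcal{P}_{2}(\R^{D})$ with $\liminf_{N}\mathcal{E}^{(N)}(\mu_{N})<\infty$; along such a subsequence $\mu_{N}$ necessarily has the form $\tfrac{1}{N}\sum_{i=1}^{N}\delta_{x_{i}^{N}}$. For any fixed $k>0$, using $w\ge w_{k}$ pointwise,
\begin{equation*}
2\mathcal{E}^{(N)}(\mu_{N})\ge\frac{1}{N(N-1)}\sum_{i\ne j}w_{k}(x_{i}^{N}-x_{j}^{N})=\frac{N}{N-1}\left[\int_{\R^{2D}}w_{k}(x-y)\,d(\mu_{N}\otimes\mu_{N})-\frac{w_{k}(0)}{N}\right].
\end{equation*}
Since $(x,y)\mapsto w_{k}(x-y)$ is bounded, nonnegative, and lower semicontinuous on $\R^{2D}$, and $\mu_{N}\otimes\mu_{N}\to\mu\otimes\mu$ weakly, the Portmanteau theorem gives $\liminf_{N}\int w_{k}\,d(\mu_{N}\otimes\mu_{N})\ge\int w_{k}\,d(\mu\otimes\mu)$, while the diagonal correction $w_{k}(0)/N$ is $O(1/N)$ at fixed $k$. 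Hence $\liminf_{N}2\mathcal{E}^{(N)}(\mu_{N})\ge\int w_{k}\,d(\mu\otimes\mu)$, and sending $k\to\infty$ with monotone convergence (using $w_{k}\uparrow w\ge 0$) delivers $\liminf_{N}\mathcal{E}^{(N)}(\mu_{N})\ge E_{W}(\mu)$.

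The main obstacle, such as it is, lies in justifying the Portmanteau step: this is precisely where the global sign condition $w\ge 0$ matters, since the lower semicontinuous inequality under weak convergence requires an integrand bounded below. The monotone-plus-positive-near-zero hypothesis is exactly what supplies this sign, simultaneously taming the singularity at the origin (via the scalar truncation $w_{k}$) and the tails (via the $L^{1}$ consequence $w\to 0$ at infinity). Without monotonicity, one would have to split $w=w^{+}-w^{-}$ and exploit the uniform second-moment bound along the sequence to control the negative part, and the argument becomes considerably less transparent; but under the stated hypotheses no such refinement is needed.
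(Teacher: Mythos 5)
The paper itself does not supply a proof of this proposition; it is imported as a citation to \cite[Prop 2.8, Remark 2.19]{se}. Your argument is therefore a self-contained alternative, and as such it is both elegant and, under the reading you adopt, correct: the probabilistic recovery sequence via Hoeffding's strong law for $U$-statistics is a clean substitute for the deterministic discretization schemes more commonly used in this context, and the $\liminf$ half via truncation, Portmanteau for the bounded nonnegative lsc kernel $w_{k}$, and monotone convergence is tight. A nice byproduct (worth noticing, since it is what Theorem \ref{thm:-stab of gradient flo} actually requires) is that your $\liminf$ argument only uses weak convergence of $\mu_{N}\otimes\mu_{N}$, so you in fact prove the strong $\Gamma$-convergence of Definition \ref{def:Gamma conv}.

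The one place deserving scrutiny is your reduction to $w\geq 0$. The hypothesis ``$W\in L^{1}(\R^{2D})$'' cannot be meant literally, since a translationally invariant function on $\R^{2D}$ is never integrable unless it vanishes; you silently read it as $w\in L^{1}(\R^{D})$, and that reading, combined with radial monotonicity and positivity near $0$, does force $w$ decreasing with limit $0$ at infinity, hence $w\geq 0$ globally, exactly as you argue. But the other natural repair of the typo is $w\in L_{loc}^{1}(\R^{D})$, which is precisely the hypothesis under which the cited result in \cite{se} is proved and which is needed to cover the logarithmic interaction $w(x)=-\log|x|$ (this is the reason the clause ``positive when $|x|\leq 1$'' is there at all: it allows $w$ to become negative at large distances). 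Under that reading your global sign conclusion fails and the Portmanteau/monotone-convergence step in your $\liminf$ argument breaks down, since $w_{k}$ is no longer bounded below. You do flag this in your final paragraph, and the repair you sketch there (split $w=w^{+}-w^{-}$ and use the uniform second-moment bound along the sequence, noting that for $\lambda$-convex $w$ the negative part grows strictly subquadratically) is the right move and would restore the result; but as written your proof covers only the stronger $L^{1}$ interpretation. Given the ambiguity in the statement this is a defensible choice, but be aware that the prototypical examples the paper has in mind (logarithmic and, more generally, Riesz kernels $|x|^{-s}$ with $0<s<D$, which are in $L_{loc}^{1}$ but not $L^{1}$) are not covered by the proof as it stands.
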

Asuming that the initial measure $\mu_{0}$ has the property that
$E(\mu_{0})<\infty$ Theorem \ref{thm:-stab of gradient flo} combined
with the previous proposition thus implies the existence of a mean
field limit $\mu_{t}$ of the corresponding determinstic systems,
as in Section \ref{thm:deterministic strong}. But the advantage of
our general convergence results in Theorem \ref{thm:general}, when
applied to the purely deterministic setting, is that it allows $E(\mu_{0})=\infty$
and moreover there is no need to establish the $\Gamma-$convergence
of the interaction energies $E^{(N)}$. Indeed, the convergence assumption
1 in Section \ref{sub:The-assumptions-on} for the corresponding mean
energy functional $\mathcal{E}^{(N)}$ on $\mathcal{P}_{2}(\R^{D})$
is almost trivially satisfied for any pair interaction (or more generally,
for any $m-$point interaction).

\end{document}